\def\bx{\mathbf{x}}
\def\bX{\mathbf{X}}
\def\bW{\mathbf{W}}
\def\bZ{\mathbf{Z}}
\def\bz{\mathbf{z}}
\def\bv{\mathbf{v}}
\def\bV{\mathbf{V}}
\def\bu{\mathbf{u}}
\def\bU{\mathbf{U}}
\def\bh{\mathbf{h}}
\def\bshat{\widehat{\mathbf{s}}}
\def\bShat{\widehat{\mathbf{S}}}
\def\bSbar{\bar{\mathbf{S}}}
\def\Ydag{Y^{\dagger}}
\def\Ytld{\widetilde{Y}}
\def\Yhatdag{\widehat{Y}^{\dagger}}
\def\bvphi{\bmath{\varphi}}
\def\bgam{\bmath{\gamma}}
\def\bgamhat{\widehat{\bmath{\gamma}}}
\def\bgambar{\bar{\bmath{\gamma}}}
\def\balph{\bmath{\alpha}}
\def\balphhat{\widehat{\bmath{\alpha}}}
\def\balphbar{\bar{\bmath{\alpha}}}
\def\bbeta{\bmath{\beta}}
\def\bbetahat{\widehat{\bmath{\beta}}}
\def\bbetabar{\bar{\bmath{\beta}}}
\def\bGam{\bmath{\Gamma}}
\def\beps{\bmath{\epsilon}}
\def\Delthat{\widehat{\Delta}}
\def\Deltbar{\bar{\Delta}}
\def\Deltabarstr{\bar{\Delta}^*}
\def\muhat{\widehat{\mu}}
\def\mubar{\bar{\mu}}
\def\thetastr{\theta^*}
\def\mubarstr{\bar{\mu}^*}
\def\bzero{\bmath 0}
\def\bone{\bmath 1}
\def\bI{\mathbf{I}}
\def\pihat{\widehat{\pi}}
\def\Ystrhat{\widehat{Y}^*}
\def\Wscrtld{\widetilde{\mathcal{W}}}
\def\Wscrhat{\widehat{\mathcal{W}}}
\def\Gscr{\mathcal{G}}
\def\Lscr{\mathscr{L}}
\def\Uscr{\mathscr{U}}
\def\Dscr{\mathscr{D}}
\def\Sscr{\mathcal{S}}
\def\Mscr{\mathcal{M}}
\def\Pscr{\mathcal{P}}
\def\Lscralt{\mathcal{L}}
\def\ntot{N}
\def\ntotinv{N^{-1}}
\def\nhlfNinv{\frac{\sqrt{n}}{N}}
\def\trans{^{\sf \tiny T}}
\def\indep{\perp\!\!\!\perp}
\def\ddpi{\frac{\partial}{\partial\pi}}
\def\ddu{\frac{\partial}{\partial u}}
\def\ddbgam{\frac{\partial}{\partial\bgam}}
\def\ddbgamT{\frac{\partial}{\partial\bgam\trans}}
\def\ddbalph{\frac{\partial}{\partial\balph\trans_1}}
\def\ddbbeta{\frac{\partial}{\partial\bbeta\trans_1}}
\def\ddtheta{\frac{\partial}{\partial\theta}}
\def\dduK{\dot{K}}
\def\atld{\tilde{a}}
\def\btld{\tilde{b}}
\def\fstr{f^*}
\def\bvthetbar{\bar{\bmath{\vartheta}}}
\def\bvthethat{\widehat{\bmath{\vartheta}}}
\def\Upi{U_{\pi}}
\def\bZpi{\bZ_{\pi}}
\def\bZpii{\bZ_{\pi,i}}
\def\bZpihati{\bZ_{\pihat,i}}
\def\bZpihatstri{\bZ_{\pihat^*,i}}
\def\gdot{\dot{g}}
\def\bzpi{\bz_{\pi}}
\def\SSDR{\mbox{SS}_{\sf \scriptscriptstyle DR}}
\def\CCDR{\mbox{CC}_{\sf \scriptscriptstyle AIPW}}
\def\CCNaive{\mbox{CC}_{\sf \scriptscriptstyle Naive}}
\def\SSPrePost{\mbox{SS}_{\sf \scriptscriptstyle AIPW}}
\def\E{\mathbb{E}}
\def\P{\mathbb{P}}
\def\omegahat{\widehat{\omega}}
\def\omegabar{\bar{\omega}}
\def\half{1/2}
\def\Upi{U_{\pi}}
\def\bZpi{\bZ_{\pi}}
\def\bZpii{\bZ_{\pi,i}}
\def\bZpihati{\bZ_{\pihat,i}}
\def\bZpihatstri{\bZ_{\pihat^*,i}}
\def\gdot{\dot{g}}
\def\bzpi{\bz_{\pi}}
\def\DelthatAIPW{\Delthat_{\sf \scriptscriptstyle AIPW}}
\def\muhatoneAIPW{\muhat_{\sf \scriptscriptstyle 1,AIPW}}
\def\muhatzerAIPW{\muhat_{\sf \scriptscriptstyle 0,AIPW}}
\def\muhatkAIPW{\muhat_{\sf \scriptscriptstyle k,AIPW}}
\def\sqrtn{n^{1/2}}
\def\Tscr{\mathcal{T}}
\definecolor{darkred}{RGB}{150,50,50}
\def\bSig\mathbf{\Sigma}
\newcommand{\norm}[1]{\left\lVert#1\right\rVert}
\newcommand{\abs}[1]{\left|#1\right|}
\DeclareMathOperator*{\argmin}{argmin}
\newenvironment{eq} 
{
\align
}
{
\endalign
}
\newenvironment{eq*} 
{
\csname align*\endcsname
}
{
\csname endalign*\endcsname
}
\title[Semi-Supervised Estimation of Average Treatment Effects]{Robust and Efficient Semi-Supervised Estimation of Average Treatment Effects with Application to Electronic Health Records Data}
\author{David Cheng$^{1}$, 
Ashwin N. Ananthakrishnan$^{2}$, and Tianxi Cai$^{3,*}$\email{tcai@hsph.harvard.edu} \\
$^{1}$VA Boston Healthcare System, Boston, Massachusetts, U.S.A.\\
$^{2}$Division of Gastroenterology, Massachusetts General Hospital, Boston, Massachusetts, U.S.A.\\
$^{3}$Department of Biostatistics, Harvard T.H. Chan School of Public Health, Boston, Massachusetts, U.S.A.}
\begin{document}

\date{}

\label{firstpage}

\begin{abstract}
We consider the problem of estimating the average treatment effect (ATE) in a semi-supervised learning setting, where a 
very small proportion of the entire set of observations are labeled with the true outcome but features predictive of the outcome are available
among all observations. This problem arises, for example, when estimating treatment effects in electronic health records 
(EHR) data because gold-standard outcomes are often not directly observable from the records but are observed for a limited
number of patients through small-scale manual chart review. We develop an imputation-based
approach for estimating the ATE that is robust to misspecification of the imputation model. This effectively allows information
from the predictive features to be safely leveraged to improve efficiency in estimating the ATE.  The estimator is additionally
doubly-robust in that it is consistent under correct specification of either an initial propensity score model or a
baseline outcome model.  It is also locally semiparametric efficient under an ideal semi-supervised model where the distribution
of the unlabeled data is known. Simulations exhibit the efficiency and robustness of the proposed method compared to existing 
approaches in finite samples.We illustrate the method by comparing rates of treatment response to two biologic agents for 
treatment inflammatory bowel disease using EHR data from Partner's Healthcare.
\end{abstract}

\begin{keywords}
causal inference, double-robustness, missing data, semiparametric efficiency, semi-supervised learning, surrogate outcomes.
\end{keywords}

\maketitle

\section{Introduction}
\label{s:intro}

There is often interest in estimating the average treatment effect (ATE) of a binary treatment $T$ on 
an outcome $Y$ that is observed among a very limited subset of observations but can be approximated by surrogate variables $\bW$ 
available among all observations.
As a motivating example, we consider comparing the outcomes of two treatments in electronic health record (EHR) data,
where $Y$ can be a clinical outcome of interest not directly encoded in patients' medical records and $\bW$ are post-treatment 
features that can be automatically extracted from the charts, such as the receipt of billing or procedure codes and mentions
of selected terms from physicians' notes.  Though different study designs are possible, we assume for phenotyping purposes $Y$ is collected for a small \emph{random} subset of all patients, which constitute the labeled data $\Lscr$,
through manual chart review. It may not be possible to comprehensively label the data because chart review is a costly and
time-consuming process.

A common strategy for analyzing such data is to use the surrogates $\bW$ to approximate $Y$ by some imputed outcome 
$\Ydag = g(\bW)$.  The imputation can be based on heuristic rules determined by domain-specific knowledge 
(e.g. presence of certain set of diagnostic codes) or an imputation model that predicts $Y$ given $\bW$ trained using the labeled data $\Lscr$, which includes observations of both $\bW$ and $Y$ \citep{ananthakrishnan2016comparative}.  
However, for complex outcomes it may be difficult to obtain an accurate imputation $\Ydag$ because $\bW$ have limited 
predictive power or the functional form of the imputation model may be difficult to correctly specify.  
When the imputation quality is inadequate, it is often unclear whether using the inaccurate imputations $\Ydag$ in subseqent analyses can lead to biased 
estimates of the ATE on the \emph{actual} outcome $Y$.

Previously, related methods have been developed in the surrogate outcomes literature to leverage both the labeled
data $\Lscr$ and the unlabeled data $\Uscr$, which includes all variables in $\Lscr$ except for $Y$, for estimating 
regression parameters 
\citep{pepe1992inference} 
and solutions 
to estimating equations \citep{chen2003information}. But these methods tend to assume a univariate 
surrogate with low dimensional baseline covariates $\bX$. 
Alternatively the problem can be viewed as estimating the mean of a longitudinal outcome subject to monotone missingness, where
$\bW$ is an outcome at an initial time point and $Y$ is the final outcome. In this context semiparametric efficiency 
theory has been developed to identify efficient estimators under various semiparametric models \citep{robins1994estimation,rotnitzky1998semiparametric},
which has lead to development of doubly-robust (DR) augmented IPW (AIPW) estimators  in different problems 
\citep{davidian2005semiparametric,williamson2012doubly,zhang2016causal}. In particular, \cite{davidian2005semiparametric}
develops such an efficient estimator for estimating the effect of a randomized treatment where the final outcome is subject to missingness
but intermediate outcomes are collected for all patients. With some minor modification this estimator could also leverage 
the unlabled data $\Uscr$ to aid estimation of the ATE, and we consider it as a reference method in the simulations.  
However, this method was developed under a data model commonly employed in missing data problems with independent and identically 
(iid) observations and a probability of missingness that is bounded away from $0$.
The data model we consider here differs in that:
(1) we assume that the number of labeled observations $n$ is fixed by design, and (2) 
we make a \emph{semi-supervised} assumption that the proportion of labeled data $\nu_n$ tends to $0$ as $n\to\infty$, to 
reflect the large size of $\Uscr$ relative to that of $\Lscr$. These features complicate conventional applications of
semiparametric efficiency theory, and the efficiency and finite sample performance of existing estimators may not be clear.

In this paper, we propose a semi-supervised (SS) estimator for the ATE 
based on an imputation followed by inverse probability weighting (IPW).  It is doubly-robust and locally semiparametric efficient under an ideal model approximating the data distribution of the semi-supervised setup.
The imputations are constructed such that the resulting estimator is robust to misspecification of the imputation model, enabling $\Uscr$ to be safely used to improve the estimation.  We further employ a double-index propensity score \citep{cheng2017estimating} for additional robustness and possible small-sample efficiency gains.  The remainder of the paper is organized as follows.  We formalize the SS estimation problem in Sections 2.1-2.2 and develop the estimator in Sections 2.3-2.5.  A perturbation resampling procedure is proposed in Section 2.6 for inference. Section \ref{s:sims} presents simulations showing the robustness and efficiency of the proposed estimator, and Section \ref{s:data} applies the method to compare two biologic therapies for treating inflammatory bowel disease (IBD) in EMR data from Partner's Healthcare.  We conclude with some remarks in Section 5.  Proofs are deferred to the Web Appendices.

\section{Method}
\label{s:method}

\subsection{Notations and Semi-Supervised Framework}
Let $Y$ denote an outcome,
$T\in \{ 0,1\}$ a binary treatment, $\bX$ a $p_x$-dimensional vector of pre-treatment baseline covariates, $\bW$ a $p_w$-dimensional vector of post-treatment surrogate variables that are potentially predictive of $Y$, and $\bV = (\bW\trans,\bX\trans)\trans$.  
For example, in the EHR context, $\bX$ may include demographics and prior comorbidities that may confound 
naive associations between $T$ and $Y$, while $\bW$ may be counts of post-treatment codes or terms.
The labeled data consists of $n$ iid observations $\Lscr = \{ (Y_i,T_i,\bV_i\trans)\trans:i=1,\ldots,n\}$, while the unlabeled data consists of $N-n$ iid observations without $Y$, $\Uscr = \{ (T_i,\bV_i\trans)\trans : i=n+1,\ldots,\ntot\}$, with $\Uscr \indep \Lscr$ and $n$ and $N$ fixed.   
We assume that $n$ observations were randomly selected for labeling so that $Y$ is essentially missing completely at random (MCAR) from observations in $\Uscr$.  In the SS setting $N \gg n$ so that $\nu_n = n/N \to 0$ as $n\to \infty$.
The entire observed data $\Dscr = \Lscr \cup \Uscr$ could thus be framed as $\Dscr = \{ (L_i,\Ytld_i,T_i,\bV_i\trans): i=1,\ldots,N\}$, where $L$ is an indicator of labeling such that $\Ytld = Y$ when $L=1$ and $\Ytld$ is an arbitrary value otherwise with $L \indep (\Ytld, T, \bV)$. But unlike traditional missing data frameworks, $L$ is constrained such that $\sum_{i=1}^N L_i = n$,
where $n$ and $N$ satisfy $\nu_n = n/N \to 0$.

\subsection{Target Parameter and Leveraging Unlabeled Data}
Let $Y^{(1)}$ and $Y^{(0)}$ denote the counterfactual outcomes had an individual received treatment or control.  Based on the observed data $\Dscr$ we want to estimate the ATE:
\begin{eq}
\Delta = \E\{Y^{(1)}\} - \E\{Y^{(0)}\} = \mu_1 - \mu_0.
\end{eq}
We require the following standard assumptions to identify $\Delta$:
\begin{eq}
&Y = TY^{(1)} +(1-T)Y^{(0)} \label{e:assump}\\
&(Y^{(1)},Y^{(0)})\indep T \mid \bX \label{e:assumpb}\\ 
& \pi(\bx) \in [\epsilon_{\pi}, 1-\epsilon_{\pi}] \text{ for some $\epsilon_{\pi}>0$ when } f(\bx)>0, \label{e:assumpc}
\end{eq}
where $\pi(\bx)= \P(T=1\mid\bX=\bx) $ is the PS and $f(\bx)$ is the joint density for the covariates.  In the typical setting where the outcome is fully observed, the ATE can be identified through the g-formula \citep{robins1986new} for a point exposure:
\begin{eq}
\label{e:gform}
\Delta =\E\{ \mu_1(\bX) - \mu_0(\bX)\}=\E\left\{ \frac{I(T=1)Y}{\pi(\bX)}-\frac{I(T=0)Y}{1-\pi(\bX)}\right\},
\end{eq}
where $\mu_k(\bx)=\E(Y\mid\bX=\bx,T=k)$ for $k=0,1$.  This suggests the usual estimators based on averaging the outcome weighted by IPW weights or averaging estimated outcome models.  When the outcome is missing but surrogates $\bW$ are observed, the more general g-formula for longitudinal studies can be applied to show that:
\begin{eq*}
\Delta &= \E\left[ \E\{\xi_1(\bV)\mid\bX,T=1\}-\E\{\xi_0(\bV)\mid\bX,T=0\}\right] \\
&= \E\left\{ \frac{I(T=1)\xi_1(\bV)}{\pi(\bX)} - \frac{I(T=0)\xi_0(\bV)}{1-\pi(\bX)}\right\},
\end{eq*}
where $\xi_k(\bv)=\E(\Ytld \mid \bV=\bv, T=k, L=1)=\E(Y\mid \bV=\bv,T=k)$ for $k=0,1$. This decomposition suggests that, if a consistent estimator for $\xi_k(\bv)$ is available, then $\Delta$ can be estimated by first imputing $Y$ through the $\xi_k(\bv)$ estimator and then applying IPW or outcome regression methods to the imputed outcome. 
However, obtaining a consistent estimator for $\xi_k(\bv)$ may not be feasible without strong modeling assumptions due to the potential high dimensionality of $\bv$ and complexity of the functional form of $\xi_k(\bv)$. In the following we show that even with incorrectly specified models for $\xi_k(\bv)$, it is still possible to leverage $\Uscr$ in estimating $\Delta$ without introducing bias from their misspecification.

\subsection{Robust Imputations}
Let $\Upi = I(T=1)/\pi(\bX)-I(T=0)/\{ 1-\pi(\bX)\}$ denote a utility covariate given $\pi(\bx)$, assumed momentarily to be known.  We now argue that inclusion of such a covariate in an imputation model yields imputations that are robust to misspecification of the model.  Suppose we postulate a parametric \emph{working model}, possibly misspecified, for $\xi_k(\bv)$:
\begin{eq}
\label{e:impmod}
\xi_T(\bV) = g_{\xi}(\gamma_0+\bgam_1\trans \bh(\bV)+\gamma_2T+\gamma_3\Upi) = g_{\xi}(\bgam\trans\bZpi),
\end{eq}
where $\bgam=(\gamma_0,\bgam_1\trans,\gamma_2,\gamma_3)\trans$, $\bZpi = (1,\bV\trans,T,\Upi)\trans$, $g_{\xi}(\cdot)$ is a specified link function, and $\bh(\cdot)$ is a vector of fixed basis expansion functions that can incorporate nonlinear effects. 
Interactions between $\bh(\bV)$ and $T$ could also be included in the specification without difficulty.
We estimate $\bgam$ as $\bgamhat$, the solution to a penalized estimating equation with ridge regularization:
\begin{eq}
\label{e:gamest}
n^{-1}\sum_{i=1}^n \bZpii \left\{ Y_i - g_{\xi}(\bgam\trans\bZpii)\right\} - \lambda_n \bgam_{\circ} = \bzero,
\end{eq}
where $\bgam_{\circ} = (0,\bgam_{\{-1\}}\trans)^{\trans}$ with $\bgam_{\{-1\}}$ is the subvector of $\bgam$ that excludes the intercept $\gamma_0$ and $\lambda_n = o(n^{-1/2})$ is a tuning parameter, which allows $\bgamhat$ to have a $n^{-1/2}$ convergence rate.
In particular, this class of estimators includes ridge estimators for GLMs based on exponential families with canonical link functions. Ridge regularization is suggested here to improve finite sample performance but is not essential for asymptotic properties discussed below. Other regularization penalties besides the ridge penalty or no regularization can also be used, as long as $\bgamhat$ maintains a $n^{-1/2}$ convergence rate. Using that $Y$ is MCAR, standard arguments  (Web Appendix A) can be used to show that $\bgamhat\overset{p}{\to}\bgambar$ where $\bgambar$ solves:
\begin{eq*}
\E\left[ \bZ_{\pi} \left\{Y- g_{\xi}(\bgam\trans\bZpi)\right\}\right]=\bzero,
\end{eq*}
with the expectation being taken \emph{over the entire population} and not restricted only to the $L=1$ subpopulation.  In particular, for $\Ydag = g_{\xi}(\bgambar\trans\bZpi)$,
since $\bZ_{\pi}$ includes $U_{\pi}$, this implies:
\begin{eq}
\label{e:robimp}
\E\left\{ \frac{I(T=1)Y}{\pi(\bX)} - \frac{I(T=0)Y}{1-\pi(\bX)}\right\}=\E\left\{ \frac{I(T=1)\Ydag}{\pi(\bX)} - \frac{I(T=0)\Ydag}{1-\pi(\bX)}\right\}.
\end{eq}
This suggests that a standard IPW estimator based on the imputed outcome $\Ydag$ 
has the \emph{same asymptotic limit} as if the true outcomes were used, even if imputation model \eqref{e:impmod} is misspecified.  Consequently the surrogate data from $\Uscr$ could be safely used to impute the outcome using estimates of these imputed outcomes $\Ydag$.

Augmentation covariates similar to $U_{\pi}$ have been previously used, for example, to construct locally semiparametric efficient doubly-robust estimators 
\citep{van2006targeted,bang2005doubly}, construct improved locally efficient doubly-robust estimators 
with enhanced efficiency under misspecification \citep{rotnitzky2012improved}, and ensure valid doubly-robust inference \citep{benkeser2017doubly}.
Our work follows these previous works in leveraging augmentation covariates to achieve desired statistical properties.
In our case, $U_{\pi}$ is used to ensure robustness of the final estimator to misspecification of the working imputation model.

In practice $\pi(\bx)$ also needs to be estimated, which is typically done through parametric modeling such as logistic regression.  When $\pi(\bx)$ is estimated by an estimator $\pihat(\bx)$, the IPW estimator discussed above will be consistent for $\Delta$ if $\pihat(\bx)$ is consistent for $\pi(\bx)$ but otherwise could be biased if the parametric model for $\pi(\bx)$ is misspecified.  
Similar arguments can be used to construct alternative imputations $\Ydag$ that could be substituted for $Y$ in an outcome regression estimator and maintain robustness to misspecification of the imputation model.  However, such an approach would then require correct specification of the outcome regression model $\mu_k(\bx)$ to be consistent for $\Delta$.  

\subsection{Doubly-Robust IPW Based on the Double-Index PS}
We adopt an IPW estimator for the final estimator but weighting with a double-index PS (DiPS), which is an alternative
method to estimating the PS $\pi(\bx)$ that calibrates an initial parametric PS estimate to allow prognostic covariates in $\bX$
to inform the PS estimation \citep{cheng2017estimating}.  
When $Y$ is fully observed, this was previously 
shown to yield a doubly-robust (DR) IPW estimator, which is consistent for $\Delta$ if a working model for either
$\pi(\bx)$ or $\mu_k(\bx)$ is correctly specified.  Adoption of DiPS in the proposed SS estimator also 
yields a DR estimator, obviating the need for alternative sets of robust imputations $\Ydag$ to accommodate estimation based on correctly
specifying models for $\pi(\bx)$ or $\mu_k(\bx)$.
Plugging in the DiPS is a natural approach to achieve double-robustness in this context,
as the rationale for robust imputations from above assumes an IPW estimator is used for the final estimate following imputation.  
Although augmented IPW estimators (AIPW) \citep{robins1994estimation} can also be used to achieve double-robustness, it
is not immediately clear whether the resulting estimator would be robust to misspecification of imputation model.

In the following we present the details in estimating DiPS and then define the final IPW estimator.  We 
postulate the following working parametric models for $\pi(\bx)$ and $\mu_k(\bx)$:
\begin{eq}
&\pi(\bX) = g_{\pi}(\alpha_0 + \balph_1\trans\bX) = \pi(\bX;\balph) \label{e:pswrk}\\
&\mu_T(\bX) = g_{\mu}(\beta_0 +\bbeta_1\trans\bX + \beta_2 T) = \mu_T(\bX;\bbeta), \label{e:orwrk}
\end{eq}
where $\balph = (\alpha_0,\balph_1\trans)\trans$, $\bbeta = (\beta_0,\bbeta_1\trans,\beta_2)\trans$, and $g_{\pi}(\cdot)$ and $g_{\mu}(\cdot)$ are specified link functions.  
Interactions between $\bX$ and $T$ in the baseline outcome model $\mu_k(\bx;\bbeta)$ can also be accommodated by
estimating the DiPS separately by treatment groups \citep{cheng2017estimating}.
To allow for a large set of covariates, $\balph$ and $\bbeta$ are estimated using regularized maximum likelihood estimators, 
for example, as in 
$\balphhat=\argmin_{\balph} \left\{ -\ntotinv\sum_{i=1}^{\ntot} \ell_{\pi}(\balph;\bX_i,T_i) + p_{\lambda_{\ntot}}(\balph_1)\right\}$ 
and $\bbetahat=\argmin_{\bbeta} \left\{ -n^{-1}\sum_{i=1}^n \ell_{\mu}(\bbeta;Y_i,\bX_i,T_i)+p_{\lambda_n}(\bbeta_{\{-1\}})\right\}$,
where $\ell_{\pi}(\balph;\bX_i,T_i)$ and $\ell_{\mu}(\bbeta;Y_i,\bX_i,T_i)$ are the log-likelihood contributions for the $i$-th 
observation, and $p_{\lambda_{\ntot}}(\cdot)$ and $p_{\lambda_n}(\cdot)$ are penalty functions chosen such that the oracle 
properties \citep{fan2001variable} hold, such as the adaptive LASSO (ALASSO) \citep{zou2006adaptive}.
We then calibrate the initial PS estimate $\pi(\bx;\balphhat)$ by the kernel smoothing estimator:
\begin{eq*}
\pihat(\bx;\balphhat_1,\bbetahat_1) =\frac{\ntotinv\sum_{j=1}^{\ntot}K_h(\bShat_j-\bshat)I(T_j=1)}{\ntotinv\sum_{j=1}^{\ntot}K_h(\bShat_j-\bshat)},
\end{eq*}
where $\bShat_j= (\balphhat_1,\bbetahat_1)\trans \bX_j$ and $\bshat=(\balphhat_1,\bbetahat_1)\trans \bx$ are bivariate scores that represent the covariate in the directions of $\balphhat_1$ and $\bbetahat_1$, $K_h(\cdot)=h^{-2}K(\cdot/h)$, with $K(\cdot)$ being a bivariate $q$-th order kernel with $q>2$ and $h=O(N^{-\alpha})$ being a bandwidth for which a suitable choice of $\alpha$ is discussed below. 
The intution is that smoothing $T$ over $\balphhat_1\trans\bX$ and $\bbetahat_1\trans\bX$ calibrates $\pi(\bx;\balphhat)$
closer to the true $\pi(\bx)$, incorporating variation in covariates in $\bX$ that are associated with $Y$ but may
have been selected out in the initial PS model due to weak association with $T$.

Finally, we define the proposed $\SSDR$ estimator as $\Delthat = \muhat_1 - \muhat_0$ where:
\begin{eq}
&\muhat_1 = \left\{ \sum_{i=1}^{\ntot}\frac{I(T_i=1)}{\pihat(\bX_i;\balphhat_1,\bbetahat_1)}\right\}^{-1}\left\{ \sum_{i=1}^{\ntot}\frac{I(T_i=1)\Yhatdag_i}{\pihat(\bX_i;\balphhat_1,\bbetahat_1)}\right\} \\
\text{ and } &\muhat_0 = \left\{ \sum_{i=1}^{\ntot}\frac{I(T_i=0)}{1-\pihat(\bX_i;\balphhat_1,\bbetahat_1)}\right\}^{-1}\left\{ \sum_{i=1}^{\ntot}\frac{I(T_i=0)\Yhatdag_i}{1-\pihat(\bX_i;\balphhat_1,\bbetahat_1)}\right\},
\end{eq}
with $\Yhatdag_i = g_{\xi}(\bgamhat\trans\bZpihati)$ 
.  The final estimator $\Delthat$ substitutes the robust imputations $\Yhatdag$, based on the PS estimated by the DiPS, 
into a standardized IPW estimator weighted also with the DiPS.  We next consider the large-sample properties of $\Delthat$, 
starting with its DR property and its influence function expansion.

\subsection{Consistency and Asymptotic Linearity of $\Delthat$}
We show in Web Appendix B that, under the causal identification assumptions \eqref{e:assump}-\eqref{e:assumpc} and mild regularity conditions,
given that  $h=O(N^{-\alpha})$ with $\alpha \in (\frac{1-\beta}{2q} , \frac{\beta}{2} \wedge \frac{1}{4})$ and $n=O(N^{1-\beta})$ with $\beta \in (\frac{1}{q+1}, 1)$,
$\Delthat$ is doubly-robust so that:
\begin{eq}
\Delthat - \Delta = O_p(n^{-1/2})
\end{eq}
when either the PS model $\pi(\bx;\balph)$ in \eqref{e:pswrk} or the baseline outcome model $\mu_k(\bx;\bbeta)$ in \eqref{e:orwrk}
is correctly specified. 
As this result does not depend on the specification of the model for $\xi_k(\bv)$, this result also verifies that $\Delthat$ is robust
to misspecification of the imputation model.
To characterize the large sample variability of $\Delthat$, we next show it is asymptotically linear and identify its influence function.  First define $\Deltbar = \mubar_1 - \mubar_0$, where:
\begin{eq*}
\mubar_1 = \E\left\{ \frac{I(T=1)\Ydag}{\pi(\bX;\balphbar_1,\bbetabar_1)}\right\} \text{ and } \mubar_0 = \E\left\{ \frac{I(T=0)\Ydag}{1-\pi(\bX;\balphbar_1,\bbetabar_1)}\right\},
\end{eq*}
with $\pi(\bx;\balphbar_1,\bbetabar_1)=\P(T=1\mid \balphbar_1\trans\bX=\balphbar_1\trans\bx,\bbetabar_1\trans\bX=\bbetabar_1\trans\bx)$, $\balphbar_1$ and $\bbetabar_1$ as the probability limits of $\balphhat_1$ and $\bbetahat_1$ regardless of model adequacy, and $\Ydag$ being defined as in \eqref{e:robimp} except that $\pi(\bx)$ is replaced by $\pi(\bx;\balphbar_1,\bbetabar_1)$. We show in Web Appendix B that, under the same requirements for $\alpha$ and $\beta$, 
the influence function for $\Delthat$ is given by the summand of $n^{1/2}(\Delthat_k - \Deltbar_k) = \Wscrhat_1 - \Wscrhat_0$, where $\Wscrhat_k = n^{1/2}(\muhat_k - \mubar_k)$ for $k=0,1$ and:
\begin{eq}
\label{e:IF}
\Wscrhat_k &= n^{-1/2}\sum_{i=1}^n (\bv_{\bbeta_1,k}\trans+\bu_{pa,\pi,k}\trans)\bvphi_{\bbeta_1,i} + \bu_{\bgam,k}\trans\bvphi_{\bgam,i} + o_p(1),
\end{eq}
with $\bv_{\bbeta_1,k}=\bzero$ when the PS model $\pi(\bx;\balph)$ is correctly specified and $\bu_{pa,\pi,k}=\bzero$ when either the PS model $\pi(\bx;\balph)$ or imputation model $g_{\xi}(\bgam\trans\bzpi)$ without the utility covariate is correctly specified.  Here $\bvphi_{\bbeta_1,i}$ and $\bvphi_{\bgam,i}$ are influence functions for $\bbetahat_1$ and $\bgamhat$ such that $n^{1/2}(\bbetahat_1 - \bbetabar_1) = n^{-1/2}\sum_{i=1}^n \bvphi_{\bbeta_1,i} + o_p(1)$ and $n^{1/2}(\bgamhat - \bgambar) = n^{-1/2}\sum_{i=1}^n \bvphi_{\bgam,i} + o_p(1)$.  Accordingly, the first term in \eqref{e:IF} represents the contribution from estimating $\bbeta_1$ in the baseline outcome model $\mu_k(\bx;\bbeta)$ for the double-index PS appearing in the IPW weight and the utility covariate.  The remaining term represents the contribution from estimating $\bgam$ in the imputation model $g_{\xi}(\bgam\trans\bzpi)$.  The influence function does not include terms associated with the variability in estimating $\balph$ in the parametric PS or for smoothing in the double-index PS, as such contributions to the expansion are of higher order when $N\gg n$ in the SS setting.  

When the PS model $\pi(\bx;\balph)$ is correctly specified, 
the influence function \eqref{e:IF} simplifies to: 
\begin{eq*}
n^{1/2}(\Delthat - \Delta) = n^{-1/2}\sum_{i=1}^n (\bu_{\bgam,1}-\bu_{\bgam,0})\trans\bvphi_{\bgam,i} + o_p(1),
\end{eq*}
where:
\begin{eq*}
\bvphi_{\bgam,i}=\left[\E\left\{\bZpii\bZpii\gdot_{\xi}(\bgambar\trans\bZpii)\right\}\right]^{-1}\bZpii\left\{Y_i - g_{\xi}(\bgambar\trans\bZpii)\right\},
\end{eq*}
for $\gdot_{\xi}(u) = \ddu g_{\xi}(u)\Big|_{u=u}$.
The centering of $Y_i$ around a model approximation of $\xi_T(\bV)$ suggests $\Delthat$ achieves efficiency gain over complete-case (CC) estimators, which neglect surrogates $\bW$.

\subsection{Efficiency Considerations} \label{ss:efficiencyconsiderations}
More formally, semiparametric efficiency theory establishes efficiency bounds for regular estimators of parameters of 
interest under specified semiparametric models \citep{bickel1998efficient}. Much of existing work involving semiparametric efficiency focus on
data with iid observations.  However, in our setup the observations in $\Dscr$ are not exactly iid as $L$ are constrained
such that $\sum_{i=1}^N L_i = n$ for a fixed $n$ and $\nu_n=n/N \to 0$ as $n\to \infty$. Moreover, because the distribution for 
the full data varies with $N$, as $\nu_n \to 0$, the very notion of a regular estimator is not clearly defined in this 
context. These issues complicate conventional applications of efficiency theory in the SS setting.

Let $\Deltabarstr= \E\{ \mu_1(\bX)-\mu_0(\bX)\}$ be strictly a functional of the observed data distribution not depending on identification assumptions \eqref{e:assump}-\eqref{e:assumpc}, as in $\Delta$.
Instead of directly calculating what would be the efficient influence function for $\Deltabarstr$ under the model for the 
full data, we take an alternative approach in which we calculate the efficient influence function for $\Deltabarstr$ under 
an \emph{ideal SS model} $\Mscr_{SS}$ for the labeled data $\Lscr$, which allows the conditional distribution of $Y\mid \bV,T$ 
to be unrestricted but assumes the distribution of $(\bV\trans,T)\trans$ is 
completely known. $\Mscr_{SS}$ approximates the SS setting, where the size of the unlabeled data $\Uscr$  
is much larger than that of $\Lscr$. As data in $\Lscr$ itself is iid with a fixed distribution, restricting the focus to $\Lscr$ avoids the
complications described above. We lastly calculate the asymptotic variance of $\Delthat$ through its 
associated influence function and compare against the efficiency bound under $\Mscr_{SS}$ to better understand the efficiency
of $\Delthat$ in context.

Following this approach, we show in Web Appendix C that the semiparametric efficiency bound for $\Deltabarstr$ under
$\Mscr_{SS}$, with respect to a class of regular parametric submodels subject to mild regularity conditions, is $\E(\varphi_{eff}^2)$, where:
\begin{eq}
    \varphi_{eff} = U_{\pi}\{ Y - \xi_T(\bV)\}
\end{eq}
is the efficient influence function.  This efficiency bound is lower than or equal to the efficiency bound in the fully 
nonparametric model where the distribution of $(Y,\bV\trans,T)\trans$ is unknown.  Furthermore, 
we show in Web Appendix C
that $\Delthat$ indeed achieves the SS efficiency bound when both the PS and imputation model, $\pi(\bx;\balph)$ and 
$g_{\xi}(\bgam\trans\bzpi)$, are correctly specified so that $\Delthat$ is locally semiparametric efficient.  Even though 
the distribution of $(\bV\trans,T)\trans$ is actually not known in our data setup, the bound under the ideal model $\Mscr_{SS}$ 
can still be achieved because $N \gg n$.  The correct specification of $\mu_k(\bx;\bbeta)$ is not required for attaining the efficiency bound, as the bound does not involve $\mu_k(\bx)$, but its specification is still important for double-robustness in case $\pi(\bx;\balph)$ is misspecified.  

The local efficiency of $\Delthat$ may prompt efficiency gains over non-efficient CC and SS estimators, though other estimators
can also achieve this efficiency bound.  For example, we show in Web Appendix C that a modification of the AIPW estimator 
from \cite{davidian2005semiparametric} also achieves the bound in our data setting, when its underlying working PS and
imputation models are correctly specified.  However, its effiency may still differ with that of $\Delthat$ under misspecification
of the working imputation model, as we consider in the simulations below.  Beyond these asymptotic properties under ideal 
conditions, we find through the simulations that $\Delthat$ can still achieve substantial efficiency gains over other estimators 
in finite samples under misspecified working models.  Another issue related to misspecificiation is that $\mu_k(\bx;\bbeta)$ 
and $g(\bgam\trans\bzpi)$ may not be compatible with one another if non-linear models such as logistic 
regression are used for either working model.  Using flexible basis expansion functions in $g_{\xi}(\bgam\trans\bzpi)$ to 
more closely approximate $\xi_k(\bv)$, as suggested in \ref{e:impmod}, can mitigate such incompatibility.  We find that
exact compatibility of the model is not a crucial requirement for $\Delthat$ to exhibit good performance in practice when
the working models are sufficiently flexible.
We next consider inference about $\Delta$ through a perturbation resampling procedure.

\subsection{Perturbation Resampling}
Although the asymptotic variance of $\Delthat$ can be obtained from the influence function in \eqref{e:IF}, a direct estimate is difficult because it would require estimating complicated functionals of the data distribution.  We instead propose a simple perturbation resampling procedure for inference based on \cite{jin2001simple}.  Let $\Gscr = \{ G_i : i=1,\ldots,N\}$ be non-negative iid random variables with unit mean and variance that are independent of the observed data $\Dscr$.  We first obtained perturbed estimators of $\balph$  and $\bbeta$:
\begin{eq*}
&\balphhat^* = \argmin_{\balph} \left\{ -n^{-1}\sum_{i=1}^n \ell_{\pi}(\balph;\bX_i,T_i)G_i + p_{\lambda_N}^*(\balph_1)\right\} \\
&\bbetahat^* = \argmin_{\bbeta} \left\{ -n^{-1}\sum_{i=1}^n \ell_{\mu}(\bbeta;Y_i,\bX_i,T_i)G_i + p_{\lambda_n}^*(\bbeta_{\{-1\}})\right\},
\end{eq*}
where $p_{\lambda_N}^*(\cdot)$ and $p_{\lambda_n}^*(\cdot)$ are the corresponding penalties based weights estimated by the same perturbation procedure if data-adaptive weights are used, as in adaptive LASSO.  This leads to the perturbed DiPS estimate:
\begin{eq*}
\pihat_1^*(\bx;\balphhat_1^*,\bbetahat_1^*)= \frac{\sum_{j=1}^N K_h(\bShat_j^* - \bshat^*)I(T_j=1) G_j}{\sum_{j=1}^N K_h(\bShat_j^* - \bshat^*)G_j},
\end{eq*}
where $\bShat_j = (\balphhat_1^*,\bbetahat_1^*)\trans\bX_j$ and $\bshat^* = (\balphhat_1^*,\bbetahat_1^*)\trans\bx$ are the perturbed bivariate scores. We then obtain the perturbed estimator $\bgamhat^*$ as the solution to:
\begin{eq*}
n^{-1}\sum_{i=1}^n \bZpihatstri \left\{ Y_i - g_{\xi}(\bgam\trans\bZpihatstri)\right\}G_i + \lambda_n\bgam_{\{-1\}} = \bzero,
\end{eq*}
where $\pihat^*$ specifies that the imputations use utility covariates that plug in $\pihat^*(\bx;\balphhat_1^*,\bbetahat_1^*)$. 
Finally, we calculate the perturbed $\SSDR$ estimator as $\Delthat^* = \muhat^*_1-\muhat^*_0$, where:
\begin{eq*}
&\muhat^*_1 = \left\{ \sum_{i=1}^{\ntot}\frac{I(T_i=1)G_i}{\pihat(\bX_i;\balphhat_1^*,\bbetahat_1^*)}\right\}^{-1}\left\{ \sum_{i=1}^{\ntot}\frac{I(T_i=1)\Ystrhat_i G_i}{\pihat(\bX_i;\balphhat_1^*,\bbetahat_1^*)}\right\} \\
\text{and }&\muhat^*_0 = \left\{ \sum_{i=1}^{\ntot}\frac{I(T_i=0)G_i}{1-\pihat(\bX_i;\balphhat_1^*,\bbetahat_1^*)}\right\}^{-1}\left\{ \sum_{i=0}^{\ntot}\frac{I(T_i=0)\Ystrhat_i G_i}{1-\pihat(\bX_i;\balphhat_1^*,\bbetahat_1^*)}\right\}, 
\end{eq*}
with $\Ystrhat_i = g_{\xi}(\bgamhat^{*\sf \tiny T}\bZpihatstri)$.  
It can be shown using arguments from \cite{jin2001simple} that the asymptotic distribution of $n^{1/2}(\Delthat - \Deltbar)$ coincides with that of $n^{1/2}(\Delthat^* - \Delthat)\mid \Dscr$.  
In this scheme, estimation of the initial PS model $\balphhat^*$, the DiPS $\pihat^*(\bx;\balphhat_1^*,\bbetahat_1^*)$, and the final IPW estimators $\muhat_1^*$ 
and $\muhat_0^*$ does not technically need to be perturbed as they are estimated based on data from $\Uscr$.
Consequently their contributions to the asymptotic variance is of higher order when $N \gg n$. However, we found that not perturbing these steps can have some impact on the standard error estimation in finite samples if $N$ is not yet very large relative to $n$ and perturbed these steps by default. We approximate the standard error of $\Delthat$ based on the empirical standard deviation, or, as a robust alternative, the mean absolute deviation (MAD), of a large number of samples of $\Delthat^*$ and construct confidence intervals (CI) based on the empirical percentiles.
\section{Simulations}
\label{s:sims}

We assessed through simulations the finite samples bias, standard error (SE), root mean square error (RMSE), and relative efficiency (RE) of our proposed estimator ($\SSDR$) compared to alternative estimators.  In separate simulations we also examined the performance of the perturbation procedure for inference based on $\SSDR$.  For $\SSDR$, we specified $\bh(\cdot)$ in the imputation model in \eqref{e:impmod} as natural cubic splines with 6 knots specified at uniform quantiles.  
Ridge regression with the tuning parameter chosen by cross-validation on the deviance was used for regularization in \eqref{e:gamest}.
Although it is unclear whether cross-validation for choosing the tuning parameter satisfies with high probability the 
$\lambda_n = o(n^{-1/2})$ condition, we found that
it leads to good performance in finite samples and that the simulation results were not sensitive to the choice of the tuning parameter (Web Appendix D).
Adaptive LASSO with initial weights estimated by ridge regression and tuning parameter chosen by minimizing a modified BIC criteria \citep{minnier2011perturbation} 
was used for estimating $\balphhat$ and $\bbetahat$.  
A plug-in estimate was used for the bandwidth in the smoothing for the double-index PS \citep{cheng2017estimating}.  Prior to smoothing the components of $\bShat$ were standardized and transformed by a probability integral transform based on the normal cumulative distribution function to induce approximately uniformly distributed inputs, which can improve finite-sample performance \citep{wand1991transformations}.  As we focused on binary outcomes, we specified logistic link functions $g_{\xi}(u)=g_{\pi}(u)=g_{\mu}(u)=1/(1+e^{-u})$ for the working models in \eqref{e:impmod} and \eqref{e:orwrk}.  

For comparison, we considered the standard complete-case AIPW estimator \citep{lunceford2004stratification}
based on $\Lscr$ only ($\CCDR$). We also considered the estimator from \cite{davidian2005semiparametric}, which we adapted to our setting by replacing
instances of the treatment randomization probability with PS estimates $\pi(\bX_i;\balphhat)$ ($\SSPrePost$).  $\SSPrePost$ 
also leverages the surrogate data in $\Uscr$ alongside labeled data $\Lscr$ to facilitate estimation of the ATE and 
is locally efficient, as discussed in Section \ref{ss:efficiencyconsiderations}.  It is also DR in that 
it is consistent for $\Delta$ if either the PS model $\pi(\bx;\balph)$ or outcome regression model $\mu_k(\bx;\bbeta)$ is correctly specified.

To mimic the EHR data, we considered the case with $Y$ as binary and $\bW$ as count variables.  In all scenarios, data were generated according to $\bX\sim N\left\{\bzero,\sigma^2_x (1-\rho_x)\bI+\sigma^2_x\rho_x\right\}$, $T\mid\bX\sim Bernoulli\left\{\pi(\bX)\right\}$, $Y\mid\bX,T \sim Bernoulli\{ \mu_T(\bX)\}$, and $\bW=\lfloor\bGam (1,\bX\trans,T,Y)\trans + \beps\rfloor$, where 
$\beps \sim N\{\bzero, \sigma^2_w (1-\rho_w)\bI + \sigma^2_w\rho_w\}$ and $\lfloor \cdot \rfloor$ is the floor function.  
Initially we considered simulating data that roughly resembled the EHR data example in terms of model parameters but
were simple enough to be more broadly relevant. To this end we considered $p_x = 10$ baseline covariates and $p_w=5$ surrogates, with variances and correlations $\sigma^2_x = 1$, $\rho_x = .2$, $\sigma^2_w = 5$, $\rho_w = .2$, and
$\bGam_{5\times 13} = (\bzero_{5\times 1}, .1\bone_{5\times 5}, -.1\bone_{5\times 5}, .1\bone_{5\times 1},\mathbf{g})$.  These simulations were varied over different model specifications, predictive strength of surrogates $\bW$, and sample sizes.  The imputation model $\xi(\bv;\bgam)$ was misspecified throughout, 
and either the PS model $\pi(\bx;\balph)$ or baseline outcome model $\mu_k(\bx;\bbeta)$ were potentially misspecified
by setting the true $\pi(\bx)$ and $\mu_k(\bx)$ to be double-index models that allow for pairwise interactions:
\begin{eq*}
&(1) \textit{Both correct: } \mu_k(\bx) = g_{\mu}(\beta_0 + \bbeta_1\trans\bx + \beta_2 k), \quad \pi(\bx) = g_{\pi}(\alpha_0 + \balph_1\trans\bx) \\
&(2) \textit{Misspecified $\mu$: } \mu_k(\bx) = g_{\mu}\left\{\beta_0 + \bbeta_{1[1]}\trans\bx(\bbeta_{1[2]}\trans\bx+1) + \beta_2 k\right\}, \quad \pi(\bx) = g_{\pi}(\alpha_0 + \balph_1\trans\bx) \\
&(3) \textit{Misspecified $\pi$: } \mu_k(\bx) = g_{\mu}(\beta_0 + \bbeta_1\trans\bx + \beta_2 k), \quad \pi(\bx) = g_{\pi}\left\{\alpha_0 + \balph_{1[1]}\trans\bx (\balph_{1[2]}\trans\bx+1)\right\},
\end{eq*}
where $g_{\mu}(u)=g_{\pi}(u)=1/(1+e^{-u})$ and parameter values were:
\begin{eq*}
&\alpha_0 = -.3, \quad\balph_1 = .35 \bone_{1\times 10}, \quad\beta_0 = -.65, \quad\bbeta_1 = (1\bone_{1\times 3}, .5\bone_{1\times 3},-1.15,-1\bone_{1\times 3})\trans\\
&\balph_{1[1]} = .5(0,.35,0,.35,0,.35,0,.35,0,.35)\trans, \quad\balph_{1[2]} = (.35,0,.35,0,.35,0,.35,0,.35,0)\trans \\
&\bbeta_{1[1]}=.5(1,0,1,0,.5,0,-.5,0,-1,0)\trans, \quad \bbeta_{1[2]}=(0,.5,0,.5,0,.5,0,.5,0,.5)\trans.
\end{eq*}
We considered $(2.1,2.1,1,0,0)\trans$, $(5,5,2.5,0,0)\trans$, and $(15,10,5,2.5,0)\trans$ for $\mathbf{g}$ to model cases where $\bW$ has mild, moderate, and strong predictive strength, respectively. 
The outcome prevalence was approximately 45\% in all scenarios, and the treatment prevalence was also 45\%, except in the misspecified $\pi$ scenario where it was approximately 53\%.
Sample sizes were varied over $n=100,250,500$ and $N=500,1000,5000,10000$ in a factorial design to show separate effects of increasing $n$ and $N$.  
To more closely approximate the data in the EHR data application, in a separate simulation, we generated data under the ``both correct'' setting 
using coefficient estimates from fitting corresponding working models in the EHR data as the values set for $\balph,\bbeta,\bGam$. For these simulations
we only considered the size $n=100, N=1000$.
The results in each scenario are summarized from 1,000 simulated datasets. 

Table \ref{tab:bias} presents the bias, SE, and RMSE across misspecification scenarios with $\bW$ at moderate strength.  $\SSDR$, $\SSPrePost$, and $\CCDR$ exhibits low bias that diminishes to zero as sample size increased under all three scenarios, verifying their double-robustness.  
Decreases in bias for $\SSDR$ appears to be largely driven by increasing $n$.
Figure \ref{fig:sev} presents the RE under the correctly specified scenario varying the strength of association between $\bW$ and $Y$.
Generally, $\SSDR$ uniformly achieves the lowest RMSE for a given $n$.  Increasing $N$ while fixing $n$ improves the RMSE for $\SSDR$ and $\SSPrePost$,
but the improvements are limited by $n$, which drives the asymptotic variance in the SS regime as shown in the asymptotic analysis.
The benefit of additional unlabeled data varies with the strength of $\bW$.  The RMSE for $\SSDR$ does not improve much with
larger $N$ for a fixed $n$ when $\bW$ is weakly correlated with $Y$ but improves greatly when $\bW$ are strongly correlated.

\begin{table}[ht]
    \caption{Bias, SE, and RMSE of estimators under different model misspecification scenarios over 1,000 simulated datasets.  
}
\label{tab:bias}%
    \centering
    \scalebox{0.8}{
    \begin{tabular}{cccccccccccc}
        \toprule
        \multicolumn{3}{c}{}& \multicolumn{3}{c}{Both Correct}& \multicolumn{3}{c}{Misspecified $\mu$}& \multicolumn{3}{c}{Misspecified $\pi$}\\
        $n$ & $N$ & \textbf{Estimator} & \textbf{Bias} & \textbf{SE} & \textbf{RMSE} & \textbf{Bias} & \textbf{SE} & \textbf{RMSE} & \textbf{Bias} & \textbf{SE} & \textbf{RMSE} \\ 
      \hline
    100 & 1000 & $\CCDR$ & -0.001 & 0.140 & 0.140 & -0.003 & 0.176 & 0.176 & 0.002 & 0.088 & 0.088 \\ 
      100 & 1000 & $\SSPrePost$ & -0.006 & 0.099 & 0.099 & -0.009 & 0.110 & 0.110 & -0.003 & 0.058 & 0.058 \\ 
      100 & 1000 & $\SSDR$ & -0.007 & 0.055 & 0.056 & -0.016 & 0.061 & 0.063 & -0.008 & 0.046 & 0.047 \\ \hline
      100 & 10000 & $\CCDR$ & 0.005 & 0.127 & 0.127 & 0.008 & 0.172 & 0.172 & 0.001 & 0.083 & 0.083 \\ 
      100 & 10000 & $\SSPrePost$ & -0.003 & 0.076 & 0.076 & -0.003 & 0.113 & 0.113 & -0.005 & 0.053 & 0.053 \\ 
      100 & 10000 & $\SSDR$ & -0.005 & 0.049 & 0.049 & -0.014 & 0.056 & 0.058 & -0.009 & 0.040 & 0.042 \\ \hline
      500 & 1000 & $\CCDR$ & -0.002 & 0.063 & 0.063 & -0.000 & 0.083 & 0.083 & 0.001 & 0.034 & 0.034 \\ 
      500 & 1000 & $\SSPrePost$ & -0.003 & 0.052 & 0.052 & -0.001 & 0.072 & 0.072 & -0.000 & 0.029 & 0.029 \\ 
      500 & 1000 & $\SSDR$ & -0.003 & 0.034 & 0.034 & -0.002 & 0.044 & 0.044 & -0.003 & 0.027 & 0.028 \\ \hline
      500 & 10000 & $\CCDR$ & -0.002 & 0.059 & 0.059 & 0.002 & 0.081 & 0.081 & -0.000 & 0.035 & 0.035 \\ 
      500 & 10000 & $\SSPrePost$ & -0.002 & 0.041 & 0.041 & -0.001 & 0.048 & 0.048 & -0.001 & 0.023 & 0.023 \\ 
      500 & 10000 & $\SSDR$ & -0.003 & 0.026 & 0.026 & -0.002 & 0.033 & 0.033 & -0.004 & 0.020 & 0.020 \\ 
       \hline
    \end{tabular}
    }
\end{table}

\begin{figure}[h!]
    \caption{
        Root mean square error (RMSE) of $\CCDR$, $\SSPrePost$, and $\SSDR$ by strength of the surrogates $\bW$ (Mild, Moderate, Strong) over 
        different sample sizes.
    }
    \label{fig:sev}
    \begin{center}
    \includegraphics[scale=.22]{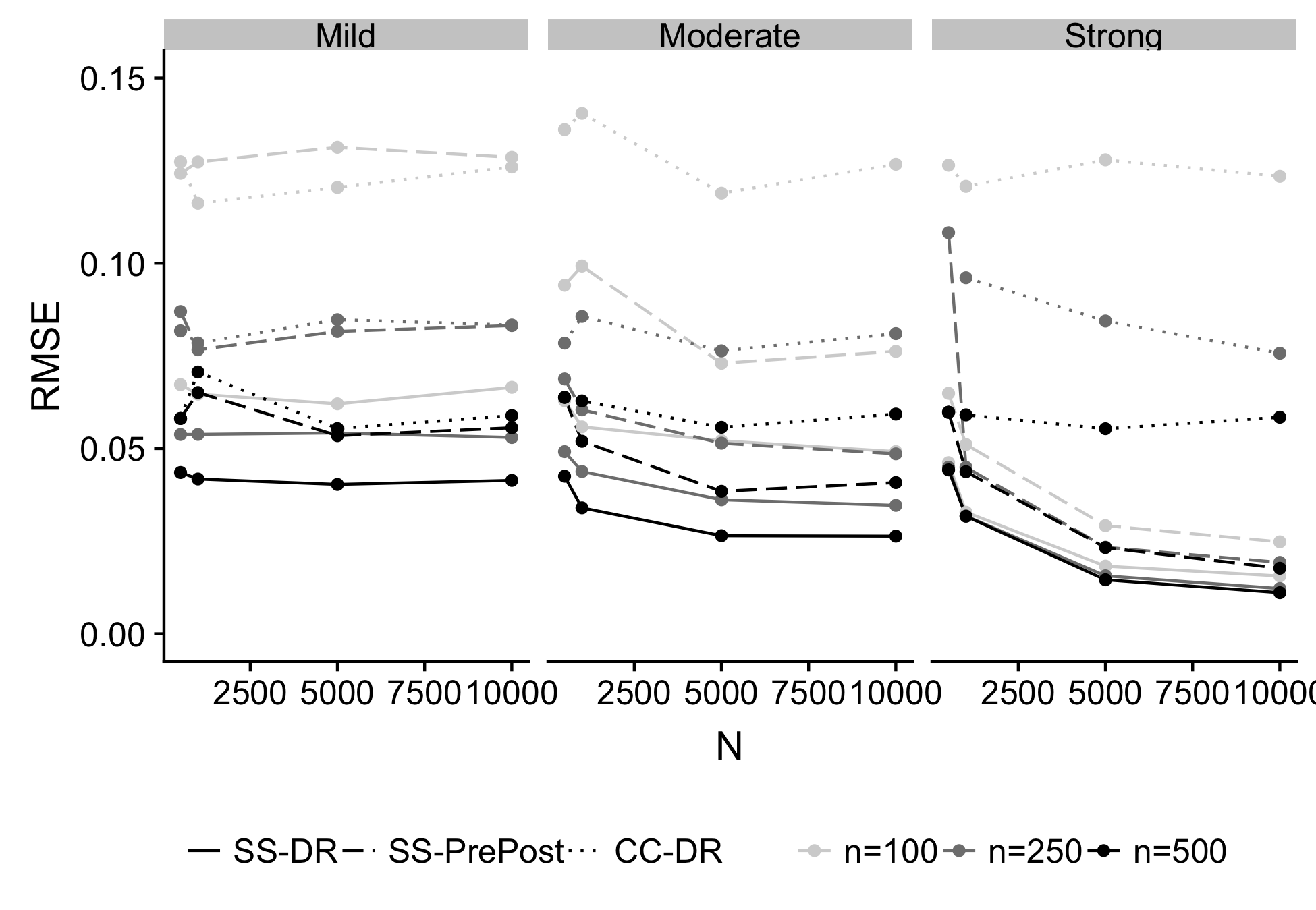}
    \end{center}
\end{figure}

Figure \ref{fig:re} presents the RE of various estimators relative to $\SSDR$ across misspecification scenarios with moderate $\bW$.  
$\SSDR$ is more efficient than both $\CCDR$ and $\SSPrePost$ across misspecification settings and sample sizes. It gains
over $\CCDR$ as it makes use of the unlabeled data $\Uscr$.  The gains over $\SSPrePost$ suggest that $\SSDR$ can be more efficient
under misspecification of the working imputation model.  In other simulations not presented we found that $\SSDR$ has
similar efficiency with $\SSPrePost$ under a correctly specified imputation model, as expected since both are locally efficient.  $\SSDR$ may also achieve efficiency gains
relative to other estimators involving PS weighting in finite samples when the true PS are more extreme. The calibrated
estimate $\pihat(\bx;\balphhat_1,\bbetahat_1)$ used in $\SSDR$ pulls estimates of the PS away from $0$ or $1$, which can 
lead to more stable final estimates when $\pihat(\bx;\balphhat_1,\bbetahat_1)$ is used in reweighting.  Lastly, $\SSDR$ 
may exhibit some efficiency gains over $\SSPrePost$ in finite samples from using regularization for estimating the nuisance 
parameters, whereas $\SSPrePost$ uses unregularized maximum likelihood estimators in our implementation.
In the data application scenario,
$\SSDR$ was significantly more efficient, having a RMSE of .03 compared to .15 and .12 for $\CCDR$ and $\SSPrePost$, suggesting
the strength of surrogates are strong in the EHR data.

\begin{figure}[h!]
\caption{RE of estimators, defined as the ratio of mean square errors (MSE) relative to $\SSDR$, by model misspecification scenarios for different sample sizes over 1,000 simulated datasets.  Higher values of RE denotes greater efficiency (lower MSE) relative to $\SSDR$.
}
\label{fig:re}
\begin{center}
\includegraphics[scale=.0635]{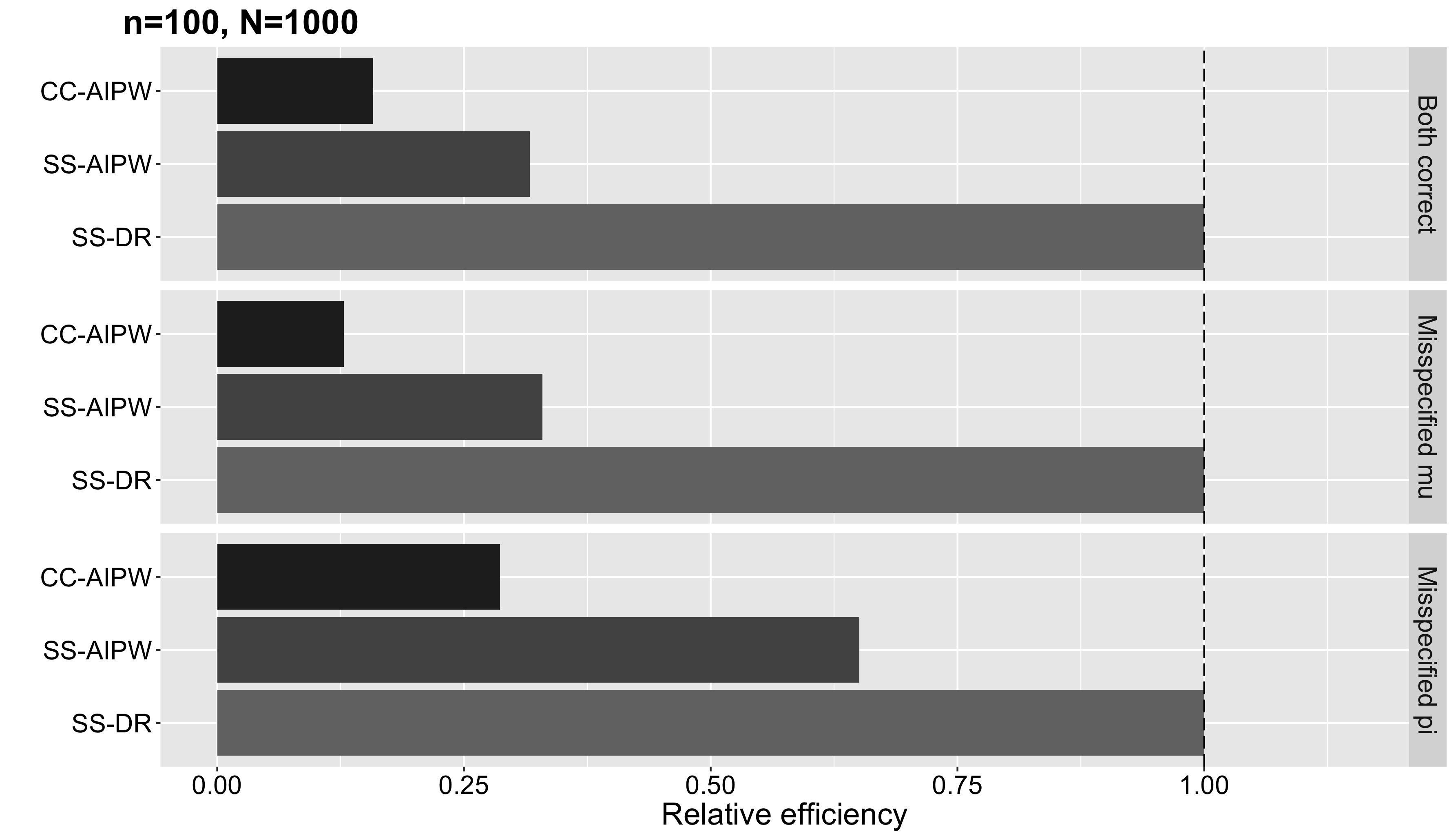} \includegraphics[scale=.0635]{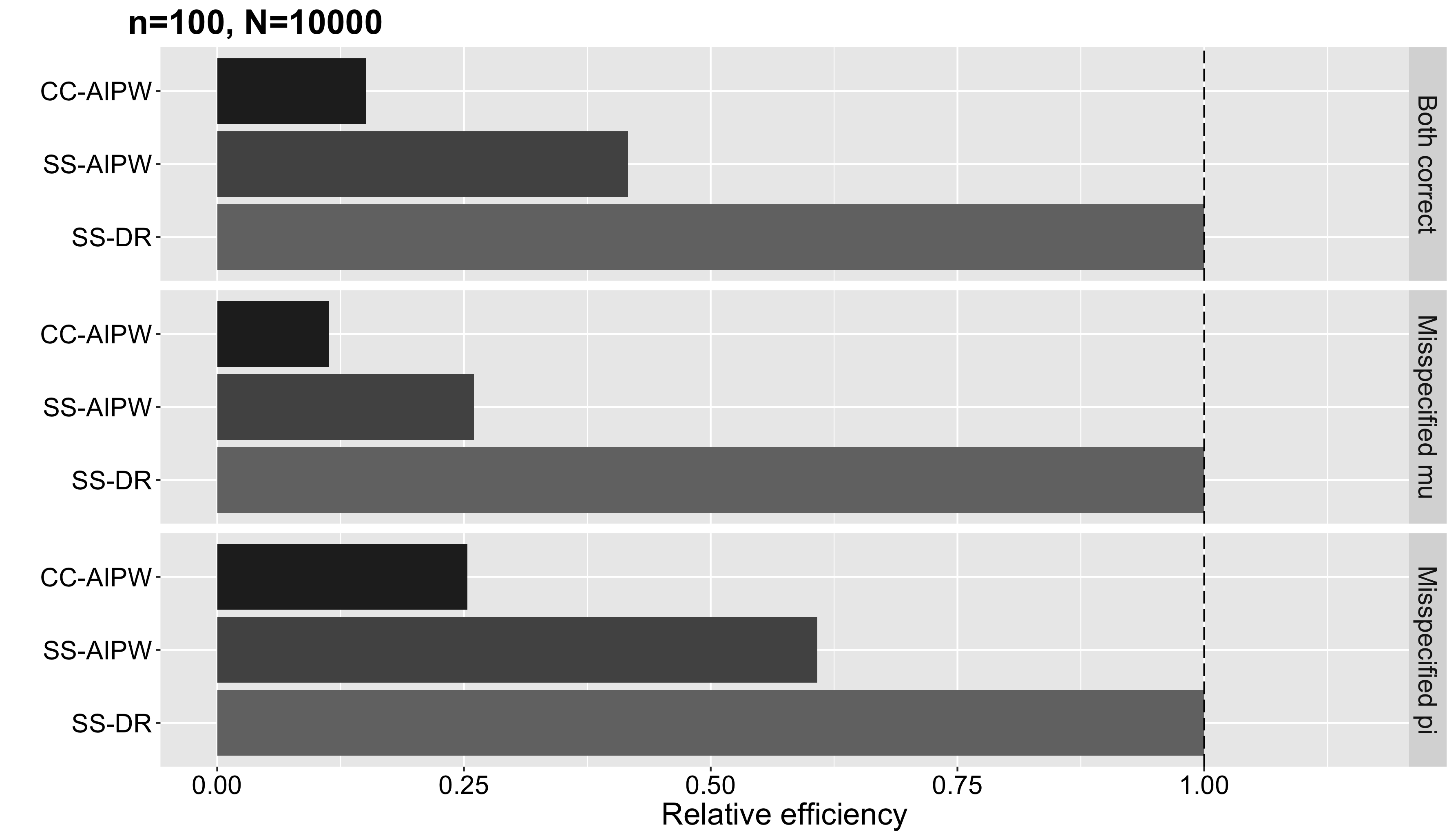}
\includegraphics[scale=.0635]{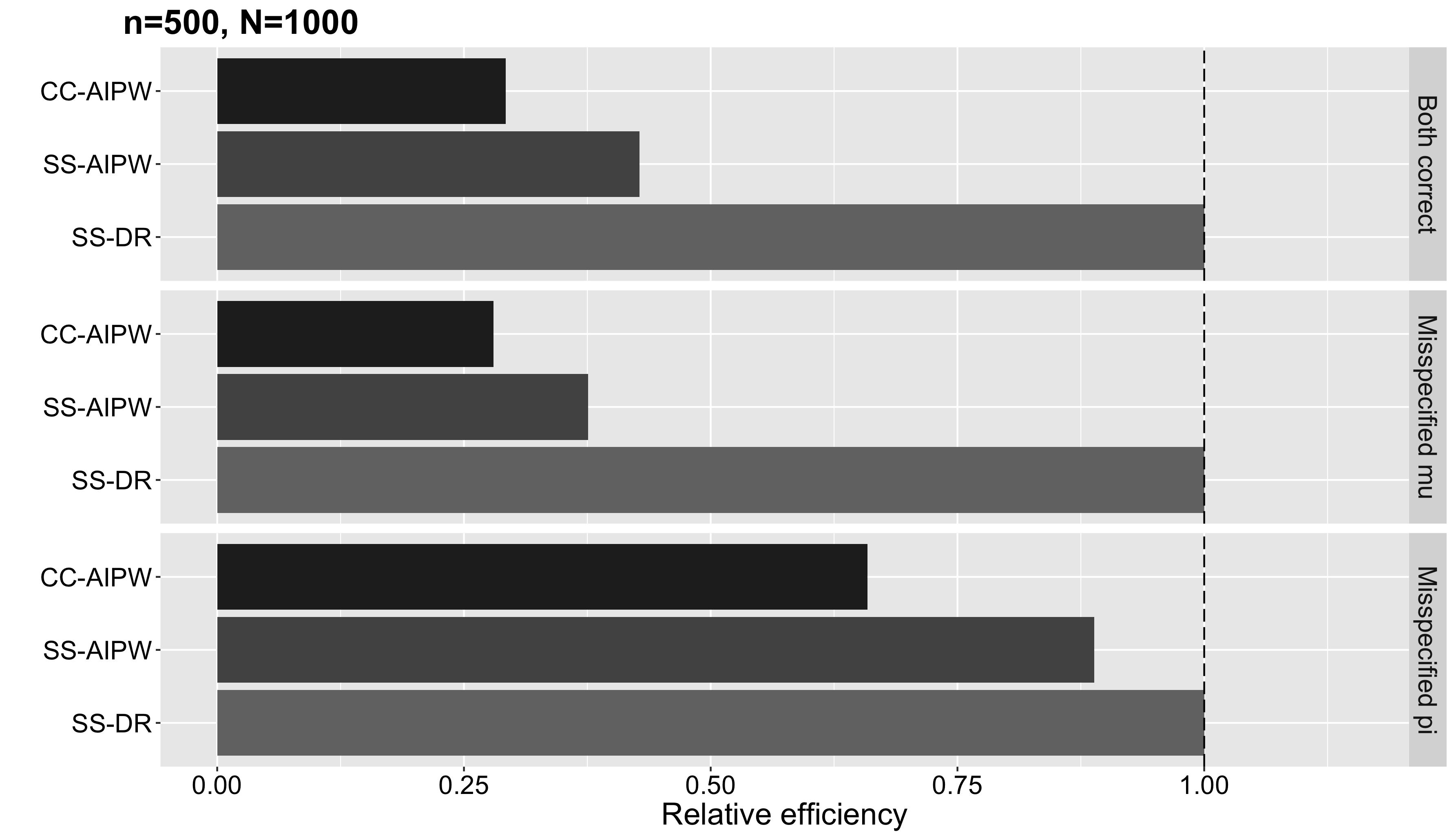} \includegraphics[scale=.0635]{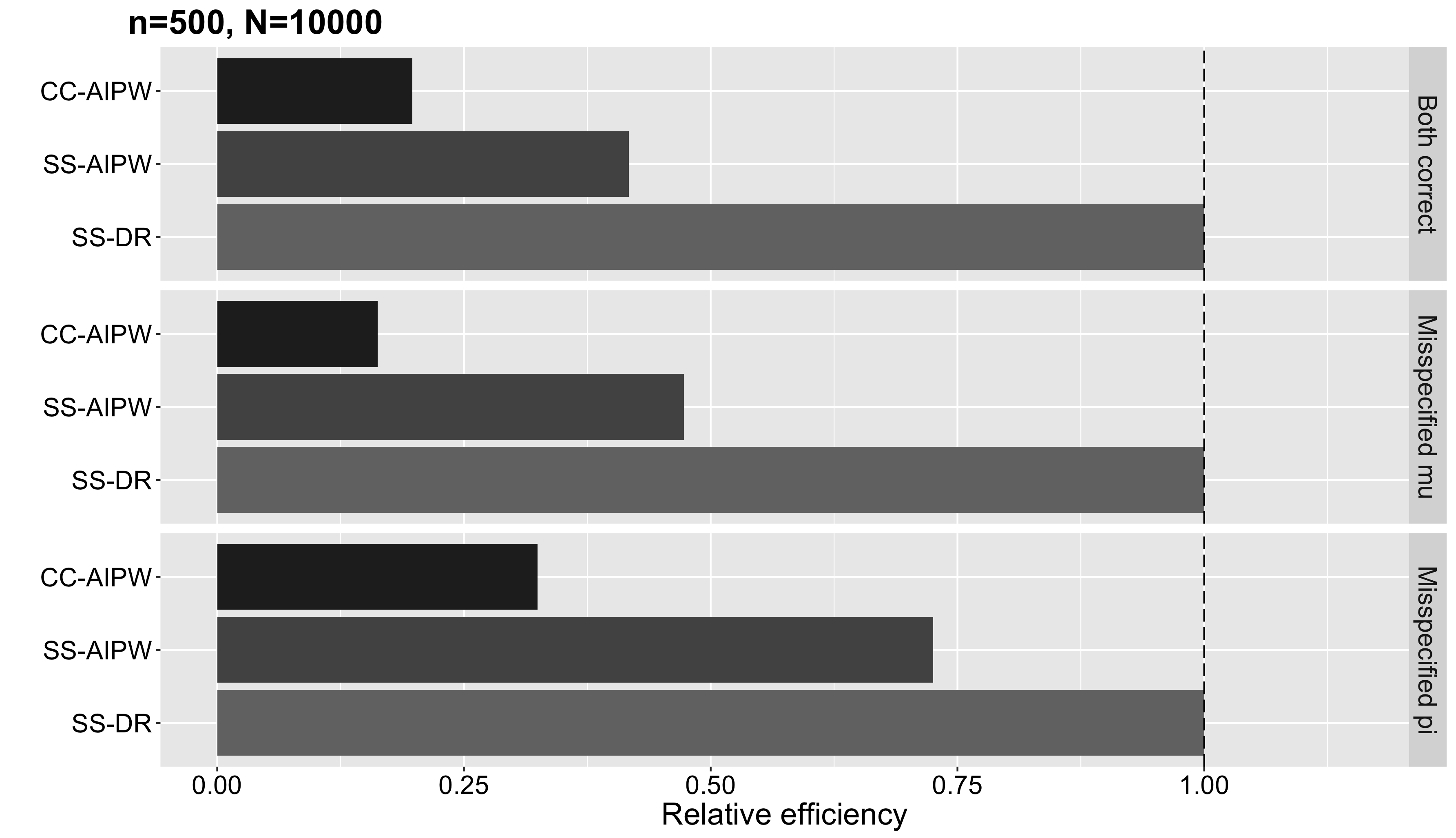}
\end{center}
\end{figure}

To implement the perturbation procedure, we used the weights $G_i \sim 4 \times Beta(.5,1.5)$ and 1,000 sets of $\Gscr$ for SE and CI estimation.  
We considered evaluating the perturbations only in the scenario where both $\mu_k(\bx;\bbeta)$ and $\pi(\bx;\balph)$ were correctly specified and $\bW$ had moderate predictive strength.  
The results are presented in Table \ref{tab:CIsim}. 
In both small and large samples, the SEs estimated by the standard deviation and by MAD approximated well the empirical 
standard error.  The coverage of the percentiles were also close to nominal levels, albeit slightly conservative.  Results from 
4 of the simulation iterations for when $n=100, N=1000$ and from 8 of the iterations when $n=250, N=5000$ were omitted as
the simulations timed out from prolonged computational time.

\begin{table}[h!]
    \caption{Performance of perturbation resampling for $\SSDR$ in 1,000 simulated datasets when both $\mu_k(\bx;\bbeta)$ and $\pi(\bx;\balph)$ are correctly specified.  Emp SE: empirical SE of $\SSDR$ over simulated datasets; ASE: average of estimated SE based on the standard deviation of perturbed estimates; ASE$_{\text{MAD}}$: average of SE based on MAD of perturbed estimates; RMSE: root-mean square error; Coverage: coverage of 95\% percentile CIs.}
    \label{tab:CIsim}
    \begin{center}
    \begin{tabular}{ccccccc}
        \toprule
        \textbf{Size} & \textbf{Bias} & \textbf{Emp SE} & \textbf{ASE} & \textbf{ASE$_{\text{MAD}}$} & \textbf{RMSE} & \textbf{Coverage} \\
        \midrule
        $n = 100$, $N=1000$ &  -0.002 & 0.055 & 0.052 & 0.052 & 0.055 & 0.963 \\
        $n = 250$, $N=5000$ & -0.002 & 0.034 & 0.034 & 0.034 & 0.034 & 0.963 \\ 
        \bottomrule
    \end{tabular}
    \end{center}
\end{table}

\section{EHR Data Application}
\label{s:data}
We applied $\SSDR$ and the alternative estimators to compare the rates of treatment response to two biologic agents for treating inflammatory bowel disease (IBD) using the EMRs from Partner's Healthcare.  Though the efficacy and effectiveness of adalimumab (ADA) and infliximab (IFX) for  the management of IBD have been established individually, few studies have offered a direct comparison. Consequently the choice of treatment in practice is often influenced by factors other than comparative performance \citep{ananthakrishnan2016comparative}.  Randomized trials may be unfeasible due to the large number of patients that would be needed to detect the presumed small treatment difference, and other observational data lack detailed clinical information needed to ascertain meaningful outcomes.  EHRs are thus uniquely positioned to provide evidence on the comparative effectiveness of these two therapies.  

The data we considered consisted of $N=1,243$ total IBD patients, including 200 who initiated treatment with ADA and 1043 with IFX.  Through chart review by a gastroenterologist, a random subset of $n=117$ records were labeled with the true treatment response status (responder vs. non-responder) within one year of treatment initiation.  We included 12 baseline covariates to adjust for confounding in $\bX$, including demographics, comorbidities, prior utilization, and inflammation biomarker levels.  We also selected 35 post-treatment surrogates for $\bW$, comprising of counts of NLP mentions of clinically relevant terms (e.g. ``bleeding'', ``fistula'', ``tenesmus'') within one year of initiation.  The transformation $u \mapsto log(1+u)$ was applied to all count variables in $\bV$ to mitigate instability in the estimation due to skewness in their distributions.  Nonparametric bootstrap was used to estimate SEs and CIs for the alternative estimators and perturbation for $\SSDR$, using the MAD of resampled estimates as an robust estimator of the SEs.  In addition we calculated two-sided p-values based on inverting percentile CIs from the resampled estimates, using the equivalence between significance tests and confidence sets \citep{liu1997notions}.

As shown in Table \ref{tab:data}, the point estimates of most estimators agreed that patients receiving ADA experienced 
lower rates of treatment response, after adjustment for confounding.  $\SSDR$ is estimated to achieve more than 600\% 
efficiency gain over CC estimators and 450\% efficiency gain over the other SS estimators based on the estimated variances.  
It is the only estimator that exhibits a difference that is significant at the .05 level, suggesting that 
patients receiving IFX experience a slightly higher rate of response to treatment.

\begin{table}[h!]
\caption{Point and SE estimates based on MAD for the ATE of ADA vs. IFX, with respect to one-year treatment response rate, among IBD patients in EMR data based on various methods, including the naive CC estimator ($\CCNaive$) that completely ignores confounding bias. 
95\% CIs are percentile-based CIs from resampling and p-values are for testing $H_0: \Delta=0$ based on inverting percentile CIs.}
\label{tab:data}
\begin{center}
\begin{tabular}{lcccc}
    \toprule
    \textbf{Estimator} & \textbf{Estimate} & \textbf{SE} & \textbf{95\% CI (Pct)} & \textbf{p-value} \\
    \midrule
    $\CCNaive$ & 0.014 & 0.099 & (-0.201, 0.177) & 0.822 \\
    $\CCDR$ & -0.125 & 0.153 & (-0.416, 0.164) & 0.592 \\
    $\SSPrePost$ & 0.033 & 0.109 & (-0.265, 0.180) & 0.778 \\
    $\SSDR$ & -0.067 & 0.036 & (-0.164, -0.002) & 0.044 \\
    \bottomrule
\end{tabular}
\end{center}
\end{table}

\section{Discussion}
\label{s:discuss}

This paper developed a robust and efficient estimator for the ATE in a SS setting where the true outcome is labeled
for a vanishingly small proportion of the entire set of observations.
The estimator adopts an imputation approach to leverage surrogate data from $\Uscr$ to improve efficiency that is robust to
misspecification of the imputation model.  
It is DR, locally semiparametric efficient under an ideal SS semiparametric model, and demonstrated to be more efficient
than CC and other estimators that leverage $\Uscr$ in finite samples.

We have assumed that the true outcomes $Y$ are labeled completely at random, which may be reasonable if investigators control the labeling.  But this assumption could be restrictive if labeling was stratified by some known factors or if some records that are available were not labeled for research purposes.
One possible approach to address the case where $Y$ are missing at random is to apply weighting or semiparametric efficient methods \citep{robins1994estimation} to the estimating equation when estimating $\bgam$ in \eqref{e:gamest}.  
Other refinements to our proposed approach are possible.  For example, in the case where $\bW$ is high dimensional, the 
group LASSO  \citep{yuan2006model}, where the basis expansion functions for each  surrogate variable are grouped together, 
can also potentially be used to improve efficiency in finite-samples.  It would also be of interest to extend the theoretical results to the case where $p_x$ and $p_w$ are allowed to diverge with $n$.

\backmatter

\section*{Acknowledgements}

The authors thank Ray Liu, Eric Tchetgen Tchetgen, Rajarshi Mukherjee, and James Robins for helpful discussions as well as 
the editor, associate editor and two referees for their insightful feedback and suggestions.  
Much of this work was done when the first author was a graduate student at Harvard University.  This work was supported 
by National Institutes of Health grants T32CA009337, R21CA242940, and R01HL089778.
The views expressed in this article are those of the authors and do not necessarily reflect the views of the Department of Veterans Affairs.
\vspace*{-8pt}

\bibliographystyle{biom} 
\bibliography{mybibilo}

\section*{Supporting Information}

Web Appendices referenced in Sections \ref{s:method} and \ref{s:sims} are available with this paper at the 
Biometrics website on Wiley Online Library.\vspace*{-8pt}

\label{lastpage}

\processdelayedfloats

\newpage

\section*{Supporting Information for ``Robust and Efficient Semi-Supervised Estimation of Average Treatment Effects with Application to Electronic Health Records Data'' by David Cheng, Ashwin N. Ananthakrishnan, and Tianxi Cai}

\newpage

In the following, the supporting lemmas of Web Appendix A identify rates of convergence for frequently encountered quantities and also identify the efficient influence function for $\Deltabarstr$ under a fully nonparametric model.  Web Appendix A also sketches the proof that $\bgamhat$ is $n^{1/2}-$consistent. The results in Web Appendix B show that $\Delthat$ is consistent and asymptotically linear, deriving its influence function. The results in Web Appendix C establish the semiparametric efficiency bound under the SS model and shows that $\Delthat$ and a modified version of the AIPW estimator proposed in \cite{davidian2005semiparametric} achieves this bound at particular distributions for the data so that it is locally semiparametric efficient. Finally, Web Appendix D reports results 
from brief simulations to gauge the impact of the choice of tuning parameter used in ridge regression on $\Delthat$ in finite samples. Throughout Web Appendices A-C we assume that mild regularity conditions required for the double-index PS in Web Appendix A of \cite{cheng2017estimating} hold.

The following notations facilitate the subsequent derivations.  Let $\pi_k(\bx)=\P(T=k\mid \bX=\bx)$ for $k=0,1$.  Let $\pi_k(\bx;\balph_1,\bbeta_1)=\P(T=k\mid\balph_1\trans\bX=\balph_1\trans\bx,\bbeta_1\trans\bX=\bbeta_1\trans\bx)$, $\pi_k(\bx;\balph)=\pi(\bx;\balph)^k\{ 1-\pi(\bx;\balph)\}^{1-k}$, and $\pihat_k(\bx;\balph_1,\bbeta_1)=\pihat(\bx;\balph_1,\bbeta_1)^k\{1-\pihat_k(\bx;\balph_1,\bbeta_1)\}^{1-k}$ for given $\balph_1,\bbeta_1\in\mathbb{R}^p$ and $\balph\in\mathbb{R}^{p+1}$ and $k=0,1$. Moreover, let $\bvthetbar = (\balphbar_1\trans,\bbetabar_1\trans)\trans$, $\bvthethat = (\balphhat_1\trans,\bbetahat_1\trans)\trans$, $\pi_k(\bx;\bvthetbar)=\pi_k(\bx;\balphbar_1,\bbetabar_1)$, $\pihat_k(\bx;\bvthetbar)= \pihat_k(\bx;\balphbar_1,\bbetabar_1)$, and $\pihat_k(\bx;\bvthethat)=\pihat_k(\bx;\balphhat_1,\bbetahat_1)$. Let the working imputation model be denoted by $\xi_T(\bV;\bgam,\pi)=g_{\xi}\{\bgam\trans(1,\bh(\bV)\trans,T,U_{\pi})\trans\}$, where $U_{\pi}= I(T=1)/\pi(\bX) - I(T=0)/\{ 1-\pi(\bX)\}$, given some PS $\pi$ .  Let $\omega_{k,i} = I(T_i = k)/\pi_k(\bX_i)$, $\omegabar_{k,i} = I(T_i=k)/\pi_k(\bX_i,\bvthetbar)$, $\omegahat_{k,i} = I(T_i = k)/\pihat_k(\bX_i,\bvthethat)$, and $\bSbar = (\balphbar_1,\bbetabar_1)\trans\bX$ with $\bSbar_i = (\balphbar_1,\bbetabar_1)\trans\bX_i$, for $k=0,1$ and $i=1,\ldots,N$. 

\section*{Web Appendix A: Supporting Lemmas}
\begin{lemma}
\label{l:0}
The rates of uniform convergence for kernel estimators we use are as follows:
\begin{eq}
&\sup_{\bx}\norm{\pihat_k(\bx;\bvthetbar)-\pi_k(\bx;\bvthetbar)}= O_p(\atld_N) \\
&\sup_{\bx}\norm{\ddbalph\pihat_k(\bx;\bvthetbar)-\ddbalph\pi_k(\bx;\bvthetbar)} = O_p(\btld_N) \\
&\sup_{\bx}\norm{\ddbbeta\pihat_k(\bx;\bvthetbar)-\ddbbeta\pi_k(\bx;\bvthetbar)} = O_p(\btld_N) \\
&\sup_{\bx}\norm{\pihat_k(\bx;\bvthethat)-\pi_k(\bx;\bvthetbar)}=O_p(a_n),
\end{eq}
where:
\begin{eq*}
\atld_N = h^q + \left(\frac{logN}{Nh^2}\right)^{\half} \text{, } \btld_N = h^{q} + \left(\frac{logN}{Nh^{4}}\right)^{\half} \text{, and } a_n = n^{-\half} + n^{-\half}\btld_N + \atld_N.
\end{eq*}
\begin{proof}
The uniform rates for fixed $\balphbar_1$ and $\bbetabar_1$ in the first three equations follow directly from the uniform convergence rates of kernel smoothers and their first derivatives \citep{hansen2008uniform}.  To establish the uniform convergence rate for DiPS, we first note that:
\begin{eq*}
\sup_{\bx}\norm{\pihat_k(\bx;\bvthethat)-\pi_k(\bx;\bvthetbar)} &\leq \sup_{\bx}\norm{\pihat_k(\bx;\bvthethat) - \pihat_k(\bx;\bvthetbar)} +\sup_{\bx}\norm{\pihat_k(\bx;\bvthetbar)-\pi_k(\bX;\bvthetbar)}.
\end{eq*}
The first term on the right-hand side can be written:
\begin{eq*}
&\sup_{\bx}\norm{\pihat_k(\bx;\bvthethat) - \pihat_k(\bx;\bvthetbar)} \\
&\qquad\leq \sup_{\bx}\norm{\ddbalph\pi_k(\bx;\balphbar_1,\bbetabar_1)(\balphhat_1 -\balphbar_1) + \ddbbeta\pi_k(\bx;\balphbar_1,\bbetabar_1)(\bbetahat_1-\bbetabar_1)} \\
&\qquad\qquad + \sup_{\bx}\norm{\left\{\ddbalph\pihat_k(\bx;\balphbar_1,\bbetabar_1)-\ddbalph\pi_k(\bx;\balphbar_1,\bbetabar_1)\right\}(\balphhat_1-\balphbar_1)} \\
&\qquad\qquad + \sup_{\bx}\norm{\left\{\ddbbeta\pihat_k(\bx;\balphbar_1,\bbetabar_1)-\ddbbeta\pi_k(\bx;\balphbar_1,\bbetabar_1)\right\}(\bbetahat_1-\bbetabar_1)} \\
&\qquad\qquad + O_p(\norm{\balphhat_1-\balphbar_1}^2+\norm{\bbetahat_1-\bbetabar_1}^2 + \norm{\balphhat_1 - \balphbar_1}\norm{\bbetahat_1 - \bbetabar_1}).
\end{eq*}
We obtain the desired rate by collecting terms and applying the other rates from above, using that $\ddbalph\pi_k(\bx;\balphbar_1,\bbetabar_1)$ and $\ddbbeta\pi_k(\bx;\balphbar_1,\bbetabar_1)$ are continuous in $\bx$, and $\bx$ lies in a compact covariate space.
\end{proof}
\end{lemma}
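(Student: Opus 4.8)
The plan is to reduce each of the four claimed rates to known uniform convergence results for Nadaraya--Watson kernel smoothers (and their first derivatives) applied to bounded, Lipschitz regression functions on a compact support, and then handle the extra estimation error introduced by plugging in $\bvthethat$ in place of the deterministic index $\bvthetbar$. The first three displays are the ``fixed-index'' statements: $\pihat_k(\bx;\bvthetbar)$ is exactly a bivariate kernel regression estimator of the conditional probability $\pi_k(\bx;\bvthetbar)=\P(T=k\mid \bSbar=\bSbar(\bx))$ with a $q$-th order kernel and bandwidth $h$, while the $\ddbalph$ and $\ddbbeta$ versions are kernel estimators of its partial derivatives with respect to the index direction. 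First I would invoke the uniform-in-$\bx$ rates of \cite{hansen2008uniform}: under the standing regularity conditions (bounded density bounded away from zero on the compact image of the index map, sufficient smoothness of $\pi_k(\cdot;\bvthetbar)$ so that the $q$-th order kernel kills the bias to order $h^q$, and the usual moment/entropy conditions), one gets $\sup_{\bx}\|\pihat_k(\bx;\bvthetbar)-\pi_k(\bx;\bvthetbar)\|=O_p\big(h^q+(\log N/(Nh^2))^{1/2}\big)=O_p(\atld_N)$, and an analogous statement for first derivatives where the stochastic term picks up an extra $h^{-1}$ in each coordinate, i.e. the variance term becomes $(\log N/(Nh^4))^{1/2}$, giving $O_p(\btld_N)$. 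These three are essentially citations plus verification that our kernel order $q>2$ and bandwidth $h=O(N^{-\alpha})$ meet the hypotheses; no real obstacle here.

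For the fourth display I would use the triangle-inequality split already written in the statement: bound $\sup_{\bx}\|\pihat_k(\bx;\bvthethat)-\pi_k(\bx;\bvthetbar)\|$ by $\sup_{\bx}\|\pihat_k(\bx;\bvthethat)-\pihat_k(\bx;\bvthetbar)\|+\sup_{\bx}\|\pihat_k(\bx;\bvthetbar)-\pi_k(\bx;\bvthetbar)\|$, the second term being $O_p(\atld_N)$ from the first display. For the first term I would Taylor-expand $\pihat_k(\bx;\cdot)$ around $\bvthetbar$ in the index parameters, which is legitimate because $\pihat_k$ is a smooth function of the shifted evaluation point $\bSbar(\bx)$ and $\bvthethat-\bvthetbar=O_p(n^{-1/2})$ by the root-$n$ consistency (oracle property) of the ALASSO estimators $\balphhat_1,\bbetahat_1$. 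The leading term of the expansion is $\ddbalph\pihat_k(\bx;\bvthetbar)(\balphhat_1-\balphbar_1)+\ddbbeta\pihat_k(\bx;\bvthetbar)(\bbetahat_1-\bbetabar_1)$; I would further split each kernel-estimated derivative as the true derivative $\ddbalph\pi_k(\bx;\bvthetbar)$ plus the estimation error. The true-derivative piece is $O_p(n^{-1/2})$ uniformly because $\ddbalph\pi_k(\bx;\bvthetbar)$ is continuous on the compact covariate space hence bounded; the error piece is $O_p(\btld_N)\cdot O_p(n^{-1/2})=O_p(n^{-1/2}\btld_N)$ by the second/third displays; and the quadratic remainder is $O_p(\|\bvthethat-\bvthetbar\|^2)=O_p(n^{-1})$, which is dominated. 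Collecting, $\sup_{\bx}\|\pihat_k(\bx;\bvthethat)-\pihat_k(\bx;\bvthetbar)\|=O_p(n^{-1/2}+n^{-1/2}\btld_N)$, and adding back the $O_p(\atld_N)$ term gives exactly $a_n=n^{-1/2}+n^{-1/2}\btld_N+\atld_N$ as claimed.

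The main subtlety to get right is the Taylor expansion of a \emph{random} function ($\pihat_k$) in the index parameters, uniformly in $\bx$: one needs that the second-order partial derivatives of $\pihat_k(\bx;\cdot)$ in the index direction are $O_p(1)$ uniformly over a neighborhood of $\bvthetbar$ and over $\bx$, so that the remainder genuinely is $O_p(\|\bvthethat-\bvthetbar\|^2)$ with a uniform constant. This follows from the same kernel-smoother uniform bounds applied to second derivatives (the variance term would be $(\log N/(Nh^6))^{1/2}$, which under the admissible range of $\alpha$ is still $o_p(1)$, so these second derivatives are uniformly bounded in probability by the corresponding population quantities, which are continuous on a compact set), combined with the fact that $\bvthethat$ lies in a shrinking neighborhood of $\bvthetbar$ with probability tending to one. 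Everything else is bookkeeping: substituting $h=O(N^{-\alpha})$ and $n=O(N^{1-\beta})$ and checking, via the stated constraints $\alpha\in(\tfrac{1-\beta}{2q},\tfrac{\beta}{2}\wedge\tfrac14)$ and $\beta\in(\tfrac1{q+1},1)$, that the resulting rates are indeed $o_p(1)$ where needed; but those checks are deferred to the places in Web Appendices B and C where this lemma is actually applied.
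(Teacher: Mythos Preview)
Your proposal is correct and follows essentially the same route as the paper: cite Hansen (2008) for the first three fixed-index rates, then for the fourth use the triangle-inequality split, Taylor-expand $\pihat_k(\bx;\cdot)$ in the index parameters, decompose the estimated derivatives as true derivative plus estimation error, and collect the $n^{-1/2}$, $n^{-1/2}\btld_N$, and $\atld_N$ contributions. Your added remark on controlling the second-order remainder uniformly (via uniform bounds on second derivatives of the kernel smoother over a shrinking neighborhood of $\bvthetbar$) is a point the paper leaves implicit; one minor imprecision is that $\balphhat_1$ is in fact root-$N$ consistent (it uses all $N$ observations), but since $N^{-1/2}=o(n^{-1/2})$ this only strengthens your bound and the stated $a_n$ is unchanged.
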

\begin{lemma}
\label{l:1}
Let $\zeta_i=g(\bZ_i)$ be some integrable function of $\bZ_i=(\bV_i\trans,T_i)\trans$, for $i=1,\ldots,N$.  Then:
\begin{eq}
\ntotinv \sum_{i=1}^N \omegahat_{k,i}\zeta_i = \E\left(\omegabar_{k,i}\zeta_i\right) + O_p(c_n),
\end{eq}
where $c_n = \atld_N + n^{-1/2}N^{-1}h^{-3}$.
\end{lemma}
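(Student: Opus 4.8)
The plan is to split the average into an estimation-error piece and an empirical-process piece,
\begin{eq*}
\ntotinv\sum_{i=1}^N \omegahat_{k,i}\zeta_i - \E(\omegabar_{k,i}\zeta_i) &= \ntotinv\sum_{i=1}^N (\omegahat_{k,i}-\omegabar_{k,i})\zeta_i \\
&\quad + \Big\{\ntotinv\sum_{i=1}^N \omegabar_{k,i}\zeta_i - \E(\omegabar_{k,i}\zeta_i)\Big\},
\end{eq*}
and to bound the two terms, call them $R_1$ and $R_2$, separately. The term $R_2$ is an ordinary empirical average of iid summands; since $\pi_k(\bx;\bvthetbar)$ is bounded away from $0$ (inherited from the overlap condition \eqref{e:assumpc} through the regularity conditions imposed on the double-index PS in \cite{cheng2017estimating}), $\omegabar_{k,i}$ is bounded and $\omegabar_{k,i}\zeta_i$ has finite variance for the integrable $\zeta$ arising in our applications, so $R_2 = O_p(N^{-1/2})$; as $h\le(\log N)^{1/2}$ eventually, $N^{-1/2}\le(\log N/(Nh^2))^{1/2}\le\atld_N\le c_n$, whence $R_2 = O_p(c_n)$.

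All the work is in $R_1$. First I would expand
\begin{eq*}
\omegahat_{k,i}-\omegabar_{k,i} &= -\frac{I(T_i=k)\{\pihat_k(\bX_i;\bvthethat)-\pi_k(\bX_i;\bvthetbar)\}}{\pi_k(\bX_i;\bvthetbar)^2} \\
&\quad + \frac{I(T_i=k)\{\pihat_k(\bX_i;\bvthethat)-\pi_k(\bX_i;\bvthetbar)\}^2}{\pi_k(\bX_i;\bvthetbar)^2\,\pihat_k(\bX_i;\bvthethat)},
\end{eq*}
using that, by Lemma \ref{l:0}, $\sup_{\bx}\norm{\pihat_k(\bx;\bvthethat)-\pi_k(\bx;\bvthetbar)} = O_p(a_n) = o_p(1)$, so $\pihat_k(\cdot;\bvthethat)$ is uniformly bounded below with probability tending to one and the quadratic remainder, averaged against $\zeta_i$, is $O_p(a_n^2)$, which under the stated constraints on $h$ and $n$ is $O_p(c_n)$. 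For the leading linear term I would split $\pihat_k(\bX_i;\bvthethat)-\pi_k(\bX_i;\bvthetbar)$ exactly as in the proof of Lemma \ref{l:0}: the kernel error at the fixed limiting scores, $\pihat_k(\bX_i;\bvthetbar)-\pi_k(\bX_i;\bvthetbar)$; plus the index-estimation term $\ddbalph\pihat_k(\bX_i;\bvthetbar)(\balphhat_1-\balphbar_1)+\ddbbeta\pihat_k(\bX_i;\bvthetbar)(\bbetahat_1-\bbetabar_1)$; plus an $O_p(\norm{\bvthethat-\bvthetbar}^2)$ remainder. Writing $b_i = I(T_i=k)\zeta_i/\pi_k(\bX_i;\bvthetbar)^2$, this reduces $R_1$, up to negligible remainders, to (a) $-\ntotinv\sum_i b_i\{\pihat_k(\bX_i;\bvthetbar)-\pi_k(\bX_i;\bvthetbar)\}$ and (b) $-\{\ntotinv\sum_i b_i\ddbalph\pihat_k(\bX_i;\bvthetbar)\}(\balphhat_1-\balphbar_1)-\{\ntotinv\sum_i b_i\ddbbeta\pihat_k(\bX_i;\bvthetbar)\}(\bbetahat_1-\bbetabar_1)$.

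For (a), I would substitute the Nadaraya--Watson representation of $\pihat_k(\cdot;\bvthetbar)$ into $\ntotinv\sum_i b_i\{\pihat_k(\bSbar_i;\bvthetbar)-\pi_k(\bSbar_i;\bvthetbar)\}$ and carry out a Hoeffding-type decomposition: the $q$-th order kernel ($q>2$) controls the smoothing bias at $O_p(h^q)$, the projection onto single observations contributes $O_p(N^{-1/2})$, and the leave-one-in (``diagonal'', $j=i$) terms contribute $O_p((Nh^2)^{-1})$; since $(Nh^2)^{-1}\le(\log N/(Nh^2))^{1/2}\le\atld_N$, term (a) is $O_p(\atld_N)$. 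For (b), I would replace $\ddbalph\pihat_k$ and $\ddbbeta\pihat_k$ by their population counterparts plus kernel-derivative errors (controlled at rate $\btld_N$ by Lemma \ref{l:0}), apply a law of large numbers to the resulting averages, and use that $\balphhat_1-\balphbar_1 = O_p(N^{-1/2})$ because $\balphhat_1$ is estimated on the size-$N$ unlabeled sample $\Uscr$, whereas $\bbetahat_1-\bbetabar_1 = O_p(n^{-1/2})$. The $\balph$-contribution is then $O_p(N^{-1/2}) = O_p(c_n)$, while the $\bbeta$-contribution should be shown to reduce to the leave-one-in correction of the kernel-derivative average $\ntotinv\sum_i b_i\ddbbeta\pihat_k(\bX_i;\bvthetbar)$, its smooth population part being negligible; because each differentiation of $K_h$ introduces a factor $h^{-1}$ beyond the $h^{-2}$ normalization, that diagonal correction is $O_p(N^{-1}h^{-3})$, and multiplying by $\bbetahat_1-\bbetabar_1 = O_p(n^{-1/2})$ produces precisely the $n^{-1/2}N^{-1}h^{-3}$ term in $c_n$.

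The delicate point --- and the main obstacle --- is this last step: showing that the $O_p(n^{-1/2})$ sampling fluctuation of $\bbetahat_1$ contributes to $R_1$ only at the refined rate $O_p(n^{-1/2}N^{-1}h^{-3})$, rather than the crude $O_p(n^{-1/2})$ suggested by the sup-norm bound $O_p(a_n)$ of Lemma \ref{l:0}. This requires the careful U-statistic and leave-one-in expansions of plug-in kernel and kernel-derivative averages, together with the bandwidth window $\alpha\in(\frac{1-\beta}{2q},\frac{\beta}{2}\wedge\frac14)$ and $n=O(N^{1-\beta})$, which are exactly what make the bias $h^q$, the variance $(Nh^2)^{-1}$, and the diagonal corrections all negligible at the required order; the argument parallels Web Appendix A of \cite{cheng2017estimating}, whose regularity conditions we assume throughout.
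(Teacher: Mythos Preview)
Your decomposition into $R_1+R_2$ and the three-way split of $\pihat_k(\cdot;\bvthethat)-\pi_k(\cdot;\bvthetbar)$ coincide with the paper's $\Sscr_{1,k},\Sscr_{2,k},\Sscr_{3,k}$ decomposition, so the overall architecture is the same. Two points of divergence are worth noting.

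First, for piece (a) (the paper's $\Sscr_{2,k}$) you propose a Hoeffding/leave-one-in expansion. The paper does something much cruder: it simply pulls out $\sup_{\bx}\norm{\pihat_k(\bx;\bvthetbar)-\pi_k(\bx;\bvthetbar)}$ and applies Lemma~\ref{l:0} to get $O_p(\atld_N)$ directly. Your route would also land at $O_p(\atld_N)$, but the extra machinery is not needed here.

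Second, and this is the real issue, your treatment of the $\bbeta$-contribution in (b) contains a gap. You claim the ``smooth population part'' of $\ntotinv\sum_i b_i\,\ddbbeta\pihat_k(\bX_i;\bvthetbar)$ is negligible so that only the diagonal $O_p(N^{-1}h^{-3})$ survives, yielding the refined rate $n^{-1/2}N^{-1}h^{-3}$. For an \emph{arbitrary} integrable $\zeta$ this is false: the limit $\E\{b_i\,\ddbbeta\pi_k(\bX_i;\bvthetbar)\}$ is a nonzero constant in general (it depends on $\zeta$ and does not enjoy any orthogonality), so multiplying by $\bbetahat_1-\bbetabar_1=O_p(n^{-1/2})$ produces a genuine $O_p(n^{-1/2})$ term. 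The paper does not attempt your refinement; it Taylor-expands $\Sscr_{3,k}$, cites the analogous calculation in \cite{cheng2017estimating}, and records
\[
\Sscr_{3,k}=O_p\!\left\{(1+N^{-1/2}h^{-1}+N^{-1}h^{-3})\,n^{-1/2}\right\},
\]
then ``collects the dominant rates.'' In other words, the $n^{-1/2}$ contribution from $\bbetahat_1$ is accepted, not argued away. The ``delicate point'' you flag is therefore a phantom: the stated $c_n=\atld_N+n^{-1/2}N^{-1}h^{-3}$ in the lemma appears to omit the $n^{-1/2}$ term that the paper's own derivation of $\Sscr_{3,k}$ produces (under the bandwidth window, $n^{-1/2}$ is not dominated by $\atld_N$). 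This omission is harmless downstream, since every later use of Lemma~\ref{l:1} only requires $c_n=o_p(1)$, but it is not something you can prove away with a leave-one-in argument for generic $\zeta$.
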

\begin{proof}
Consider the decomposition:
\begin{eq*}
N^{-1}\sum_{i=1}^N \omegahat_{k,i}\zeta_i &= \Sscr_{1,k}+\Sscr_{2,k} + \Sscr_{3,k},
\end{eq*}
where:
\begin{eq*}
\Sscr_{1,k} = N^{-1}\sum_{i=1}^N \omegabar_{i,k}\zeta_i, \quad
&\Sscr_{2,k} = N^{-1}\sum_{i=1}^N \left\{\frac{1}{\pihat_k(\bX_i;\bvthetbar)}-\frac{1}{\pi_k(\bX_i;\bvthetbar)}\right\} I(T_i=k)\zeta_i,\\
\text{ and }&\Sscr_{3,k} = N^{-1}\sum_{i=1}^N \left\{ \frac{1}{\pihat(\bX_i;\bvthethat)}- \frac{1}{\pihat_k(\bX_i;\bvthetbar)}\right\}I(T_i=k)\zeta_i.
\end{eq*}

The second term can be bounded:
\begin{eq*}
\abs{\Sscr_{2k}} &\leq \sup_{\bx}\norm{\pihat_k(\bx;\bvthetbar)-\pi_k(\bx;\bvthetbar)}N^{-1}\sum_{i=1}^N \frac{I(T_i = k)\zeta_i}{\pihat_k(\bX_i;\bvthetbar)\pi_k(\bX_i;\bvthetbar)} \\
&= O_p(\atld_N).
\end{eq*}

The third term can be written:
\begin{eq*}
\label{e:S3k}
\Sscr_{3,k} &= N^{-1}\sum_{i=1}^N \ddbalph \frac{I(T_i=k)\zeta_i}{\pihat_k(\bX_i;\balphbar_1,\bbetabar_1)}(\balphhat_1-\balphbar_1) + \ddbbeta \frac{I(T_i = k)\zeta_i}{\pihat_k(\bX_i;\balphbar_1,\bbetabar_1)}(\bbetahat_1-\bbetabar_1) \\
&\qquad + O_p\left(\norm{\balphhat_1-\balphbar_1}^2+\norm{\bbetahat_1-\bbetabar_1}^2 + \norm{\balphhat_1-\balphbar_1}\norm{\bbetahat_1 - \bbetabar_1}\right) \\
&= O_p\left\{(1 + N^{-1/2}h^{-1} + N^{-1}h^{-3})n^{-1/2}\right\}
\end{eq*}
where we use that $\ddbalph1/\pihat_k(\bX_i;\balph_1,\bbeta_1)$ and $\ddbalph1/\pihat_k(\bX_i;\balph_1,\bbeta_1)$ are Lipshitz continuous in $\balph_1$ and $\bbeta_1$ for the first equality and the rate deduced from an analogous term in \cite{cheng2017estimating} for the second equality.  The desired result follows from collecting the dominant rates.
\end{proof}

\begin{lemma}
\label{l:2}
Let $\Ydag_i = \xi_{T_i}(\bV_i;\bgambar,\pi)$ for $i=1,\ldots,N$.  Then:
\begin{eq}
\nhlfNinv \sum_{i=1}^N \omegahat_{k,i}(\Ydag_i - \mubar_k) = O_p(1 + d_n),
\end{eq}
where $d_n = \nu_n^{1/2} N^{\half}h^q + \nu_n^{\half} N^{-\half}h^{-2} + N^{-\half}h^{-1}+N^{-1}h^{-3}$.
\begin{proof}
Consider the decomposition:
\begin{eq*}
\nhlfNinv \sum_{i=1}^N \omegahat_{k,i}(\Ydag_i - \mubar_k) = \Wscrtld_{1,1,k} + \Wscrtld_{2,1,k} + \Wscrtld_{3,1,k},
\end{eq*}
where:
\begin{eq*}
&\Wscrtld_{1,1,k} = \nhlfNinv\sum_{i=1}^N \omegabar_{k,i}(\Ydag_i - \mubar_k) \\
&\Wscrtld_{2,1,k} = \nhlfNinv\sum_{i=1}^N \left\{\frac{1}{\pihat_k(\bX_i;\bvthetbar)} - \frac{1}{\pi_k(\bX_i;\bvthetbar)} \right\}I(T_i =k)(\Ydag_i - \mubar_k) \\
&\Wscrtld_{3,1,k} = \nhlfNinv\sum_{i=1}^N \left\{\frac{1}{\pihat_k(\bX_i;\bvthethat)} - \frac{1}{\pihat_k(\bX_i;\bvthetbar)} \right\}I(T_i =k)(\Ydag_i - \mubar_k).
\end{eq*}

The first term is a scaled sum of iid centered terms so that:
\begin{eq*}
\Wscrtld_{1,1,k} = \nu_n^{\half}N^{-\half}\sum_{i=1}^N \omegabar_{k,i}(\Ydag_i - \mubar_k) = \nu_n^{\half}O_p(1).
\end{eq*}

The V-statistic arguments similar to \cite{cheng2017estimating}, the second term can be written:
\begin{eq*}
\Wscrtld_{2,1,k} &= -\nu_n^{\half}N^{-\half}\sum_{i=1}^N \E(\Ydag_i\mid \bSbar_i,T_i=k) \left\{\frac{I(T_i = k)}{\pi_k(\bX_i;\bvthetbar)}-1\right\} \\
&\qquad + O_p\left\{\nu_n^{\half}(h^q + N^{-\half}h^{-2})\right\} \\
&\qquad + O_p\left\{ \nu_n^{\half}(h^q + N^{\half}h^q + N^{-\half}h^{-2})\right\}+ O_p(\nu_n^{\half}N^{\half}\atld_N^2) \\
&= O_p(\nu_n^{\half}N^{\half}h^q + \nu_n^{\half} N^{-\half}h^{-2}).
\end{eq*}
The final term can be written:
\begin{eq*}
\Wscrtld_{3,1,k} &= \nhlfNinv \sum_{i=1}^N \ddbalph \frac{I(T_i=k)(\Ydag_i-\mubar_k)}{\pihat_k(\bX_i;\balphbar_1,\bbetabar_1)} (\balphhat_1-\balphbar_1)+ \ddbbeta \frac{I(T_i =k)(\Ydag_i -\mubar_k)}{\pihat_k(\bX_i;\balphbar_1,\bbetabar_1)} (\bbetahat_1-\bbetabar_1) \\
&\qquad + O_p\left\{n^{\half}\left( \norm{\balphhat_1-\balphbar_1}^2+\norm{\bbetahat_1-\bbetabar_1}^2+\norm{\balphhat_1-\balphbar_1}\norm{\bbetahat_1-\bbetabar_1} \right)\right\} \\
&= O_p(1+N^{-\half}h^{-1}+N^{-1}h^{-3})O_p(1).
\end{eq*}
where we use that $\ddbalph1/\pihat_k(\bX_i;\balph,\bbeta)$ and $\ddbalph1/\pihat_k(\bX_i;\balph,\bbeta)$ are Lipshitz continuous in $\balph$ and $\bbeta$ for the first equality and used the rate deduced from an analogous term from \cite{cheng2017estimating} for the second equality.  The desired result follows from collecting the rates.
\end{proof}
\end{lemma}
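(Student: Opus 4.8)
The plan is to reuse the three--way decomposition from the proof of Lemma~\ref{l:1}, but now carrying the scaling $\nhlfNinv$ and absorbing the extra $\sqrt n$ into the leading iid sum. First I would write
\begin{eq*}
\nhlfNinv\sum_{i=1}^N \omegahat_{k,i}(\Ydag_i-\mubar_k) = \Wscrtld_{1,1,k}+\Wscrtld_{2,1,k}+\Wscrtld_{3,1,k},
\end{eq*}
where $\Wscrtld_{1,1,k}$ replaces $\omegahat_{k,i}$ by $\omegabar_{k,i}$, $\Wscrtld_{2,1,k}$ collects the remaining discrepancy between $1/\pihat_k(\bX_i;\bvthetbar)$ and $1/\pi_k(\bX_i;\bvthetbar)$ at the fixed limiting parameter, and $\Wscrtld_{3,1,k}$ collects the discrepancy from plugging $\bvthethat$ rather than $\bvthetbar$ into the kernel--smoothed PS. Because $\pi_k(\bx;\bvthetbar)$ is by definition the conditional probability of $T=k$ given the bivariate score $\bSbar$, one has $\E(\omegabar_{k,i}\mid\bSbar_i)=1$, hence $\E\{\omegabar_{k,i}(\Ydag_i-\mubar_k)\}=0$ since $\mubar_k=\E(\omegabar_{k,i}\Ydag_i)$. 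With $\omegabar_{k,i}$ bounded by \eqref{e:assumpc} and $\Ydag_i$ bounded by boundedness of $g_{\xi}$, the summands of $\Wscrtld_{1,1,k}$ are iid mean--zero with finite variance, so $\ntotinvhlf\sum_i\omegabar_{k,i}(\Ydag_i-\mubar_k)=O_p(1)$ by the CLT and $\Wscrtld_{1,1,k}=\nu_n^{\half}O_p(1)=O_p(1)$.

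For $\Wscrtld_{2,1,k}$ I would first linearize, uniformly in $\bx$, $1/\pihat_k(\bx;\bvthetbar)-1/\pi_k(\bx;\bvthetbar) = -\{\pihat_k(\bx;\bvthetbar)-\pi_k(\bx;\bvthetbar)\}/\pi_k(\bx;\bvthetbar)^2 + O_p(\atld_N^2)$ using Lemma~\ref{l:0}, and then substitute the Nadaraya--Watson form of $\pihat_k(\cdot;\bvthetbar)$. After interchanging the outer average over $i$ with the kernel average, the leading term is a V--statistic in the $N$ unlabeled observations whose H\'ajek projection equals $-\nu_n^{\half}\ntotinvhlf\sum_i\E(\Ydag_i\mid\bSbar_i,T_i=k)\{I(T_i=k)/\pi_k(\bX_i;\bvthetbar)-1\}$, which is again a bounded mean--zero iid sum of order $O_p(\nu_n^{\half})$. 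The remaining contributions are the order--$q$ kernel bias, which is $O_p(\nu_n^{\half}N^{\half}h^q)$ after the $\nhlfNinv\cdot N$ scaling; the degenerate part of the V--statistic, which is $O_p(\nu_n^{\half}N^{-\half}h^{-2})$; and the quadratic linearization remainder $O_p(\nu_n^{\half}N^{\half}\atld_N^2)$, which is dominated by the previous two under the admissible ranges for $\alpha$ and $\beta$. These bookkeeping steps are exactly those carried out for the analogous quantity in \cite{cheng2017estimating}, so I would invoke that argument and check that the rates transfer; this yields $\Wscrtld_{2,1,k}=O_p(\nu_n^{\half}N^{\half}h^q+\nu_n^{\half}N^{-\half}h^{-2})$.

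For $\Wscrtld_{3,1,k}$ I would Taylor--expand $\pihat_k(\bx;\bvthethat)-\pihat_k(\bx;\bvthetbar)$ in $(\balph_1,\bbeta_1)$ about $\bvthetbar$; the linear part has coefficients $\ddbalph\pihat_k(\bx;\bvthetbar)$ and $\ddbbeta\pihat_k(\bx;\bvthetbar)$, which I would control with the derivative--smoother rates of Lemma~\ref{l:0}, combined with $\balphhat_1-\balphbar_1=O_p(\ntotinvhlf)$ (estimated from $\Uscr$) and $\bbetahat_1-\bbetabar_1=O_p(n^{-\half})$ (estimated from $\Lscr$). Since $\nhlfNinv\cdot N\cdot n^{-\half}=1$, the $\bbetahat_1$ piece contributes $O_p(1+N^{-\half}h^{-1}+N^{-1}h^{-3})$, the inflation factors arising from the variance of the bivariate kernel--derivative estimator and the linearization residual as in the corresponding term of \cite{cheng2017estimating}; the $\balphhat_1$ piece is a further factor $\nu_n^{\half}$ smaller, and the quadratic remainder is $o_p(1)$ because $n^{\half}(\norm{\balphhat_1-\balphbar_1}^2+\norm{\bbetahat_1-\bbetabar_1}^2+\norm{\balphhat_1-\balphbar_1}\,\norm{\bbetahat_1-\bbetabar_1})\to 0$. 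Hence $\Wscrtld_{3,1,k}=O_p(1+N^{-\half}h^{-1}+N^{-1}h^{-3})$, and adding the three bounds gives $O_p(1+\nu_n^{\half}N^{\half}h^q+\nu_n^{\half}N^{-\half}h^{-2}+N^{-\half}h^{-1}+N^{-1}h^{-3})=O_p(1+d_n)$, as claimed.

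The step I expect to be the main obstacle is the treatment of $\Wscrtld_{2,1,k}$: one must extract the leading linear term of the Nadaraya--Watson estimator, carry out the V--statistic (H\'ajek) projection correctly so as to recognize that this projection is itself a bounded mean--zero iid sum and therefore only $O_p(\nu_n^{\half})$ rather than something that inflates the rate, and then verify that the order--$q$ bias $h^q$, the degenerate--$U$ variance $N^{-1}h^{-2}$, and the quadratic linearization remainder $\atld_N^2$ are all dominated as claimed after multiplying by $\nhlfNinv\cdot N$. This is precisely where the regularity conditions inherited from \cite{cheng2017estimating} --- continuity of the relevant conditional--mean functions, compactness of the support of $\bSbar$, and use of an order--$q$ kernel with $q>2$ --- together with the admissible ranges for $\alpha$ and $\beta$, are all consumed.
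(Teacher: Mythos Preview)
Your proposal is correct and follows essentially the same approach as the paper: the identical three-term decomposition $\Wscrtld_{1,1,k}+\Wscrtld_{2,1,k}+\Wscrtld_{3,1,k}$, the CLT for the first term, the V-statistic (H\'ajek projection) argument borrowed from \cite{cheng2017estimating} for the second, and the Taylor expansion in $(\balph_1,\bbeta_1)$ combined with Lemma~\ref{l:0} for the third, arriving at the same rates. Your added remark that the $\balphhat_1$ contribution is a factor $\nu_n^{\half}$ smaller than the $\bbetahat_1$ contribution (since $\balph$ is estimated on $\Uscr$) is a correct refinement that the paper leaves implicit in its citation of the analogous rate from \cite{cheng2017estimating}.
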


\begin{lemma}
\label{l:npbnd}
Let $\Mscr_{NP} = \{ f_{Y,\bZ}(y,\bz) : \text{ there exists a } \epsilon_{\pi} > 0 \text{ such that } \pi_1(\bx) \in [\epsilon_{\pi},1-\epsilon_{\pi}] \text{ for all } \bx \text{ where } f_{\bX}(\bx)>0\}$ be a nonparametric model for the distribution of $(Y,\bZ)$, where $\bz=(\bv\trans,t)\trans$.  Let $\Mscr_{NP,sub} = \{ f_{Y,\bZ}(y,\bz;\theta):\theta \in \Theta\}$ denote a regular parametric submodel of $\Mscr_{NP}$, where $\theta$ is a finite-dimensional parameter and the true density is at $\theta=\thetastr$.  Let $\Pscr_{NP}$ be the collection of all such regular parametric submodels that satisfy:
\begin{enumerate}
\item $\E_{\theta}[\E_{\thetastr}(Y\mid\bX,T=k)^2]$ is continuous in $\theta$ at $\theta=\thetastr$ for $k=0,1$, where $\E_{\theta}(\cdot)$ and $\E_{\theta}(\cdot \mid \cdot)$ denote expectation and conditional expectation with respect to $f(\cdot;\theta)$ and $f(\cdot\mid\cdot;\theta)$
\item The score at $\thetastr$, satisfies $S_{Y,\bW,T,\bX}(\thetastr)=S_{Y\mid\bW,T,\bX}(\thetastr)+ S_{\bW\mid T,\bX}(\thetastr) + S_{T\mid \bX}(\thetastr) + S_{\bX}(\thetastr)$, where $S_{Y\mid \bW,T,\bX}(\thetastr)$, $S_{\bW\mid T,\bX}(\thetastr)$, $S_{T\mid \bX}(\thetastr)$ and $S_{\bX}(\thetastr)$ denote the scores in implied parametric submodels for the respective conditional and marginal distributions at $\thetastr$.
\item $\ddtheta \E_{\thetastr}\{ \E_{\theta}(Y\mid \bX,T=k)\}\Big|_{\thetastr} = \E_{\thetastr}\{ \ddtheta \E_{\theta}(Y\mid \bX,T=k)\Big|_{\thetastr}\}$ and $\E_{\thetastr}[ \ddtheta \E_{\thetastr}\{\E_{\theta}(Y\mid \bW,\bX,T=k)\mid \bX,T=k\}\Big|_{\thetastr}]= \E_{\thetastr}[ \E_{\thetastr}\{\ddtheta \E_{\theta}(Y\mid \bW,\bX,T=k)\Big|_{\thetastr}\mid \bX,T=k\}]$ for $k=0,1$.
\item $\E_{\theta}\{ \E_{\thetastr}(Y\mid \bW,\bX,T=k)^2\mid \bX,T=k\}$ is continuous in $\theta$ at $\thetastr$ for $k=0,1$.
\item $\E_{\theta}(Y^2\mid \bW,\bX,T=k)$ is continuous in $\theta$ at $\thetastr$ for $k=0,1$.
\end{enumerate}
The efficient influence function for $\Deltabarstr = \E\{ \E(Y\mid \bX,T=1) - \E(Y\mid \bX,T=0)\}$ in $\Mscr_{NP}$ with respect to $\Pscr_{NP}$ is:
\begin{align*}
\Psi_{eff} = \E(Y\mid \bX,T=1)-\E(Y\mid \bX,T=0) + \{ \frac{I(T=1)}{\pi_1(\bX)} - \frac{I(T=0)}{\pi_0(\bX)}\}\{Y - \E(Y\mid \bX,T)\} - \Deltabarstr.
\end{align*}
The semiparametric efficiency bound for $\Deltabarstr$ under $\Mscr_{NP}$ with respect to $\Pscr_{NP}$ is $\E(\Psi_{eff}^2)$.
\end{lemma}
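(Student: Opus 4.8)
The plan is to follow the standard recipe for an efficient influence function in a nonparametric model: (i) identify the tangent set of $\Pscr_{NP}$ at $\thetastr$; (ii) show $\Deltabarstr$ is pathwise differentiable along every submodel in $\Pscr_{NP}$ and exhibit a gradient; and (iii) conclude that the stated $\Psi_{eff}$ is the canonical gradient, so that $\E(\Psi_{eff}^2)$ is the bound. Throughout I write $m_k(\bX)=\E(Y\mid\bX,T=k)$, $q_k(\bW,\bX)=\E(Y\mid\bW,\bX,T=k)$, and $A(T,\bX)=I(T=1)/\pi_1(\bX)-I(T=0)/\pi_0(\bX)$ for the bracketed weight in $\Psi_{eff}$, and $L_2^0$ for the Hilbert space of mean-zero, square-integrable functions of $(Y,\bZ)$ under the true law.

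For (i), I would use the factorization $f_{Y,\bW,T,\bX}=f_{Y\mid\bW,T,\bX}f_{\bW\mid T,\bX}f_{T\mid\bX}f_{\bX}$ together with the score decomposition in Condition 2: every submodel score at $\thetastr$ is an orthogonal sum $S_{Y\mid\bW,T,\bX}+S_{\bW\mid T,\bX}+S_{T\mid\bX}+S_{\bX}$, whose four summands are conditionally mean-zero given $(\bW,T,\bX)$, $(T,\bX)$, $\bX$, and unconditionally, respectively, and hence lie in four mutually orthogonal subspaces of $L_2^0$. Conversely, standard exponential-tilting submodels realize any bounded direction in each of these four subspaces while satisfying Conditions 1--5, and the constraint $\pi_1(\bx)\in[\epsilon_{\pi},1-\epsilon_{\pi}]$ is an interior restriction that does not shrink the tangent set; so the closed linear span of scores over $\Pscr_{NP}$ is all of $L_2^0$.

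For (ii), I would compute $\frac{\partial}{\partial\theta}\Deltabarstr(\theta)\big|_{\thetastr}$ by differentiating through the three nested layers $\Deltabarstr(\theta)=\E_\theta\{m_{1,\theta}(\bX)-m_{0,\theta}(\bX)\}$, $m_{k,\theta}(\bX)=\E_\theta\{q_{k,\theta}(\bW,\bX)\mid\bX,T=k\}$, and $q_{k,\theta}(\bW,\bX)=\E_\theta(Y\mid\bW,\bX,T=k)$, with Conditions 3--5 (plus Condition 1) licensing the interchanges of differentiation and integration and the $L_2$-continuity needed at each layer. Rewriting the conditional expectations over $\{T=k\}$ via the inverse-probability identity $\E\{h(\bW,\bX,k)\mid\bX,T=k\}=\E\{I(T=k)\pi_k(\bX)^{-1}h(\bW,\bX,T)\mid\bX\}$ and collecting the three contributions gives $\frac{\partial}{\partial\theta}\Deltabarstr=\E_{\thetastr}[\{m_1(\bX)-m_0(\bX)-\Deltabarstr\}S_{\bX}]+\E_{\thetastr}[A(T,\bX)\{q_T(\bW,\bX)-m_T(\bX)\}S_{\bW\mid T,\bX}]+\E_{\thetastr}[A(T,\bX)\{Y-q_T(\bW,\bX)\}S_{Y\mid\bW,T,\bX}]$. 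Since $A(T,\bX)\{Y-m_T(\bX)\}=A(T,\bX)\{q_T-m_T\}+A(T,\bX)\{Y-q_T\}$, the orthogonality of the four score blocks (each of the three bracketed functions is conditionally mean-zero given the arguments of the complementary blocks) lets me reassemble this as $\E_{\thetastr}\{\Psi_{eff}\,S(\thetastr)\}$, so $\Psi_{eff}$ is a gradient for every submodel in $\Pscr_{NP}$.

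For (iii), one verifies $\E(\Psi_{eff})=0$ and $\E(\Psi_{eff}^2)<\infty$ (the latter from $\pi_1$ bounded away from $0$ and $1$ and the square-integrability in Conditions 1, 4, 5), so $\Psi_{eff}\in L_2^0$, which by (i) is the tangent space; therefore $\Psi_{eff}$ equals its own projection onto the tangent space and is the efficient (canonical) influence function, giving the semiparametric efficiency bound $\E(\Psi_{eff}^2)$. I expect the main obstacle to be part (ii): carefully differentiating the three nested conditional-expectation layers while tracking which regularity condition justifies each interchange, and then the bookkeeping of the orthogonality arguments that collapse the pathwise derivative into $\E_{\thetastr}\{\Psi_{eff}\,S(\thetastr)\}$. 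The tangent-set identification in (i) and the moment checks in (iii) are routine by comparison.
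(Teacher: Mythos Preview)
Your proposal is correct and follows the standard semiparametric efficiency derivation (essentially Hahn, 1998): identify the tangent set as all of $L_2^0$ via the score decomposition and exponential-tilting submodels, verify pathwise differentiability by differentiating the nested conditional expectations and reassembling using the mutual orthogonality of the four score blocks, then observe that $\Psi_{eff}\in L_2^0$ is automatically its own projection.

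The paper itself does not give a proof: it simply cites \cite{robins1994estimation} and \cite{hahn1998role} and remarks that the availability of $\bW$ does not alter the bound because $\Mscr_{NP}$ is a model for the fully-observed $(Y,\bZ)$. Your outline is precisely the argument those references carry out, so there is no substantive difference in approach --- you are filling in what the paper defers. One small point worth making explicit, since it is the only novelty relative to the classical Hahn derivation: the $S_{\bW\mid T,\bX}$ block contributes the term $\E_{\thetastr}[A(T,\bX)\{q_T(\bW,\bX)-m_T(\bX)\}S_{\bW\mid T,\bX}]$, but this is absorbed into $\E_{\thetastr}[A(T,\bX)\{Y-m_T(\bX)\}S(\thetastr)]$ by orthogonality, confirming the paper's claim that $\bW$ enters the data but not the bound. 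Your decomposition already handles this correctly; it is just the one place where the presence of $\bW$ needs to be tracked.
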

\begin{proof}
The derivation of the semiparametric efficiency bound for $\Deltabarstr$ under $\Mscr_{NP}$ directly follows arguments from the well-known works of \cite{robins1994estimation} and \cite{hahn1998role}. It can be shown that the availability of $\bW$ in our framework does not alter the bound as $\Mscr_{NP}$ is a model for the distribution of data where $Y$ is fully observed. We omit repeating the arguments here for brevity.
\end{proof}

\begin{lemma}
    \label{thm:0}
    Let $\lambda_n = o(n^{-1/2})$ and $\bgamhat$ be the solution to penalized estimating equation for a GLM based on an exponential family with canonical link function:
    \begin{align*}
        \bU_n(\bgam;\lambda_n) = n^{-1}\sum_{i=1}^n \bZpii \left\{ Y_i - g_{\xi}(\bgam\trans\bZpii)\right\} - \lambda_n \bgam_{\circ} = \bzero.
    \end{align*}
    Let the parameter space for $\bgam$ be compact and $\sum_{i=1}^n \bZpii \bZpii\trans$ be full-rank. Then  $\bgamhat$ is a $n^{1/2}$-consistent estimator of $\bgambar$ such that $n^{1/2}(\bgamhat - \bgambar) = O_p(1)$, where $\bgambar$ solves:
    \begin{align*}
        \E\left[ \bZpi \left\{ Y - g_{\xi}(\bgam\trans\bZpi)\right\}\right] = \bzero.
    \end{align*}
    \begin{proof}
        Let the 
        population estimating equation be $\bU(\bgam) = \E\left[ \bZpi \left\{ Y - g_{\xi}(\bgam\trans\bZpi)\right\}\mid L = 1\right] = \E\left[ \bZpi \left\{ Y - g_{\xi}(\bgam\trans\bZpi)\right\}\right]$, using that $L \indep (\bZpi\trans,Y)\trans$ due to random labeling.
        $\bgambar$ is a unique solution to $\bU(\bgam)=\bzero$ as the log-likelihood of a GLM with canonical link is strictly concave.  It also follows from the uniform law of large numbers \cite{newey1994large} that 
        $\sup_{\bgam}\norm{\bU_n(\bgam)-\bU(\bgam)}\overset{p}{\to}0$. Consequently $\bgamhat \overset{p}{\to} \bgambar$ by
        by Z-estimation theory \citep{van2000asymptotic}.
        Next, expanding $\bU_n(\bgamhat)$ around $\bgambar$:
        \begin{align*}
            \bzero = \bU_n(\bgambar) + \frac{\partial}{\partial\bgam}\bU_n(\bgamhat^*) (\bgamhat-\bgambar),
        \end{align*}
        which yields:
        \begin{align*}
            n^{1/2}(\bgamhat-\bgambar) &= -\frac{\partial}{\partial\bgam} \bU_n(\bgamhat^*)^{-1}\left\{n^{-1/2}\sum_{i=1}^n \bZpii \left\{ Y_i - g_{\xi}(\bgambar\trans\bZpii)\right\} - n^{1/2}\lambda_n \bgambar_{\circ} \right\},
        \end{align*}
        where $\bU_n(\bgamhat^*)$ denotes a Jacobian matrix with each row evaluated at a different intermediate point $\bgamhat^*$ between $\bgamhat$ and $\bgambar$.  Another application of the uniform law of large numbers yields that $\sup_{\bgam}\norm{\frac{\partial}{\partial\bgam}\bU_n(\bgam) - \frac{\partial}{\partial\bgam}\bU(\bgam)}\overset{p}{\to}0$.
        This, along with the fact that $\bgamhat\overset{p}{\to}\bgambar$ yields that $\frac{\partial}{\partial\bgam}\bU_n(\bgamhat^*)\overset{p}{\to}U(\bgambar)$. The penalty terms can be seen to be asymptotically negligible when $\lambda_n = o(n^{-1/2})$.  By applications of Central Limit Theorem and Slutsky's Theorem, we have that $n^{1/2}(\bgamhat - \bgambar) = O_p(1)$.
    \end{proof}
\end{lemma}

\section*{Web Appendix B: Consistency and Asymptotic Linearity of $\Delthat$}
\begin{theorem}
\label{thm:1}
Under the identification assumptions from (2)-(4) of the main text,
given a bandwidth of $h = O(N^{-\alpha})$ for $\frac{1-\beta}{2q}<\alpha <  \min(\frac{\beta}{2},\frac{1}{4})$ and $n = O(N^{1-\beta})$ with $\frac{1}{q+1}<\beta <1$, $\Delthat - \Delta = O_p(n^{-\half})$ when either $\pi_1(\bx;\balph)$ or $\mu_k(\bx;\bbeta)$ is correctly specified.
\end{theorem}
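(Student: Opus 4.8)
The plan is a two-step reduction. First I would identify the probability limit of $\Delthat$, showing that $\Deltbar = \mubar_1 - \mubar_0$ equals $\Delta$ whenever either working model is correct; then I would show $\Delthat - \Deltbar = O_p(n^{-1/2})$ by controlling each $\muhat_k - \mubar_k$. For the first step, note that $\bZpii$ contains the utility covariate $\Upii = I(T_i=1)/\pi_1(\bX_i;\bvthetbar) - I(T_i=0)/\pi_0(\bX_i;\bvthetbar)$, so the coordinate of the population estimating equation defining $\bgambar$ that corresponds to $\Upi$ reads $\E\{\Upi(Y-\Ydag)\}=0$, i.e. $\mubar_1 - \mubar_0 = \E\{\Upi\,\Ydag\} = \E\{\Upi\,Y\} = \E\{I(T=1)Y/\pi_1(\bX;\bvthetbar)\} - \E\{I(T=0)Y/\pi_0(\bX;\bvthetbar)\}$. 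Invoking the double-robustness of the double-index PS from \citep{cheng2017estimating} applied to the fully observed response $Y$, under \eqref{e:assump}--\eqref{e:assumpc} and the regularity conditions assumed throughout, $\E\{I(T=k)Y/\pi_k(\bX;\bvthetbar)\} = \mu_k$ for $k=0,1$ as soon as either $\pi_1(\bx;\balph)$ or $\mu_k(\bx;\bbeta)$ is correctly specified --- in the first case $\balphbar_1\trans\bX$ is a balancing score so $(\balphbar_1\trans\bX,\bbetabar_1\trans\bX)$ is one too, and in the second $\mu_k(\bX)$ is a measurable function of $\bbetabar_1\trans\bX$. Hence $\Deltbar = \mu_1 - \mu_0 = \Delta$, and it remains to show $\Delthat - \Deltbar = O_p(n^{-1/2})$.

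Treating $\muhat_k$ as a ratio, Lemma~\ref{l:1} with $\zeta_i\equiv 1$ gives $\ntotinv\sum_{i=1}^N \omegahat_{k,i} = \E(\omegabar_{k,i}) + O_p(c_n) = 1 + O_p(c_n)$. For the numerator I would decompose $\Yhatdag_i = \Ydag_i + (\Yhatdag_i - \Ydag_i)$ with $\Ydag_i = g_\xi(\bgambar\trans\bZpii)$: Lemma~\ref{l:2} gives $\ntotinv\sum_i\omegahat_{k,i}(\Ydag_i - \mubar_k) = O_p(n^{-1/2}(1+d_n))$, so adding back $\mubar_k\ntotinv\sum_i\omegahat_{k,i}$ and using Lemma~\ref{l:1} again yields $\ntotinv\sum_i\omegahat_{k,i}\Ydag_i = \mubar_k + O_p(n^{-1/2}(1+d_n) + c_n)$. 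For $\Yhatdag_i - \Ydag_i$, a first-order Taylor expansion of $g_\xi$ gives $\Yhatdag_i - \Ydag_i = \gdot_\xi(\bgambar\trans\bZpii)\{\bZpii\trans(\bgamhat - \bgambar) + \gambar_3\,\delta_i\} + R_i$, where $\delta_i$ is the difference between the utility covariate evaluated at $\pihat_k(\cdot;\bvthethat)$ and at $\pi_k(\cdot;\bvthetbar)$ (the only coordinate in which $\bZpihati$ differs from $\bZpii$), $\gambar_3$ is the $\Upi$-coefficient in $\bgambar$, and $R_i$ is a quadratic remainder. Applying Lemma~\ref{l:1} to $\zeta_i = \gdot_\xi(\bgambar\trans\bZpii)\bZpii$ together with $\bgamhat - \bgambar = O_p(n^{-1/2})$ handles the first term as $O_p(n^{-1/2})$; the second term is bounded by $|\gambar_3|\sup_{\bx}\norm{\pihat_k(\bx;\bvthethat) - \pi_k(\bx;\bvthetbar)}$ times an $O_p(1)$ average, hence $O_p(a_n)$ by Lemma~\ref{l:0}; and $\ntotinv\sum_i\omegahat_{k,i}R_i = O_p(n^{-1} + a_n^2)$. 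Collecting, the numerator is $\mubar_k + O_p(n^{-1/2}(1+d_n) + c_n + a_n)$, and dividing by $1 + O_p(c_n)$ gives $\muhat_k - \mubar_k = O_p(n^{-1/2}(1+d_n) + c_n + a_n)$.

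It then remains to check that, with $h = O(N^{-\alpha})$ and $n = O(N^{1-\beta})$ (so $\nu_n = N^{-\beta}$), the stated ranges $\frac{1-\beta}{2q} < \alpha < \min(\frac{\beta}{2},\frac14)$ and $\frac{1}{q+1} < \beta < 1$ --- the latter being precisely what makes the $\alpha$-interval nonempty --- force $h^q = O(n^{-1/2})$ (from $\alpha q \ge \frac{1-\beta}{2}$), $\atld_N = o(n^{-1/2})$ (from $\alpha < \frac{\beta}{2}$), $a_n = O(n^{-1/2})$ and $n^{-1/2}(\log N/Nh^4)^{1/2} = o(n^{-1/2})$ (from $\alpha < \frac14$), $c_n = o(n^{-1/2})$ (from $\alpha < \frac13$, since $c_n = \atld_N + n^{-1/2}N^{-1}h^{-3}$), and $d_n = o(1)$ summand by summand. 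Substituting, $\muhat_k - \mubar_k = O_p(n^{-1/2})$ for $k = 0,1$, so $\Delthat - \Deltbar = O_p(n^{-1/2})$ and, with the first step, $\Delthat - \Delta = O_p(n^{-1/2})$. I expect the main obstacle to be the utility-covariate perturbation term $\gambar_3\ntotinv\sum_i\omegahat_{k,i}\gdot_\xi(\bgambar\trans\bZpii)\,\delta_i$: the imputations plug the kernel-smoothed double-index PS into $\Upi$ and $\bgamhat$ is itself fitted with that estimated covariate, so controlling it relies on the \emph{uniform} rate in Lemma~\ref{l:0} and on the precise balancing of the smoothing bias $h^q$, the stochastic error $(\log N/Nh^2)^{1/2}$, and the $N^{-1}h^{-3}$-type cross terms against the target $n^{-1/2} = N^{-(1-\beta)/2}$ in the $N \gg n$ regime, the chosen $(\alpha,\beta)$ window being exactly the one that renders all of these negligible simultaneously.
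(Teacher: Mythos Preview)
Your proposal is correct and follows essentially the same route as the paper: identify $\Deltbar=\Delta$ via the $\Upi$-coordinate of the population score equation together with the double-index PS result of \cite{cheng2017estimating}, handle the normalizing constant via Lemma~\ref{l:1}, split $\Yhatdag_i$ into $\Ydag_i$ plus the perturbation $\Yhatdag_i-\Ydag_i$, apply Lemma~\ref{l:2} to the first piece, and Taylor-expand the second in $(\bgam,\pi)$ with the stated rate bookkeeping. The only difference is cosmetic: where you bound the utility-covariate perturbation in one stroke by $O_p(a_n)$ using the uniform rate of Lemma~\ref{l:0}, the paper further splits it into a parametric piece $\Wscrtld_{2,k}^{pa,\pi}$ (from $\bvthethat-\bvthetbar$) and a nonparametric piece $\Wscrtld_{2,k}^{np,\pi}$ (from the smoothing), obtaining $O_p\{n^{-1/2}(1+c_n+\btld_N)\}$ and $O_p(\atld_N)$ respectively---this finer decomposition is not needed for Theorem~\ref{thm:1} but is reused in the influence-function derivation of Theorem~\ref{thm:2}.
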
 
\begin{proof}
We first show that $\Delthat - \Deltbar=O_p(n^{-\half})$ where $\Deltbar = \mubar_1 - \mubar_0$.  If this can be shown, the limiting estimate is:
\begin{eq*}
\Deltbar &= \E\left\{ \frac{I(T=1)Y}{\pi_1(\bX;\bvthetbar)}-\frac{I(T=0)Y}{\pi_0(\bX;\bvthetbar)}\right\} = \Delta,
\end{eq*}
where the first equality follows from the argument in the main text and the second equality holds when either $\pi_1(\bx;\balph)$ or $\mu_k(\bx;\bbeta)$ are correctly specified \citep{cheng2017estimating}.  It suffices to show that $\muhat_k - \mubar_k = O_p(n^{-\half})$ , for $k=0,1$.  First note that the normalizing constant can effectively be ignored.  By application of Lemma \ref{l:1} with $\zeta_i = 1$, the normalizing constant is:
\begin{eq}
\label{e:nrmrte}
\ntotinv\sum_{i=1}^{\ntot} \omegahat_{k,i} = 1 +O_p(c_n).
\end{eq}

We can now write the standardized mean for the $k$-th group as:
\begin{eq}
\label{e:nrmstdmean}
n^{\half}(\muhat_k -\mubar_k) &= \nhlfNinv\sum_{i=1}^{\ntot}\omegahat_{k,i} (\Yhatdag_i-\mubar_k) + \left[ \left\{ \ntotinv\sum_{i=1}^{\ntot}\omegahat_{k,i}\right\}^{-1} - 1\right]\nhlfNinv\sum_{i=1}^{\ntot}\omegahat_{k,i} (\Yhatdag_i-\mubar_k) \nonumber\\
&= \nhlfNinv\sum_{i=1}^{\ntot}\omegahat_{k,i} (\Yhatdag_i - \mubar_k)+ O_p(c_n),
\end{eq}
where the last equality follows provided that the main term is $O_p(1)$.  Denote: $$\Wscrtld_k= \nhlfNinv\sum_{i=1}^{\ntot}\omegahat_{k,i} (\Yhatdag_i - \mubar_k).$$ This can be decomposed as $\Wscrtld_k = \Wscrtld_{1,k} + \Wscrtld_{2,k}$, where:
\begin{eq}
\label{e:meandcmp}
\Wscrtld_{1,k} = \nhlfNinv\sum_{i=1}^{\ntot}\omegahat_{k,i} (\Ydag_i - \mubar_k) \text{ and }\Wscrtld_{2,k} =  \nhlfNinv\sum_{i=1}^{\ntot}\omegahat_{k,i}(\Yhatdag_i - \Ydag_i).
\end{eq}

First, focusing on $\Wscrtld_{2,k}$, we expand $\Yhatdag_i = \xi_{T_i}(\bV_i;\bgamhat,\pihat)$ around $\Ydag_i = \xi_{T_i}(\bV_i;\bgambar,\pi)$:
\begin{eq}
\label{e:impdcmp}
\Wscrtld_{2,k} &= \Wscrtld_{2,k}^{\pi} + \Wscrtld_{2,k}^{\bgam} + O_p\left\{\sup_{\bx}\norm{\pihat(\bx;\bvthethat)-\pi(\bx;\bvthetbar)}^2\right\}+O_p(\norm{\bgamhat - \bgambar}^2),
\end{eq}
where
\begin{eq}
\label{e:residcmp}
\Wscrtld_{2,k}^{\pi} &= \nhlfNinv\sum_{i=1}^N \omegahat_{k,i}\ddpi \xi_{T_i}(\bV_i;\bgambar,\pi)\left\{ \pihat_k(\bX_i;\bvthethat)-\pi_k(\bX_i;\bvthetbar)\right\}
\end{eq}
accounts for estimating the DiPS in the imputation with 
$\ddpi \xi_{T_i}(\bV_i;\bgambar,\pi) = \gdot_{\xi}(\bgambar\trans\bZ_{\pi,i}) [\bar{\gamma}_3\{- \frac{I(T_i=1)}{\pi_{k}(\bX_i;\bvthetbar)^2}-\frac{I(T_i=0)}{(1-\pi_{k}(\bX_i;\bvthetbar))^2}\}]$ and $\dot{g}_{\xi}(u)=dg_{\xi}(u)/du$ and
\begin{eq}
\Wscrtld_{2,k}^{\bgam} 
&= \nhlfNinv\sum_{i=1}^N \omegahat_{k,i} \ddbgam\xi_{T_i}(\bV_i;\bgambar,\pi) (\bgamhat - \bgambar) + O_p(\sup_{\bx}\norm{\pihat(\bx;\bvthethat)-\pi(\bx;\bvthetbar)}\norm{\bgamhat - \bgambar})
\end{eq}
accounts for estimating $\bgam$ in the imputation. 
We can further decompose $\Wscrtld_{2,k}^{\pi} = \Wscrtld_{2,k}^{pa,\pi} + \Wscrtld_{2,k}^{np,\pi}$, where:
\begin{eq}
\label{e:W2pidcmp}
&\Wscrtld_{2,k}^{pa,\pi} = \nhlfNinv\sum_{i=1}^N \omegahat_{k,i}\ddpi \xi_{T_i}(\bV_i;\bgambar,\pi) \left\{ \pihat_k(\bX_i;\bvthethat)-\pihat_k(\bX_i;\bvthetbar)\right\} \nonumber\\
&\Wscrtld_{2,k}^{np,\pi} = \nhlfNinv\sum_{i=1}^N \omegahat_{k,i}\ddpi \xi_{T_i}(\bV_i;\bgambar,\pi) \left\{ \pihat_k(\bX_i;\balphbar_1,\bbetabar_1)-\pi_k(\bX_i;\bvthetbar)\right\}.
\end{eq}

The $\Wscrtld_{2,k}^{pa,\pi}$ term accounts for the parametric estimation of $\balph_1$ and $\bbeta_1$ in DiPS and is:
\begin{eq}
\label{e:W2pipa}
&\Wscrtld_{2,k}^{pa,\pi} =\nhlfNinv\sum_{i=1}^N \omegahat_{k,i}\ddpi \xi_{T_i}(\bV_i;\bgambar,\pi) \bigg\{\ddbalph\pihat_k(\bX_i;\balphbar_1,\bbetabar_1)(\balphhat - \balphbar) \nonumber\\
&\qquad+\ddbbeta\pihat_k(\bX_i;\balphbar_1,\bbetabar_1)(\bbetahat - \bbetabar) + O_p(\norm{\balphhat-\balphbar}^2+\norm{\bbetahat-\bbetabar}^2+\norm{\balphhat-\balphbar}\norm{\bbetahat-\bbetabar})\bigg\} \nonumber\\
&= \ntotinv\sum_{i=1}^N \omegahat_{k,i}\ddpi\xi_{T_i}(\bV_i;\bgambar,\pi) \left\{ \ddbalph\pi_k(\bX_i;\balphbar_1,\bbetabar_1)n^{\half}(\balphhat - \balphbar) + \ddbbeta\pi_k(\bX_i;\balphbar_1,\bbetabar_1)n^{\half}(\bbetahat - \bbetabar)\right\} \nonumber\\
&\qquad+ O_p(1+c_n)O_p(\btld_N)O_p(\nu_n^{\half})+O_p(1+c_n)O_p(\btld_N)O_p(1) \nonumber\\
&= O_p(1+c_n) + O_p(\btld_N),
\end{eq}
where the first equality uses the Lipshitz continuity of $\dduK(\cdot)$, the second equality applies Lemma \ref{l:0} and Lemma \ref{l:1} taking $\zeta_i=\ddpi\xi_{T_i}(\bV_i;\bgambar,\pi)$, and the last equality applies lemma \ref{l:1} again taking $\zeta_i=\ddpi\xi_{T_i}(\bV_i;\bgambar,\pi)\ddbalph\pi(\bX_i;\bvthetbar)$ as well as $\zeta_i=\ddpi\xi_{T_i}(\bV_i;\bgambar,\pi)\ddbbeta\pi(\bX_i;\bvthetbar)$.

The $\Wscrtld_{2,k}^{np,\pi}$ term accounts for the nonparametric smoothing in DiPS and can be bounded:
\begin{eq}
\label{e:W2pinp}
\Wscrtld_{2,k}^{np,\pi} &\leq n^{\half}\sup_{\bx}\norm{\pihat_k(\bx;\bvthetbar)-\pi_k(\bx;\bvthetbar)}\ntotinv\sum_{i=1}^N \omegahat_{k,i}\ddpi\xi_{T_i}(\bV_i;\bgambar,\pi) \\
&= O_p(n^{\half}\atld_N)O_p(1+c_n)= O_p(n^{\half}\atld_N).
\end{eq}

Returning to \eqref{e:residcmp}, the following term accounts for the parametric estimation of $\bgam$:
\begin{eq}
\label{e:W2ga}
\Wscrtld_{2,k}^{\bgam} &= \ntotinv \sum_{i=1}^N \omegahat_{k,i}\ddbgam\xi_{T_i}(\bV_i;\bgambar,\pi) n^{\half}(\bgamhat - \bgambar)+ O_p(1+c_n)O_p(a_n)O_p(1) \nonumber \\
&= O_p(1+c_n)O_p(1) + O_p(a_n) = O_p(1+c_n+a_n),
\end{eq}
where the first equality applies Lemma \ref{l:1} taking $\zeta_i = \ddbgam\xi_{T_i}(\bV_i;\bgambar,\pi)\ddpi\xi_{T_i}(\bV_i;\bgambar,\pi)$ as well as Lemma \ref{l:0}, and the second equality follows from application of Lemma \ref{l:1} again taking $\zeta_i = \ddbgam\xi_{T_i}(\bV_i;\bgambar,\pi)$.  Finally, collecting all the terms, we find:
\begin{eq}
n^{1/2}(\muhat_k - \mubar_k) 
&= \Wscrtld_{1,k} + \Wscrtld_{2,k}^{pa,\pi} +\Wscrtld_{2,k}^{np,\pi}+ \Wscrtld_{2,k}^{\bgam} + O_p( a_n^2 +n^{-1}) + O_p(c_n) \nonumber\\
&= O_p(1+d_n) + O_p(1+c_n + \btld_N) + O_p(n^{\half}\atld_N) + O_p(1+c_n+a_n) \nonumber\\
&\qquad+ O_p( a_n^2 +n^{-1}) + O_p(c_n) \nonumber\\
&= O_p(1),
\end{eq}
where the second to last equality applies Lemma \ref{l:2} and the last equality follows when $h=O(N^{-\alpha})$ for $\frac{1-\beta}{2q}<\alpha <  \min(\frac{\beta}{2},\frac{1}{4})$ and $n = O(N^{1-\beta})$ for $\frac{1}{q+1}<\beta <1$. This shows that $\muhat_k - \mubar_k = O_p(n^{-\half})$.
\end{proof}

\begin{theorem}
\label{thm:2}
Let $\Wscrhat_k=n^{\half}(\muhat_k - \mubar_k) $ for $k=0,1$ so that $n^{\half}(\Delthat - \Deltbar)= \Wscrhat_1 - \Wscrhat_0$. Given a bandwidth of $h = O(N^{-\alpha})$ for $\frac{1-\beta}{2q}< \alpha <  \min(\frac{\beta}{4},\frac{1}{4})$ and $n = O(N^{1-\beta})$ with $\frac{1}{q+1}<\beta <1$, then $\Wscrhat_k$ has the influence function expansion of the form:
\begin{eq}
\Wscrhat_k &= n^{-1/2}\sum_{i=1}^n (\bv_{\bbeta_1,k}\trans+\bu_{pa,\pi,k}\trans)\bvphi_{\bbeta_1,i} + \bu_{\bgam,k}\trans\bvphi_{\bgam,i} + o_p(1),
\end{eq}
where $\bv_{\bbeta_1,k}=\bzero$ when $\pi_1(\bx;\balph)$ is correctly specified and $\bu_{pa,\pi,k}=\bzero$ when either $\pi_1(\bx;\balph)$ or $\xi_k(\bv;\bgam,\pi)$ without the utility covariate is correctly specified.
\end{theorem}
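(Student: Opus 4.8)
\noindent\emph{Proof strategy.} The plan is to sharpen the error accounting already carried out in the proof of Theorem~\ref{thm:1}. Starting from the decomposition $n^{1/2}(\muhat_k-\mubar_k)=\Wscrtld_{1,k}+\Wscrtld_{2,k}^{pa,\pi}+\Wscrtld_{2,k}^{np,\pi}+\Wscrtld_{2,k}^{\bgam}+O_p(a_n^2+n^{-1})+O_p(c_n)$ obtained there, I would first check that over the slightly narrower bandwidth range $\frac{1-\beta}{2q}<\alpha<\min(\frac{\beta}{4},\frac{1}{4})$ the Taylor remainders $O_p(a_n^2+n^{-1})+O_p(c_n)$, together with the purely nonparametric smoothing term $\Wscrtld_{2,k}^{np,\pi}=O_p(n^{1/2}\atld_N)$ from \eqref{e:W2pinp}, all become $o_p(1)$. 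This reduces the problem to turning $\Wscrtld_{1,k}$, $\Wscrtld_{2,k}^{pa,\pi}$, and $\Wscrtld_{2,k}^{\bgam}$ into linear functionals of the labeled sample.

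For $\Wscrtld_{2,k}^{\bgam}$ and $\Wscrtld_{2,k}^{pa,\pi}$ the treatment is the same in each case: in the expansions \eqref{e:W2ga} and \eqref{e:W2pipa} I would replace the empirical average $N^{-1}\sum_{i=1}^N\omegahat_{k,i}(\cdots)$ by its population limit $\E\{\omegabar_k(\cdots)\}$, the difference being $o_p(1)$ by Lemma~\ref{l:1}, and then substitute the i.i.d.\ expansions $n^{1/2}(\bgamhat-\bgambar)=n^{-1/2}\sum_{i=1}^n\bvphi_{\bgam,i}+o_p(1)$ and $n^{1/2}(\bbetahat_1-\bbetabar_1)=n^{-1/2}\sum_{i=1}^n\bvphi_{\bbeta_1,i}+o_p(1)$. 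The crucial point is that $\balphhat_1$ is computed from all $N$ observations, so $n^{1/2}(\balphhat_1-\balphbar_1)=O_p(\nu_n^{1/2})=o_p(1)$; hence the $\ddbalph$-directional terms, and for the same reason the kernel-smoothing step and the parametric PS estimation entirely, drop out of the $n^{1/2}$-expansion. This identifies the coefficients $\bu_{\bgam,k}\trans=\E\{\omegabar_k\,\ddbgam\xi_T(\bV;\bgambar,\pi)\}$ and $\bu_{pa,\pi,k}\trans=\E\{\omegabar_k\,\ddpi\xi_T(\bV;\bgambar,\pi)\,\ddbbeta\pi_k(\bX;\bvthetbar)\}$.

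For $\Wscrtld_{1,k}=\frac{\sqrt n}{N}\sum_{i=1}^N\omegahat_{k,i}(\Ydag_i-\mubar_k)$ I would use the three-term split from the proof of Lemma~\ref{l:2}: the oracle piece built from $\omegabar_{k,i}$ is $\nu_n^{1/2}O_p(1)=o_p(1)$, the $V$-statistic smoothing piece is $o_p(1)$ over the chosen bandwidth range, and the remaining piece---obtained by linearizing $1/\pihat_k(\bX_i;\bvthethat)$ about $1/\pi_k(\bX_i;\bvthetbar)$ just as for $\Wscrtld_{2,k}^{pa,\pi}$---again discards its $\ddbalph$-part and reduces to $n^{-1/2}\sum_{i=1}^n\bv_{\bbeta_1,k}\trans\bvphi_{\bbeta_1,i}+o_p(1)$, with $\bv_{\bbeta_1,k}\trans$ of the form $-\E\{\omegabar_k(\Ydag-\mubar_k)\,\ddbbeta\pi_k(\bX;\bvthetbar)/\pi_k(\bX;\bvthetbar)\}$. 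Summing the three contributions produces the stated influence function.

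The vanishing conditions then follow from two structural facts. If $\pi_1(\bx;\balph)$ is correctly specified, $\pi_1(\bX)$ is a deterministic function of $\balphbar_1\trans\bX$, so for every $\bbeta_1$ we have $\pi_k(\bx;\balphbar_1,\bbeta_1)=\P\{T=k\mid\balphbar_1\trans\bX=\balphbar_1\trans\bx,\bbeta_1\trans\bX=\bbeta_1\trans\bx\}=\pi_k(\bx)$; hence $\ddbbeta\pi_k(\bx;\balphbar_1,\bbetabar_1)=\bzero$, which makes both $\bv_{\bbeta_1,k}$ and $\bu_{pa,\pi,k}$ vanish. If instead the working imputation model without the utility covariate is correctly specified, the population estimating equation for $\bgam$ is already solved at $\bar\gamma_3=0$ (and the solution is unique by convexity), so $\ddpi\xi_T(\bV;\bgambar,\pi)$, being proportional to $\bar\gamma_3$, is identically zero and $\bu_{pa,\pi,k}=\bzero$. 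I expect the main obstacle to be the first step: one must verify \emph{simultaneously} that every Taylor remainder from the double linearization in $(\bvthethat,\bgamhat)$, every $V$-statistic approximation error from Lemmas~\ref{l:0}--\ref{l:2}, and the nonparametric smoothing term are $o_p(1)$ over the stated range of $(\alpha,\beta)$; this joint constraint is what forces the interval for $\alpha$ to shrink relative to Theorem~\ref{thm:1}. Everything after that is a mechanical linearization that repeatedly exploits $N\gg n$ to discard the $\balph$- and smoothing-related contributions.
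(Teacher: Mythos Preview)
Your proposal is correct and follows essentially the same route as the paper: both start from the decomposition established in Theorem~\ref{thm:1}, show that $\Wscrtld_{2,k}^{np,\pi}$ and the residual Taylor and normalization remainders are $o_p(1)$ under the stated bandwidth range, replace the empirical weighted averages in $\Wscrtld_{2,k}^{pa,\pi}$ and $\Wscrtld_{2,k}^{\bgam}$ by their population limits via Lemma~\ref{l:1} before inserting the influence-function expansions of $\bbetahat_1$ and $\bgamhat$, and handle $\Wscrtld_{1,k}$ through the three-term split of Lemma~\ref{l:2} with the $\balphhat_1$-contributions dropping out because $n^{1/2}(\balphhat_1-\balphbar_1)=O_p(\nu_n^{1/2})$. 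The only cosmetic difference is that the paper records $\bv_{\bbeta_1,k}$ in the kernel-derivative form inherited from \cite{cheng2017estimating} rather than your simplified expression, and your explicit arguments for the vanishing of $\bv_{\bbeta_1,k}$ and $\bu_{pa,\pi,k}$ are more detailed than the paper's, which merely asserts them or defers to \cite{cheng2017estimating}.
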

\begin{proof}
As in \eqref{e:nrmstdmean} and \eqref{e:meandcmp} the standardized mean for the $k$-th group can be written:
\begin{eq}
\label{e:maindcmp}
\Wscrhat_k = \Wscrtld_k + O_p(c_n) = \Wscrtld_{1,k} +\Wscrtld_{2,k} + O_p(c_n) ,
\end{eq}
where the first equality follows provided $\Wscrtld_k = O_p(1)$.  The first term can be written:
\begin{eq*}
&\Wscrtld_{1,k} = \nu_n^{\half} N^{-\half}\sum_{i=1}^N \omegahat_{k,i}(\Ydag_i-\mubar_k) \\
&= \nu_n^{\half}\Wscrtld_{1,1,k}+\nu_n^{\half}\Wscrtld_{2,1,k}^{ct} + O_p\left\{\nu_n^{\half}(h^q + N^{1/2}h^q + N^{-1/2}h^{-2} + N^{\half}\atld_N^2)\right\} \\
&+ \left\{\bv_{\bbeta_1,k}\trans+O_p\left(N^{-\half}h^{-1}+N^{-1}h^{-3}\right)\right\}n^{\half}(\bbetahat_1-\bbetabar_1)  + O_p\left\{\nu_n^{\half}(1+N^{-\half}h^{-1}+N^{-1}h^{-3})\right\},
\end{eq*}
where:
\begin{eq*}
&\Wscrtld_{1,1,k} = N^{-1/2}\sum_{i=1}^N \omegabar_{k,i}(\Ydag_i-\mubar_k) \\
&\Wscrtld_{2,1,k}^{ct} = - N^{-1/2}\sum_{i=1}^N \E(\Ydag \mid \bSbar,T=k)\left( \omegabar_{k,i}-1\right) + O_p(h^q + N^{-1/2}h^{-2}) \\
&\bv_{\bbeta_1,k} = \E\left\{ \dot{K}_h\left(\frac{\bSbar_j - \bSbar_i}{h}\right)\trans\left( 1-\omegabar_{k,i}\right)\frac{I(T_i=k)(\Ydag_i-\mubar_k)}{l_k(\bX_i;\bvthetbar)}(\bX_j^{\dagger} - \bX_i^{\dagger})\trans\right\},
\end{eq*}
with $l_k(\bx;\bvthetbar) = \pi_k(\bx;\bvthetbar)f(\bx;\bvthetbar)$ and $f(\bx;\bvthetbar)$ being the joint density of $\bSbar$, and $\bx^{\dagger}=(\bx,\bzero)$ for any vector $\bx$.  Here $\bv_{\bbeta_1,k}=O_p(1)$ in general and is $\bzero$  when $\pi(\bX;\balph)$ is correctly specified \citep{cheng2017estimating}.  As in \eqref{e:impdcmp} and \eqref{e:W2pidcmp} of Theorem \ref{thm:1}, the second term from \eqref{e:maindcmp} can be written:
\begin{eq*}
\Wscrtld_{2,k} = \Wscrtld_{2,k}^{pa,\pi} +\Wscrtld_{2,k}^{np,\pi}+ \Wscrtld_{2,k}^{\bgam} + O_p(a_n^2 + n^{-1}).
\end{eq*}

Continuing the expansion of $\Wscrtld_{2,k}^{pa,\pi}$ from \eqref{e:W2pipa}:
\begin{eq*}
\Wscrtld_{2,k}^{pa,\pi} &= N^{-1}\sum_{i=1}^N\omegahat_{k,i}\ddpi\xi_{T_i}(\bV_i;\bgambar,\pi)\ddbbeta\pi_k(\bX_i;\bvthetbar)n^{\half}(\bbetahat_1 - \bbetabar_1) + O_p\left\{ (1+c_n)(\nu_n^{\half} + \btld_N)\right\}\\
&= \E\left\{ \omegabar_{k,i}\ddpi\xi_{T_i}(\bV_i;\bgambar,\pi)\ddbbeta\pi_k(\bX_i;\bvthetbar)\right\}n^{\half}(\bbetahat_1 - \bbetabar_1) + O_p\left\{ c_n + (1+c_n)(\nu_n^{\half} + \btld_N)\right\} \\
&= n^{-\half}\sum_{i=1}^n \bu_{pa,\pi,k}\trans \bvphi_{\bbeta_1,i} + o_p(1) + O_p(c_n+\btld_N),
\end{eq*}
by repeated application of Lemma \ref{l:1}, where:
\begin{eq}
\bu_{pa,\pi,k}\trans = \E\left\{ \omegabar_{k,i}\ddpi\xi_{T_i}(\bV_i;\bgambar,\pi)\ddbbeta\pi_k(\bX_i;\bvthetbar)\right\}
\end{eq}
is a constant that is $\bzero\trans$ when either $\pi_k(\bx;\balph)$ or $\xi_k(\bv;\bgam,\pi)$ without the utility covariate is correctly specified and $\bvphi_{\bbeta_1,i}$ is the influence function for $\bbetahat$.  For $\Wscrtld_{2,k}^{np,\pi}$ from \eqref{e:W2pinp} we have that $\Wscrtld_{2,k}^{np,\pi} = O_p(n^{1/2}\atld_N)$.
For $\Wscrtld_{2,k}^{\bgam}$, continuing from \eqref{e:W2ga}:
\begin{eq*}
\Wscrtld_{2,k}^{\bgam} &= N^{-1}\sum_{i=1}^N \omegahat_{k,i} \ddbgam \xi_{T_i}(\bV_i;\bgambar,\pi)n^{1/2}(\bgamhat-\bgambar)+ O_p(a_n)\\
&= \E\left\{ \omegabar_{k,i} \ddbgam \xi_{T_i}(\bV_i;\bgambar,\pi)\right\} n^{\half}(\bgamhat - \bgambar) + O_p(c_n)O_p(1) + O_p(a_n) \\
&= n^{-1/2}\sum_{i=1}^n \bu_{\bgam,k}\trans \bvphi_{\bgam,i} + o_p(1) + O_p(c_n+a_n),
\end{eq*}
where:
\begin{eq*}
\bu_{\bgam,k}\trans = \E\left\{ \omegabar_{k,i} \ddbgam \xi_{T_i}(\bV_i;\bgambar,\pi)\right\}
\end{eq*}
is some constant and $\bvphi_i^{\bgam}$ is the influence function for $\bgamhat$.

Collecting the results from above, we find:
\begin{eq*}
\Wscrhat_k &= n^{-1/2}\sum_{i=1}^n (\bv_{\bbeta_1,k}\trans+\bu_{pa,\pi,k}\trans)\bvphi_{\bbeta_1,i} + \bu_{\bgam,k}\trans\bvphi_{\bgam,i} + O_p\left\{\nu_n^{1/2}(N^{1/2}h^q +N^{-1/2}h^{-2})\right\} \\
&\qquad+ O_p(N^{-1/2}h^{-1}+N^{-1}h^{-3} + \btld_N + n^{\half}\atld_N + c_n + a_n) \\
&= n^{-1/2}\sum_{i=1}^n (\bv_{\bbeta_1,k}\trans+\bu_{pa,\pi,k}\trans)\bvphi_{\bbeta_1,i} + \bu_{\bgam,k}\trans\bvphi_{\bgam,i} + O_p\left\{\nu_n^{1/2}N^{1/2}h^q + \frac{(logN)^{1/2}}{N^{1/2}h^2}+\nu_n^{1/2}\frac{(logN)^{1/2}}{h}\right\}.
\end{eq*}
The error terms are $o_p(1)$ when $h=O(N^{-\alpha})$ for $\frac{1-\beta}{2q}<\alpha<\min(\frac{\beta}{2},\frac{1}{4})$ and $n=O(N^{1-\beta})$ for $\frac{1}{q+1}<\beta <1$.
\end{proof}

\section*{Web Appendix C: Semiparametric Efficiency}

\begin{theorem}
\label{thm:semipar}
Let $\Mscr_{SS}=\{ f_{Y,\bZ}(y,\bz) = f_{Y\mid \bZ}(y\mid \bz)\fstr_{\bZ}(\bz): \fstr_{\bZ}(\bz)$ is a known density such that there exists a $\epsilon_{\pi} > 0 \text{ where } \pi^*_1(\bx) \in [\epsilon_{\pi},1-\epsilon_{\pi}] \text{ for all } \bx \text{ with } f_{\bX}^*(\bx) >0 \} $ be an ideal semiparametric semi-supervised model where the distribution of $\bZ = (\bV\trans,T)\trans$ is known, with $\bz = (\bv\trans,t)\trans$ and $\pi_1^*(\bx)$ and $f_{\bX}^*(\bx)$ being the implied PS and density of $\bX$ under $f_{\bZ}^*(\bz)$.  Let $\Mscr_{SS,sub}=\{ f_{Y\mid \bZ}(y,\bz;\theta)f_{\bZ}^*(\bz) : \theta \in \Theta\}$ denote a regular parametric submodel of $\Mscr_{SS}$, where $\theta$ is a finite-dimensional parameter, and the true density is at $\theta=\thetastr$.  Let $\Pscr_{SS} \subseteq \Pscr_{NP}$ be the sub-collection of all such regular parametric submodels among $\Pscr_{NP}$.
The efficient influence function for $\Deltabarstr$ in $\Mscr_{SS}$ with respect to $\Pscr_{SS}$ is:
\begin{eq}
\varphi_{eff} = \left\{ \frac{I(T=1)}{\pi_1(\bX)} - \frac{I(T=0)}{\pi_0(\bX)}\right\}\left\{ Y- \xi_T(\bV)\right\},
\end{eq}
and the semiparametric efficiency bound for $\Deltabarstr$ under $\Mscr_{SS}$ with respect to $\Pscr_{SS}$ is $E(\varphi_{eff}^2)$.  Furthermore, the efficiency bound under $\Mscr_{SS}$ is lower than or equal to the efficiency bound under the fully nonparametric model $\Mscr_{NP}$ where the distribution of $\bZ$ is unknown.  That is, $\E(\varphi_{eff}^2)\leq \E(\Psi_{eff}^2)$.
\end{theorem}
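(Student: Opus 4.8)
\emph{Proof proposal.} The plan is to apply the standard projection machinery of semiparametric efficiency theory: identify the tangent space $\Lambda_{SS}$ of $\Mscr_{SS}$ at the truth, exhibit a gradient (influence function) of $\Deltabarstr$, and project it onto $\Lambda_{SS}$ to obtain the efficient influence function, whose variance is the efficiency bound. The natural gradient to use is $\Psi_{eff}$ from Lemma \ref{l:npbnd}. Since $\Mscr_{SS}$ differs from $\Mscr_{NP}$ only in that the marginal law $\fstr_{\bZ}$ of $\bZ=(\bV\trans,T)\trans$ is held fixed, every submodel in $\Pscr_{SS}$ is also a submodel in $\Pscr_{NP}$, so the pathwise-differentiability identity $\ddtheta\Deltabarstr(\theta)|_{\thetastr}=\E\{\Psi_{eff}\,S_{Y,\bW,T,\bX}(\thetastr)\}$ from Lemma \ref{l:npbnd} continues to hold on $\Pscr_{SS}$; hence $\Psi_{eff}$ is a gradient of $\Deltabarstr$ in $\Mscr_{SS}$ with respect to $\Pscr_{SS}$, and the efficient influence function there is the projection of $\Psi_{eff}$ onto $\Lambda_{SS}$.

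To identify $\Lambda_{SS}$, note that a regular parametric submodel in $\Pscr_{SS}$ has the form $f_{Y\mid\bZ}(y\mid\bz;\theta)\fstr_{\bZ}(\bz)$ with $\fstr_{\bZ}$ free of $\theta$, so in the score decomposition of condition (2) of Lemma \ref{l:npbnd} the terms $S_{\bW\mid T,\bX}$, $S_{T\mid\bX}$, $S_{\bX}$ all vanish and only $S_{Y\mid\bZ}(\thetastr)$ survives, with $\E\{S_{Y\mid\bZ}(\thetastr)\mid\bZ\}=0$. Thus $\Lambda_{SS}=\{a(Y,\bZ)\in L_2^0:\E\{a(Y,\bZ)\mid\bZ\}=0\}$, the space of all square-integrable conditionally-mean-zero functions, and $L_2^0$ decomposes orthogonally as $\Lambda_{SS}$ plus the mean-zero functions of $\bZ$ alone; one also checks that the regularity conditions defining $\Pscr_{SS}$ do not shrink $\Lambda_{SS}$, since submodels in $\Pscr_{SS}$ can be constructed with score approximating an arbitrary such $a$.

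It remains to project. Because $\E\{\Psi_{eff}\mid\bZ\}$ is a mean-zero function of $\bZ$, the projection is $\varphi_{eff}=\Psi_{eff}-\E\{\Psi_{eff}\mid\bZ\}$. Since $\E(Y\mid\bX,T=1)-\E(Y\mid\bX,T=0)-\Deltabarstr$, the weight $\Upi=I(T=1)/\pi_1(\bX)-I(T=0)/\pi_0(\bX)$, and $\mu_T(\bX)=\E(Y\mid\bX,T)$ are all functions of $\bZ$, whereas $\E(Y\mid\bZ)=\xi_T(\bV)$, one gets $\E\{\Psi_{eff}\mid\bZ\}=\E(Y\mid\bX,T=1)-\E(Y\mid\bX,T=0)-\Deltabarstr+\Upi\{\xi_T(\bV)-\mu_T(\bX)\}$, and therefore $\varphi_{eff}=\Upi\{Y-\mu_T(\bX)\}-\Upi\{\xi_T(\bV)-\mu_T(\bX)\}=\Upi\{Y-\xi_T(\bV)\}$, which satisfies $\E\{\varphi_{eff}\mid\bZ\}=0$ as it must for membership in $\Lambda_{SS}$. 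Standard theory then identifies the efficiency bound under $\Mscr_{SS}$ with respect to $\Pscr_{SS}$ as $\E(\varphi_{eff}^2)$. For the final inequality, write $\Psi_{eff}=\varphi_{eff}+\E\{\Psi_{eff}\mid\bZ\}$; the two summands are orthogonal (the second is a mean-zero function of $\bZ$, hence orthogonal to every element of $\Lambda_{SS}$), so $\E(\Psi_{eff}^2)=\E(\varphi_{eff}^2)+\E[\E\{\Psi_{eff}\mid\bZ\}^2]\ge\E(\varphi_{eff}^2)$, and $\E(\Psi_{eff}^2)$ is the efficiency bound under $\Mscr_{NP}$ by Lemma \ref{l:npbnd}.

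The projection computation is routine, so the real work is the tangent-space analysis: verifying that $\Lambda_{SS}$ is exactly the full conditionally-mean-zero space and, relatedly, that the supremum over $\Pscr_{SS}$ defining the efficiency bound is approached in the limit by submodels satisfying conditions (1)--(5) of Lemma \ref{l:npbnd} --- i.e., that imposing those regularity conditions does not lower the bound below $\E(\varphi_{eff}^2)$. This reduces to exhibiting, for a truncation of $\varphi_{eff}$, a one-parameter submodel of $\Mscr_{SS}$ lying in $\Pscr_{SS}$ whose score approximates it, which follows from the boundedness of $\pi_1(\bX)$ away from $0$ and $1$ together with the moment conditions on $Y$.
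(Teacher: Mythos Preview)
Your proposal is correct and follows essentially the same approach as the paper: identify the tangent space $\Lambda_{SS}$ as the conditionally-mean-zero functions (the paper does this explicitly via the tilted submodel $f_{Y\mid\bZ}^*(y\mid\bz)\{1+\theta s(y,\bz)\}$, which is precisely the construction you allude to in your final paragraph), use $\Psi_{eff}$ from Lemma~\ref{l:npbnd} as a gradient since $\Pscr_{SS}\subseteq\Pscr_{NP}$, project via $\varphi_{eff}=\Psi_{eff}-\E(\Psi_{eff}\mid\bZ)$, and conclude the inequality by the Pythagorean theorem. The projection computation and the overall structure match the paper's proof step for step.
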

\begin{proof}

Let $\Lscralt_2^0$ denote the Hilbert space of mean $0$ square-integrable functions of $(Y,\bZ\trans)\trans$ at the true distribution, with inner product of $v_1,v_2 \in \Lscralt_2^0$ defined by $\langle v_1(Y,\bZ),v_2(Y,\bZ)\rangle = \E \left\{ v_1(Y,\bZ)v_2(Y,\bZ)\right\}$.  We first show that the tangent space of $\Mscr_{SS}$ with respect to $\Pscr_{SS}$ at the true distribution is the closure of $\{ s(Y,\bZ) \in \Lscralt_2^0: \E\{ s(Y,\bZ) \mid  \bZ\}=0\}$, denoted by $\Lambda_{SS}$.  Let $s(Y,\bZ)$ be any bounded element belonging to $\Lambda_{SS}$.  Consider the parametric submodel given by $\Mscr_{SS,tlt} = \{ f_{Y,\bZ}(y,\bz;\theta)=f_{Y\mid \bZ}(y\mid \bz;\theta)f_{\bz}^*(\bz):\theta \in (-\varepsilon,\varepsilon)\}$ for some sufficiently small $\varepsilon > 0$, where:
\begin{eq*}
f_{Y\mid \bZ}(y\mid \bz;\theta) = f_{Y\mid \bZ}^*(y\mid \bz)\{ 1+ \theta s(y,\bz)\}
\end{eq*}
with $f_{Y\mid \bZ}^*(y\mid \bz)$ being the true density.  The true density is thus at $\theta=\thetastr=0$.  It can be shown that $\Mscr_{SS,tlt}$ and the implied conditional and marginal submodels have proper densities, are regular, and the respective score can be written as the derivative of the log density with respect to $\theta$.  It can also be shown through calculations similar to those in analogous arguments for the derivation of Lemma \ref{l:npbnd} that $\Mscr_{SS,tlt}$ belongs in $\Pscr_{SS} \subseteq \Pscr_{NP}$.

The score for $\Mscr_{SS,tlt}$ at $\theta=\thetastr=0$ is $S_{Y,\bZ}(\thetastr) = s(Y,\bZ)$, so any bounded element in $\Lambda_{SS}$ belongs in the tangent space of $\Mscr_{SS}$ with respect to $\Pscr_{SS}$ at the true distribution.  Since the bounded elements are dense in $\Lambda_{SS}$ and the tangent space is closed, any element $r(Y,\bW,T,\bX) \in \Lambda_{SS}$ also belongs in the tangent space.  Any element of the tangent space at the true distribution also belongs in $\Lambda_{SS}$ by the regularity of the parametric submodels and properties of scores.  This verifies that the tangent space of $\Mscr_{SS}$ with respect to $\Pscr_{SS}$ at the true distribution is $\Lambda_{SS}$.

We next show that $\Psi_{eff}$ is one influence function for $\Deltabarstr$ in $\Mscr_{SS}$ at the true distribution with respect to $\Pscr_{SS}$.  Recall from Lemma \ref{l:npbnd} that $\Psi_{eff}$ is the unique influence function for $\Deltabarstr$ in $\Mscr_{NP}$ with respect to $\Pscr_{NP}$.  This means that under any regular parametric submodel $\Mscr_{NP,sub}$ belonging to $\Pscr_{NP}$, $\Psi_{eff}$ satisfies:
\begin{eq*}
\ddtheta \Deltabarstr(\theta)\Big|_{\thetastr}  = \E_{\thetastr}\{\Psi_{eff}S_{Y,\bW,T,\bX}(\thetastr)\}.
\end{eq*}
Now since $\Pscr_{SS}\subseteq \Pscr_{NP}$, pathwise differentiability of $\Deltabarstr(\theta)$ at $\theta=\thetastr$ also holds, in particular, under any regular parametric submodel in $\Pscr_{SS}$, with $\Psi_{eff}$ being one influence function. 

Finally, to obtain the efficient influence function for $\Deltabarstr$ in $\Mscr_{SS}$ with respect to $\Pscr_{SS}$ at the true distribution, we identify the orthogonal projection of  $\Psi_{eff}$ onto $\Lambda_{SS}$.  It can be verified that this projection is $\Pi(\Psi_{eff}\mid \Lambda_{SS}) = \Psi_{eff} - \E(\Psi_{eff}\mid \bZ)$.  The efficient influence function in $\Mscr_{SS}$ is thus:
\begin{eq*}
\varphi_{eff}&=\Pi(\Psi_{eff}\mid \Lambda_{SS})=\Psi_{eff}-\E(\Psi_{eff}\mid \bZ) \\
&= (\E(Y\mid \bX,T=1) + [ \frac{I(T=1)}{\pi_1(\bX)}\{Y-\E(Y\mid \bX,T=1)\}]) \\
&\qquad - (\E(Y\mid \bX,T=0) + [ \frac{I(T=0)}{\pi_0(\bX)}\{Y-\E(Y\mid \bX,T=0)\}]) - \Deltabarstr \\
&\qquad -(\E(Y\mid \bX,T=1) + [ \frac{I(T=1)}{\pi_1(\bX)}\{\E(Y\mid \bZ)-\E(Y\mid \bX,T=1)\}]) \\
&\qquad + (\E(Y\mid \bX,T=0) + [ \frac{I(T=0)}{\pi_0(\bX)}\{\E(Y\mid \bZ)-\E(Y\mid \bX,T=0)\}]) +\Deltabarstr \\
&= \{ \frac{I(T=1)}{\pi_1(\bX)} - \frac{I(T=0)}{\pi_0(\bX)}\} \{Y - \E(Y\mid \bZ)\}.
\end{eq*}
By the Pythagorean theorem, we can verify:
\begin{eq*}
\E(\Psi_{eff}^2) &= \norm{ \Psi_{eff}}_{\Lscralt_2^0}^2 = \norm{\Pi(\Psi_{eff}\mid \Lambda_{SS})}^2_{\Lscralt_2^0} +\norm{\Psi_{eff}-\Pi(\Psi_{eff}\mid \Lambda_{SS})}^2_{\Lscralt_2^0} \\
&\geq \norm{\Pi(\Psi_{eff}\mid \Lambda_{SS})}^2_{\Lscralt_2^0} \\
&=\E(\varphi_{eff}^2).
\end{eq*}
\end{proof}

\begin{corollary}
Given a bandwidth of $h = O(N^{-\alpha})$ for $\frac{1-\beta}{2q}<\alpha < \min(\frac{\beta}{4},\frac{1}{4})$ and $n = O(N^{1-\beta})$ for $\frac{1}{q+1} < \beta < 1$, when $\pi_1(\bx;\balph)$ and $\xi_k(\bv;\bgam,\pi)$ are correctly specified, then:
\begin{eq}\label{e:aipweff}
n^{1/2}(\Delthat - \Deltabarstr) = n^{-\half}\sum_{i=1}^n U_i \{Y_i - \xi_{T_i}(\bV_i)\} + o_p(1).
\end{eq}
That is, $\Delthat$ achieves the semiparametric efficiency bound in the ideal SS semiparametric model where the distribution of $\bZ$ is known.
\begin{proof}
From Theorem \ref{thm:2}, given an appropriate bandwidth and order of labels, when $\pi_1(\bx;\balph)$ is correctly specified:
\begin{eq*}
n^{\half}(\Delthat - \Deltabarstr) = n^{-\half}\sum_{i=1}^n (\bu_{\bgam,1}\trans-\bu_{\bgam,0}\trans)\bvphi_{\bgam,i}  + o_p(1), 
\end{eq*}
where:
\begin{eq*}
\bu_{\bgam,k}\trans\bvphi_{\bgam,i} &= \E\left\{ \omega_{k,i} \ddbgamT \xi_{T_i}(\bV_i;\bgam,\pi)\Big|_{\bgam=\bgambar}\right\}\left\{\ddbgamT \E\bZ_{\pi,i}\xi_{T_i}(\bV_i;\bgam,\pi)\Big|_{\bgam=\bgambar}\right\}^{-1} \bZ_{\pi,i}\left\{Y_i - \xi_{T_i}(\bV_i;\bgambar,\pi)\right\} \\
&= \E\left\{ \omega_{k,i} \bZ_{\pi,i}\trans\dot{g}_{\xi}(\bgambar\trans\bZ_{\pi,i})\right\}\E\left\{\bZ_{\pi,i}\bZ_{\pi,i}\trans \dot{g}_{\xi}(\bgambar\trans\bZ_{\pi,i})\right\}^{-1}\bZ_{\pi,i}\left\{Y_i - \xi_{T_i}(\bV_i;\bgambar,\pi)\right\}.
\end{eq*}
The first and second equalities assume the usual regularity conditions to obtain the influence function of an estimator that is the solution of an estimating equation and exchange order of differentiation and integration.  The influence function for $\Delthat$ can then be written as:
\begin{eq*}
&(\bu_{\bgam,1}\trans-\bu_{\bgam,0}\trans)\bvphi_{\bgam,i} = \E\left\{ U_{\pi,i} \bZ_{\pi,i}\trans \dot{g}(\bgambar\trans\bZ_{\pi,i})\right\}\E\left\{\bZ_{\pi,i}\bZ_{\pi,i}\trans \dot{g}(\bgambar\trans\bZ_{\pi,i})\right\}^{-1}\bZ_{\pi,i}\left\{Y_i - \xi_{T_i}(\bV_i;\bgambar,\pi)\right\}.
\end{eq*}
The terms involving $\bZ_{\pi,i}$ is a population weighted least square projection of $U_i$ onto $\bZ_i$, weighted by $\dot{g}(\bgambar\trans\bZ_{\pi,i})$.  But since $\bZ_{\pi,i}$ includes $U_{\pi,i}$, the influence function simplifies:
\begin{eq*}
(\bu_{\bgam,1}\trans-\bu_{\bgam,0}\trans)\bvphi_{\bgam,i} &= U_{\pi,i} \left\{Y_i - \xi_{T_i}(\bV_i;\bgambar,\pi)\right\} =U_{\pi,i} \left\{Y_i - \xi_{T_i}(\bV_i)\right\},
\end{eq*}
where the second equality follows when $\xi_k(\bv;\bgam,\pi)$ is correctly specified.
\end{proof}
\end{corollary}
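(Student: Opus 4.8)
The plan is to read the influence function of $\Delthat$ off Theorem \ref{thm:2} in the present special case, make its two ingredients explicit, and then exploit the fact that the utility covariate $\Upi$ sits inside the regressor vector $\bZpi$ to collapse a sandwich expression. First I would observe that correctly specifying the parametric PS model $\pi_1(\bx;\balph)$ forces $\pi_1(\bx)$ to depend on $\bx$ only through $\balphbar_1\trans\bx$, so the double-index calibration target satisfies $\pi_k(\bx;\bvthetbar)=\pi_k(\bx)$; consequently $\omegabar_{k,i}=\omega_{k,i}$, and, via the robust-imputation identity \eqref{e:robimp} together with the $g$-formula, $\Deltbar=\E(\Upi Y)=\Deltabarstr$. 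By Theorem \ref{thm:2}, a correct PS model makes $\bv_{\bbeta_1,k}=\bzero$ and $\bu_{pa,\pi,k}=\bzero$, so
\[
 n^{1/2}(\Delthat-\Deltabarstr)=n^{-1/2}\sum_{i=1}^n(\bu_{\bgam,1}-\bu_{\bgam,0})\trans\bvphi_{\bgam,i}+o_p(1).
\]

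Next I would make the two factors explicit. Under standard regularity conditions for Z-estimators, and since the ridge penalty in \eqref{e:gamest} is $o(n^{-1/2})$ and hence asymptotically negligible, $\bgamhat$ is asymptotically linear with
\[
 \bvphi_{\bgam,i}=\bigl[\E\{\bZpii\bZpii\trans\gdot_{\xi}(\bgambar\trans\bZpii)\}\bigr]^{-1}\bZpii\{Y_i-g_{\xi}(\bgambar\trans\bZpii)\},
\]
which requires differentiating under the integral to identify the limiting Jacobian $\E\{\bZpii\bZpii\trans\gdot_{\xi}(\bgambar\trans\bZpii)\}$. For the leading coefficient, the chain rule gives $\ddbgam\,\xi_T(\bV;\bgam,\pi)=\bZpi\,\gdot_{\xi}(\bgam\trans\bZpi)$, so $\bu_{\bgam,k}\trans=\E\{\omega_{k,i}\bZpii\trans\gdot_{\xi}(\bgambar\trans\bZpii)\}$ and therefore $\bu_{\bgam,1}\trans-\bu_{\bgam,0}\trans=\E\{\Upii\bZpii\trans\gdot_{\xi}(\bgambar\trans\bZpii)\}$, using $\Upii=\omega_{1,i}-\omega_{0,i}$.

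The crux is then a projection argument. The per-observation contribution is
\[
 (\bu_{\bgam,1}-\bu_{\bgam,0})\trans\bvphi_{\bgam,i}=\E\{\Upii\bZpii\trans\gdot_{\xi}\}\bigl[\E\{\bZpii\bZpii\trans\gdot_{\xi}\}\bigr]^{-1}\bZpii\{Y_i-\xi_{T_i}(\bV_i;\bgambar,\pi)\},
\]
in which $\E\{\Upii\bZpii\trans\gdot_{\xi}\}[\E\{\bZpii\bZpii\trans\gdot_{\xi}\}]^{-1}\bZpii$ is the $\gdot_{\xi}$-weighted population least-squares fit of $\Upii$ on $\bZpii$ evaluated at observation $i$. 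But $\Upi$ is literally a coordinate of $\bZpi$, i.e. $\Upii=\mathbf{e}\trans\bZpii$ for the corresponding unit vector $\mathbf{e}$; hence $\E\{\Upii\bZpii\gdot_{\xi}\}=\E\{\bZpii\bZpii\trans\gdot_{\xi}\}\mathbf{e}$, the projection coefficient equals $\mathbf{e}$ exactly, and the fitted value is exactly $\Upii$. This collapses the sandwich to $(\bu_{\bgam,1}-\bu_{\bgam,0})\trans\bvphi_{\bgam,i}=\Upii\{Y_i-\xi_{T_i}(\bV_i;\bgambar,\pi)\}$, and correct specification of $\xi_k(\bv;\bgam,\pi)$ replaces $\xi_{T_i}(\bV_i;\bgambar,\pi)$ by $\xi_{T_i}(\bV_i)$, giving the stated expansion; comparing its variance with Theorem \ref{thm:semipar} shows it equals $\E(\varphi_{eff}^2)$, the SS efficiency bound, so $\Delthat$ is locally semiparametric efficient.

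The main obstacle is bookkeeping rather than a hard estimate: I must make sure that (i) a correct parametric PS model really forces $\pi_k(\bx;\bvthetbar)=\pi_k(\bx)$, so that both $\Deltbar=\Deltabarstr$ and $\omegabar_{k,i}=\omega_{k,i}$; (ii) the ridge-regularized equation \eqref{e:gamest} still yields the un-penalized M-estimator influence function and the limiting Jacobian is invertible; and (iii) the differentiation-under-the-integral steps used to write $\bvphi_{\bgam,i}$ and $\bu_{\bgam,k}$ in closed form are justified by the regularity conditions. The projection collapse itself --- the heart of why including $\Upi$ in the imputation model delivers efficiency --- is elementary once that inclusion is noticed.
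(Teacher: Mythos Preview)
Your proposal is correct and follows essentially the same approach as the paper's proof: invoke Theorem~\ref{thm:2} to drop the $\bv_{\bbeta_1,k}$ and $\bu_{pa,\pi,k}$ terms under a correct PS model, write $\bu_{\bgam,k}\trans\bvphi_{\bgam,i}$ explicitly as a sandwich, take the difference to get a $\gdot_{\xi}$-weighted population projection of $\Upii$ onto $\bZpii$, and collapse it using the fact that $\Upi$ is itself a coordinate of $\bZpi$. Your write-up is in fact slightly more explicit than the paper's on two points the paper glosses over---the identification $\Deltbar=\Deltabarstr$ (needed to replace the centering $\Deltbar$ from Theorem~\ref{thm:2} by $\Deltabarstr$) and the unit-vector formulation of the projection collapse---but the underlying argument is the same.
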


\begin{corollary}
Let $\SSPrePost$ be denoted by $\DelthatAIPW = \muhatoneAIPW-\muhatzerAIPW$. Suppose that $\DelthatAIPW$ uses parametric 
working models $\pi(\bx;\balph)$, $\mu_k(\bx;\bbeta)$ and $\xi_k(\bv;\bgam)$ for estimating $\pi(\bx)$, $\mu_k(\bx)$, and
$\xi_k(\bv)$, respectively.  Let $\balphhat,\bbetahat,\bgamhat$ be the respective maximum likelihood estimators and $\balphbar,\bbetabar,\bgambar$
be their probability limits, regardless of the adequacy of each working model. When $\pi(\bx;\balph)$ and $\xi_k(\bv;\bgam)$ are correctly specified so that 
$\pi(\bx;\balphbar)=\pi(\bx)$ and $\xi_k(\bv;\bgambar) = \xi_k(\bv)$, then:
\begin{eq}
    n^{\half}(\DelthatAIPW - \Deltabarstr) = n^{-\half}\sum_{i=1}^n U_i\{ Y_i - \xi_{T_i}(\bV_i)\} + o_p(1).
\end{eq}
That is, $\DelthatAIPW$ also achieves the semiparametric efficiency bound in the ideal SS semiparametric model where the distribution of $\bZ$ is known.
\end{corollary}
\begin{proof}
    Let $\mubar_k^* = \E\{ \mu_k(\bX)\}$ with $\Deltabarstr=\mubarstr_1-\mubarstr_0$. The centered and scaled mean for 
    the $k$-th group can be decomposed as:
    \begin{eq*} 
        \sqrtn(\muhatkAIPW-\mubarstr_k) = \Tscr_{1,k} + \Tscr_{2,k} + \Tscr_{3,k},
    \end{eq*}
    where:
    \begin{eq*}
        &\Tscr_{1,k} = \frac{\sqrt{n}}{N}\sum_{i=1}^N \frac{L_iI(T_i=k)}{\nu_n\pi_k(\bX_i;\balphhat)}\left\{Y_i - \xi_k(\bV_i;\bgamhat)\right\} \\
        &\Tscr_{2,k} = -\frac{\sqrt{n}}{N} \sum_{i=1}^N \left\{\frac{I(T_i =k )}{\pi_k(\bX_i;\balphhat)}-1\right\}\mu_k(\bX_i;\bbetahat) \\
        &\Tscr_{3,k} = \frac{\sqrt{n}}{N}\sum_{i=1}^N \left\{\frac{I(T_i = k)}{\pi_k(\bX_i;\balphhat)}\xi_k(\bV_i;\bgamhat) - \mubarstr_k\right\}.
    \end{eq*}
    The first term can be decomposed as:
    \begin{eq*}
        \Tscr_{1,k} &= \frac{\sqrt{n}}{N}\sum_{i=1}^N \frac{L_iI(T_i=k)}{\nu_n \pi_k(\bX_i)}\left\{ Y_i - \xi_k(\bV_i)\right\}  - \frac{\sqrt{n}}{N}\sum_{i=1}^N \frac{L_iI(T_i=k)}{\nu_n \pi_k(\bX_i)}\left\{ \xi(\bV_i;\bgamhat) - \xi_k(\bV_i)\right\}  + \\
        &\qquad \frac{\sqrt{n}}{N}\sum_{i=1}^N \frac{L_iI(T_i=k)}{\nu_n} \left\{ \frac{1}{\pi_k(\bX_i;\balphhat)}- \frac{1}{\pi_k(\bX_i)}\right\}\left\{ Y_i - \xi(\bV_i;\bgamhat)\right\}  \\
        &= \frac{\sqrt{n}}{N}\sum_{i=1}^N \frac{L_iI(T_i=k)}{\nu_n \pi_k(\bX_i)}\left\{ Y_i - \xi_k(\bV_i)\right\}  - \frac{\sqrt{n}}{N}\sum_{i=1}^N \frac{L_iI(T_i=k)}{\nu_n \pi_k(\bX_i)}\left\{ \xi(\bV_i;\bgamhat) - \xi_k(\bV_i)\right\}  + O_p(\nu_n^{1/2}),
    \end{eq*}
    where:
    \begin{eq*}
        &\frac{\sqrt{n}}{N}\sum_{i=1}^N \frac{L_iI(T_i=k)}{\nu_n} \left\{ \frac{1}{\pi_k(\bX_i;\balphhat)}- \frac{1}{\pi_k(\bX_i)}\right\}\left\{ Y_i - \xi(\bV_i;\bgamhat)\right\} \\
        &\qquad =\frac{\sqrt{n}}{N}\sum_{i=1}^N \frac{L_iI(T_i=k)}{\nu_n} \frac{\partial}{\partial\balph\trans} \frac{1}{\pi_k(\bX_i;\balphbar)}\left( \balphhat - \balphbar\right)\left\{ Y_i - \xi(\bV_i;\bgambar)\right\} \\
        &\qquad\qquad + O_p\left\{n^{1/2}\left(\norm{\balphhat-\balphbar}\norm{\bgamhat-\bgambar}+\norm{\balphhat-\balphbar}^2\right)\right\} \\
        &\qquad =O_p\left\{ n^{1/2}\left(\norm{\balphhat-\balphbar}+\norm{\balphhat-\balphbar}\norm{\bgamhat-\bgambar}+\norm{\balphhat-\balphbar}^2\right)\right\}.
    \end{eq*}
    The second term is:
    \begin{eq*}
        \Tscr_{2,k} &= -\frac{\sqrt{n}}{N} \sum_{i=1}^N \left\{\frac{I(T_i =k )}{\pi_k(\bX_i)}-1\right\}\mu_k(\bX_i;\bbetabar) + \left\{\frac{I(T_i =k )}{\pi_k(\bX_i)}-1\right\}\left\{\mu_k(\bX_i;\bbetahat) - \mu_k(\bX_i;\bbetabar)\right\} \\
        &\qquad  -\frac{\sqrt{n}}{N} \sum_{i=1}^N \left\{\frac{1}{\pi_k(\bX_i;\balphhat)}-\frac{1}{\pi_k(\bX_i)}\right\}I(T_i =k )\mu_k(\bX_i;\bbetahat) \\
        &= O_p(\nu_n^{1/2}) + O_p(N^{-1/2})O_p(1) + O_p(\nu_n^{1/2}),
    \end{eq*}
    where we use that $\E\left[\left\{ \frac{I(T_i=k)}{\pi_k(\bX_i)}-1\right\}g(\bX_i)\right]=0$ for any integrable function $g$ of $\bX_i$ and:
    \begin{eq*}
        &\frac{\sqrt{n}}{N}\sum_{i=1}^N \left\{ \frac{1}{\pi_k(\bX_i;\balphhat)}- \frac{1}{\pi_k(\bX_i)}\right\}I(T_i=k)\mu_k(\bX_i;\bbetahat) \\
        &\qquad =\frac{\sqrt{n}}{N}\sum_{i=1}^N \frac{\partial}{\partial\balph\trans} \frac{1}{\pi_k(\bX_i;\balphbar)}\left( \balphhat - \balphbar\right)\mu_k(\bX_i;\bbetabar) + O_p\left\{n^{1/2}\left(\norm{\balphhat-\balphbar}\norm{\bbetahat-\bbetabar}+\norm{\balphhat-\balphbar}^2\right)\right\} \\
        &\qquad =O_p\left\{ n^{1/2}\left(\norm{\balphhat-\balphbar}+\norm{\balphhat-\balphbar}\norm{\bbetahat-\bbetabar}+\norm{\balphhat-\balphbar}^2\right)\right\}.
    \end{eq*}
    The third term is:
    \begin{eq*}
        \Tscr_{3,k} &= \frac{\sqrt{n}}{N}\sum_{i=1}^N \left\{\frac{I(T_i = k)}{\pi_k(\bX_i)}\xi_k(\bV_i) - \mubarstr_k\right\} + 
        \frac{\sqrt{n}}{N}\sum_{i=1}^N \frac{I(T_i = k)}{\pi_k(\bX_i)}\left\{\xi_k(\bV_i;\bgamhat)-\xi_k(\bV_i) \right\} \\
        &\qquad + \frac{\sqrt{n}}{N}\sum_{i=1}^N \left\{\frac{1}{\pi_k(\bX_i;\balphhat)}-\frac{1}{\pi_k(\bX_i)}\right\}I(T_i = k)\xi_k(\bV_i;\bgamhat) \\
        &= \frac{\sqrt{n}}{N}\sum_{i=1}^N \frac{I(T_i = k)}{\pi_k(\bX_i)}\left\{\xi_k(\bV_i;\bgamhat)-\xi_k(\bV_i) \right\} + O_p(\nu_n^{1/2}) + O_p(\nu_n^{1/2}),
    \end{eq*}
    where where we use that $\E\left\{ \frac{I(T_i=k)}{\pi_k(\bX_i)}\xi_k(\bV_i)\right\} = \mubarstr_k$ and:
    \begin{eq*}
        &\frac{\sqrt{n}}{N}\sum_{i=1}^N \left\{\frac{1}{\pi_k(\bX_i;\balphhat)}-\frac{1}{\pi_k(\bX_i)}\right\}I(T_i = k)\xi_k(\bV_i;\bgamhat)\\
        &\qquad =\frac{\sqrt{n}}{N}\sum_{i=1}^N \frac{\partial}{\partial\balph\trans}\frac{1}{\pi_k(\bX_i;\balphbar)}\left(\balphhat-\balphbar\right)I(T_i = k)\xi_k(\bV_i) + O_p\left\{ n^{1/2}\left(\norm{\balphhat-\balphbar}\norm{\bgamhat-\bgambar}+\norm{\balphhat-\balphbar}^2\right)\right\} \\
        &\qquad = O_p\left\{n^{1/2}\left(\norm{\balphhat-\balphbar} +\norm{\balphhat-\balphbar}\norm{\bbetahat-\bbetabar} +\norm{\balphhat-\balphbar}^2\right)\right\}.
    \end{eq*}
    Combining the terms, we obtain:
    \begin{eq*}
        \sqrtn(\muhatkAIPW-\mubarstr_k) &= \frac{\sqrt{n}}{N}\sum_{i=1}^N \frac{L_iI(T_i=k)}{\nu_n \pi_k(\bX_i)}\left\{ Y_i - \xi_k(\bV_i)\right\}  \\
        &\qquad - \frac{\sqrt{n}}{N}\sum_{i=1}^N \frac{I(T_i=k)}{\pi_k(\bX_i)} \left( \frac{L_i}{\nu_n}-1\right) \left\{ \xi(\bV_i;\bgamhat) - \xi_k(\bV_i)\right\}  + O_p(\nu_n^{1/2}).
    \end{eq*}
    The second term can be further expanded as:
    \begin{eq*}
        &- \frac{\sqrt{n}}{N}\sum_{i=1}^N \frac{I(T_i=k)}{\pi_k(\bX_i)} \left( \frac{L_i}{\nu_n}-1\right) \left\{ \xi(\bV_i;\bgamhat) - \xi_k(\bV_i)\right\} \\
        &= - \frac{\sqrt{n}}{N}\sum_{i=1}^N \frac{I(T_i=k)}{\pi_k(\bX_i)} \left( \frac{L_i}{\nu_n}-1\right) \frac{\partial}{\partial\bgam\trans}\xi(\bV_i;\bgambar) (\bgamhat-\bgambar) + O_p(\norm{\bgamhat-\bgambar}^2).
    \end{eq*}
    We now apply a weak law of large numbers for triangular arrays (Theorem 2.2.6 of \cite{durrett2019probability}) to control the average in the first term.
    Let $\bgam=(\gamma_1,\ldots,\gamma_q)\trans$. For $j=1,2,\ldots,q$, the mean of the sum in the first term satisfies:
    \begin{eq*}
        \E\left\{ \sum_{i=1}^N \frac{I(T_i=k)}{\pi_k(\bX_i)}\left(\frac{L_i}{\nu_n}-1\right)\frac{\partial}{\partial\gamma_j}\xi(\bV_i;\bgambar)\right\} 
        &= \sum_{i=1}^N \E\left(\frac{L_i}{\nu_n}-1\right)\E\left\{ \frac{I(T_i=k)}{\pi_k(\bX_i)}\frac{\partial}{\partial\gamma_j}\xi(\bV_i;\bgambar)\right\} \\
        &= 0,
    \end{eq*}
    where the first equality follows from $L \indep \bZ$, and the variance satisfies:
    \begin{eq*}
        &N^{-2}Var\left\{ \sum_{i=1}^N \frac{I(T_i=k)}{\pi_k(\bX_i)}\left(\frac{L_i}{\nu_n}-1\right)\frac{\partial}{\partial\gamma_j}\xi(\bV_i;\bgambar)\right\} = N^{-1}\E\left\{ \frac{I(T_i=k)}{\pi_k(\bX_i)}\left(\frac{L_i}{\nu_n}-1\right)\frac{\partial}{\partial\gamma_j}\xi(\bV_i;\bgambar)\right\}^2 \\
        &\qquad\qquad= N^{-1}\left(\frac{1}{\nu_n}-1\right) \E\left\{ \frac{I(T_i=k)}{\pi_k(\bX_i)}\frac{\partial}{\partial\gamma_j}\xi(\bV_i;\bgambar)\right\}^2 \\
        &\qquad\qquad \to 0.
    \end{eq*}
By the weak law of large numbers for triangular arrays we then obtain:
    \begin{eq*}
        N^{-1}\sum_{i=1}^N \frac{I(T_i = k)}{\pi_k(\bX_i)}\left( \frac{L_i}{\nu_n}-1\right) \frac{\partial}{\partial\gamma_j\trans}\xi(\bV_i;\bgambar) = o_p(1),
    \end{eq*}
    for $j=1,2,\ldots,q$.  Collecting and simplifying the results from above, we have:
\begin{eq*}
    \sqrtn(\muhatkAIPW-\mubarstr_k) &= \frac{1}{\sqrt{n}}\sum_{i=1}^n \frac{I(T_i=k)}{ \pi_k(\bX_i)}\left\{ Y_i - \xi_k(\bV_i)\right\} + O_p(\nu_n^{1/2} + n^{-1}).
\end{eq*}
\end{proof}

\section*{Web Appendix D: Sensitivity to Tuning Parameter in Ridge Regression}
To better understand the impact of the Ridge tuning parameter in finite samples, we repeated simulations from the main 
text using a range of tuning parameters based on undersmoothing the cross-validated tuning parameter and also based on
explicitly defined tuning parameters that satisfy the $\lambda_n = o(n^{-1/2})$ condition.  

The simulations were run in the correctly specified scenario from the main text with medium strength surrogates and sample 
sizes of $n=100,500$ and $N=1000$.  The differences in root mean square error (RMSE) is minimal for different choices of 
the tuning parameter, especially for in samples with larger $n$.  The results suggest that the performance of the final 
estimators are not especially sensitive to the tuning parameter choice.\\

\bgroup
\def\arraystretch{.575}%
    \begin{tabular}{lcccc}
      \hline
      & \textbf{$n$} & \textbf{Bias} & \textbf{SE} & \textbf{RMSE} \\ 
     \hline
        $\lambda_{cv}$ & 100 & -0.015 & 0.057 & 0.059 \\ 
        $\lambda_{cv}/n^{1/10}$ & 100 & -0.012 & 0.060 & 0.061 \\ 
        $\lambda_{cv}/n^{1/2}$ & 100 & -0.010 & 0.061 & 0.062 \\ 
        $\lambda_{cv}/n^{3/4}$ & 100 & -0.010 & 0.061 & 0.062 \\ 
        $\lambda = log(p)/n^{3/5}$ & 100 & -0.016 & 0.056 & 0.058 \\ 
        $\lambda = log(p)/n^{3/4}$ & 100 & -0.010 & 0.061 & 0.062 \\ 
       \hline
       $\lambda_{cv}$ & 500 & -0.003 & 0.034 & 0.034 \\ 
       $\lambda_{cv}/n^{1/10}$ & 500 & -0.003 & 0.034 & 0.034 \\ 
       $\lambda_{cv}/n^{1/2}$ & 500 & -0.003 & 0.034 & 0.034 \\ 
       $\lambda_{cv}/n^{3/4}$ & 500 & -0.003 & 0.034 & 0.034 \\ 
         $\lambda = log(p)/n^{3/5}$ & 500 & -0.009 & 0.032 & 0.033 \\ 
         $\lambda = log(p)/n^{3/4}$ & 500 & -0.003 & 0.034 & 0.034 \\ 
          \hline
    \end{tabular}
\egroup

\end{document}